\title{Type-based Qubit Allocation for a First-Order Quantum Programming Language}
\author{Ryo Wakizaka$^1$, Atsushi Igarashi$^2$}
\begin{document}
\maketitle
\begin{abstract}
  Qubit allocation is a process to assign physical qubits to logical
  qubits in a quantum program.  Since some quantum computers have
  \emph{connectivity constraints} on applications of two-qubit
  operations, it is mainly concerned with finding an assignment and
  inserting instructions to satisfy the connectivity constraints.
  Many methods have been proposed for the qubit allocation problem for low-level quantum programs.

  This paper presents a type-based framework of qubit allocation
  for a quantum programming language with first-order functions.  In our
  framework, the connectivity constraints are expressed by a simple
  graph of qubits called a coupling graph.  We formalize (1) the source
  language, whose type system verifies that the number of qubits
  required for a given program to run does not exceed the number of
  nodes of the coupling graph, (2) the target language, whose qualified
  type system verifies that a well-typed program satisfies the
  connectivity constraints, and (3) an algorithm to translate a source
  program into a target program.  We prove that both languages are
  type-safe and that the translation algorithm is type preserving.
\end{abstract}

\NewDocumentCommand\STJudgeExp{}{\Theta  \pipe  N  \pipe  \Gamma  \vdash  e  \languagessym{:}  T}

\NewDocumentCommand\STReturn{}{
  \infrule[T-Return]{
  }{
    \Theta  \pipe  N  \pipe  \languagesmv{x_{{\mathrm{1}}}}  \languagessym{:}  \tau_{{\mathrm{1}}}  \languagessym{,} \, .. \, \languagessym{,}  \languagesmv{x_{\languagesmv{n}}}  \languagessym{:}  \tau_{\languagesmv{n}}  \vdash  \languagessym{(}  \languagesmv{x_{{\mathrm{1}}}}  \languagessym{,} \, .. \, \languagessym{,}  \languagesmv{x_{\languagesmv{n}}}  \languagessym{)}  \languagessym{:}  \tau_{{\mathrm{1}}}  \languagessym{*} \, .. \, \languagessym{*}  \tau_{\languagesmv{n}}
  }
}

\NewDocumentCommand\STInit{}{
  \infrule[T-Init]{
    \languagesmv{x} \not\in \dom(\Gamma)
    \andalso N \geq 1
    \andalso \Theta  \pipe  N  \languagessym{-}  \languagessym{1}  \pipe  \Gamma  \languagessym{,}  \languagesmv{x}  \languagessym{:}  \languageskw{qbit}  \vdash  e  \languagessym{:}  T
  }{
    \Theta  \pipe  N  \pipe  \Gamma  \vdash  \languageskw{let} \, \languagesmv{x}  \languagessym{=} \, \languageskw{init} \, \languagessym{()} \, \languageskw{in} \, e  \languagessym{:}  T
  }
}

\NewDocumentCommand\STDiscard{}{
  \infrule[T-Discard]{
    \Theta  \pipe  N  \languagessym{+}  \languagessym{1}  \pipe  \Gamma  \vdash  e  \languagessym{:}  T
  }{
    \Theta  \pipe  N  \pipe  \Gamma  \languagessym{,}  \languagesmv{x}  \languagessym{:}  \languageskw{qbit}  \vdash  \languageskw{discard} \, \languagesmv{x}  \languagessym{;}  e  \languagessym{:}  T
  }
}

\NewDocumentCommand\STLet{}{
  \infrule[T-Let]{
    \dom(\Gamma_{{\mathrm{1}}}) \cap \dom(\Gamma_{{\mathrm{2}}}) = \emptyset
    \andalso    N' = N + |\dom(\Gamma_{{\mathrm{1}}})| - n
    \andalso \languagesmv{x_{{\mathrm{1}}}}  \languagessym{,} \, .. \, \languagessym{,}  \languagesmv{x_{\languagesmv{n}}} \not\in \dom(\Gamma_{{\mathrm{2}}}) \\
    \Theta  \pipe  N  \pipe  \Gamma_{{\mathrm{1}}}  \vdash  e_{{\mathrm{1}}}  \languagessym{:}  \tau_{{\mathrm{1}}}  \languagessym{*} \, .. \, \languagessym{*}  \tau_{\languagesmv{n}}
    \andalso \Theta  \pipe  N'  \pipe  \Gamma_{{\mathrm{2}}}  \languagessym{,}  \languagesmv{x_{{\mathrm{1}}}}  \languagessym{:}  \tau_{{\mathrm{1}}}  \languagessym{,} \, .. \, \languagessym{,}  \languagesmv{x_{\languagesmv{n}}}  \languagessym{:}  \tau_{\languagesmv{n}}  \vdash  e_{{\mathrm{2}}}  \languagessym{:}  T
  }{
    \Theta  \pipe  N  \pipe  \Gamma_{{\mathrm{1}}}  \languagessym{,}  \Gamma_{{\mathrm{2}}}  \vdash  \languageskw{let} \, \languagessym{(}  \languagesmv{x_{{\mathrm{1}}}}  \languagessym{,} \, .. \, \languagessym{,}  \languagesmv{x_{\languagesmv{n}}}  \languagessym{)}  \languagessym{=}  e_{{\mathrm{1}}} \, \languageskw{in} \, e_{{\mathrm{2}}}  \languagessym{:}  T
  }
}

\NewDocumentCommand\STCnot{}{
  \infrule[T-Cnot]{
    \languagesmv{x_{{\mathrm{1}}}}  \languagessym{,}  \languagesmv{x_{{\mathrm{2}}}} \not\in \dom(\Gamma)
    \andalso \Theta  \pipe  N  \pipe  \Gamma  \languagessym{,}  \languagesmv{x_{{\mathrm{1}}}}  \languagessym{:}  \languageskw{qbit}  \languagessym{,}  \languagesmv{x_{{\mathrm{2}}}}  \languagessym{:}  \languageskw{qbit}  \vdash  e  \languagessym{:}  T
  }{
    \Theta  \pipe  N  \pipe  \Gamma  \languagessym{,}  \languagesmv{y_{{\mathrm{1}}}}  \languagessym{:}  \languageskw{qbit}  \languagessym{,}  \languagesmv{y_{{\mathrm{2}}}}  \languagessym{:}  \languageskw{qbit}  \vdash  \languageskw{let} \, \languagessym{(}  \languagesmv{x_{{\mathrm{1}}}}  \languagessym{,}  \languagesmv{x_{{\mathrm{2}}}}  \languagessym{)}  \languagessym{=} \, \languageskw{cnot} \, \languagessym{(}  \languagesmv{y_{{\mathrm{1}}}}  \languagessym{,}  \languagesmv{y_{{\mathrm{2}}}}  \languagessym{)} \, \languageskw{in} \, e  \languagessym{:}  T
  }
}

\NewDocumentCommand\STIf{}{
  \infrule[T-If]{
    \Gamma  \languagessym{(}  \languagesmv{x}  \languagessym{)} = \languageskw{qbit}
    \andalso \Theta  \pipe  N  \pipe  \Gamma  \vdash  e_{{\mathrm{1}}}  \languagessym{:}  T
    \andalso \Theta  \pipe  N  \pipe  \Gamma  \vdash  e_{{\mathrm{2}}}  \languagessym{:}  T
  }{
    \Theta  \pipe  N  \pipe  \Gamma  \vdash  \languageskw{if} \, \languagesmv{x} \, \languageskw{then} \, e_{{\mathrm{1}}} \, \languageskw{else} \, e_{{\mathrm{2}}}  \languagessym{:}  T
  }
}

\NewDocumentCommand\STCall{}{
  \infrule[T-Call]{
    \Theta  \languagessym{(}  \languagesmv{f}  \languagessym{)} =  \tau'_{{\mathrm{1}}}  \languagessym{*} \, .. \, \languagessym{*}  \tau'_{\languagesmv{m}}  \xrightarrow{ N' }  \tau''_{{\mathrm{1}}}  \languagessym{*} \, .. \, \languagessym{*}  \tau''_{\languagesmv{n}}  \\
    \languagesmv{x_{{\mathrm{1}}}}  \languagessym{,} \, .. \, \languagessym{,}  \languagesmv{x_{\languagesmv{n}}} \not\in \dom(\Gamma)
    \andalso N \geq N'
    \andalso N'' = N - n + m \\
    \Theta  \pipe  N''  \pipe  \Gamma  \languagessym{,}  \languagesmv{x_{{\mathrm{1}}}}  \languagessym{:}  \tau''_{{\mathrm{1}}}  \languagessym{,} \, .. \, \languagessym{,}  \languagesmv{x_{\languagesmv{n}}}  \languagessym{:}  \tau''_{\languagesmv{n}}  \vdash  e  \languagessym{:}  T
  }{
    \Theta  \pipe  N  \pipe  \Gamma  \languagessym{,}  \languagesmv{y_{{\mathrm{1}}}}  \languagessym{:}  \tau_{{\mathrm{1}}}  \languagessym{,} \, .. \, \languagessym{,}  \languagesmv{y_{\languagesmv{m}}}  \languagessym{:}  \tau_{\languagesmv{m}}  \vdash  \languageskw{let} \, \languagessym{(}  \languagesmv{x_{{\mathrm{1}}}}  \languagessym{,} \, .. \, \languagessym{,}  \languagesmv{x_{\languagesmv{n}}}  \languagessym{)}  \languagessym{=}  \languagesmv{f}  \languagessym{(}  \languagesmv{y_{{\mathrm{1}}}}  \languagessym{,} \, .. \, \languagessym{,}  \languagesmv{y_{\languagesmv{m}}}  \languagessym{)} \, \languageskw{in} \, e  \languagessym{:}  T
  }
}

\NewDocumentCommand\STJudgeFunDef{}{\Theta  \vdash  D}

\NewDocumentCommand\STFunDef{}{
  \infrule[T-FunDef]{
    \Theta  \languagessym{,}  \languagesmv{f}  \languagessym{:}   \tau_{{\mathrm{1}}}  \languagessym{*} \, .. \, \languagessym{*}  \tau_{\languagesmv{m}}  \xrightarrow{ N }  T   \pipe  N  \pipe  \languagesmv{x_{{\mathrm{1}}}}  \languagessym{:}  \tau_{{\mathrm{1}}}  \languagessym{,} \, .. \, \languagessym{,}  \languagesmv{x_{\languagesmv{m}}}  \languagessym{:}  \tau_{\languagesmv{m}}  \vdash  e  \languagessym{:}  T \\
  }{
    \Theta  \languagessym{,}  \languagesmv{f}  \languagessym{:}   \tau_{{\mathrm{1}}}  \languagessym{*} \, .. \, \languagessym{*}  \tau_{\languagesmv{m}}  \xrightarrow{ N }  T   \vdash   \languagesmv{f}  \mapsto ( \languagesmv{x_{{\mathrm{1}}}}  \languagessym{,} \, .. \, \languagessym{,}  \languagesmv{x_{\languagesmv{m}}} )  e 
  }
}

\NewDocumentCommand\SJudgeProg{}{N  \vdash   \braket{  D  ,  e  } }

\NewDocumentCommand\STProg{}{
  \infrule[T-Prog]{
    \andalso \Theta  \vdash  D
    \andalso \Theta  \pipe  N  \pipe   \emptyset   \vdash  e  \languagessym{:}  T
  }{
    N  \vdash   \braket{  D  ,  e  } 
  }
}


\NewDocumentCommand\SETrans{}{\qclosure{X, \rho, e}  \rightarrow _{ D }  \qclosure{X', \rho', e'}}
                           
\NewDocumentCommand\SELetI{}{
  \infax[E-Let1]{
    \Braket{X, \rho, \languageskw{let} \, \languagessym{(}  \languagesmv{x_{{\mathrm{1}}}}  \languagessym{,} \, .. \, \languagessym{,}  \languagesmv{x_{\languagesmv{n}}}  \languagessym{)}  \languagessym{=}  \languagessym{(}  \languagesmv{y_{{\mathrm{1}}}}  \languagessym{,} \, .. \, \languagessym{,}  \languagesmv{y_{\languagesmv{n}}}  \languagessym{)} \, \languageskw{in} \, e}  \rightarrow _{ D }  \Braket{X, \rho, \languagessym{[}  \languagesmv{y_{{\mathrm{1}}}}  \slash  \languagesmv{x_{{\mathrm{1}}}}  \languagessym{,} \, .. \, \languagessym{,}  \languagesmv{y_{\languagesmv{n}}}  \slash  \languagesmv{x_{\languagesmv{n}}}  \languagessym{]} \, e}
  }
}

\NewDocumentCommand\SELetII{}{
  \infrule[E-Let2]{
    \Braket{X, \rho, e_{{\mathrm{1}}}}  \rightarrow _{ D }  \Braket{X', \rho', e'_{{\mathrm{1}}}}
  }{
    \Braket{X, \rho, \languageskw{let} \, \languagessym{(}  \languagesmv{x_{{\mathrm{1}}}}  \languagessym{,} \, .. \, \languagessym{,}  \languagesmv{x_{\languagesmv{n}}}  \languagessym{)}  \languagessym{=}  e_{{\mathrm{1}}} \, \languageskw{in} \, e_{{\mathrm{2}}}}  \rightarrow _{ D }  \Braket{X', \rho', \languageskw{let} \, \languagessym{(}  \languagesmv{x_{{\mathrm{1}}}}  \languagessym{,} \, .. \, \languagessym{,}  \languagesmv{x_{\languagesmv{n}}}  \languagessym{)}  \languagessym{=}  e'_{{\mathrm{1}}} \, \languageskw{in} \, e_{{\mathrm{2}}}}
  }
}

\NewDocumentCommand\SEInit{}{
  \infrule[E-Init]{
    \languagesmv{x'} \in X
  }{
    \Braket{X, \rho, \languageskw{let} \, \languagesmv{x}  \languagessym{=} \, \languageskw{init} \, \languagessym{()} \, \languageskw{in} \, e}
            \rightarrow _{ D }  \braket{X \setminus \{\languagesmv{x'}\}, \rho, \languagessym{[}  \languagesmv{x'}  \slash  \languagesmv{x}  \languagessym{]} \, e}
  }
}

\NewDocumentCommand\SEDiscard{}{
  \infax[E-Discard]{
    \braket{X, \rho, \languageskw{discard} \, \languagesmv{x}  \languagessym{;}  e}  \rightarrow _{ D }  \braket{X \uplus \{x\}, \sum_{b \in \{0, 1\}} \ketbra{0}{b}{x'}\rho\ketbra{b}{0}{x'}, e}
  }
}

\NewDocumentCommand\SECnot{}{
  \infrule[E-Cnot]{
    U = \mathit{CNOT}_{y_1, y_2}
  }{
    \braket{X, \rho, \languageskw{let} \, \languagessym{(}  \languagesmv{x_{{\mathrm{1}}}}  \languagessym{,}  \languagesmv{x_{{\mathrm{2}}}}  \languagessym{)}  \languagessym{=} \, \languageskw{cnot} \, \languagessym{(}  \languagesmv{y_{{\mathrm{1}}}}  \languagessym{,}  \languagesmv{y_{{\mathrm{2}}}}  \languagessym{)} \, \languageskw{in} \, e}
            \rightarrow _{ D }  \braket{X, U\rho U^\dagger, \languagessym{[}  \languagesmv{y_{{\mathrm{1}}}}  \slash  \languagesmv{x_{{\mathrm{1}}}}  \languagessym{,}  \languagesmv{y_{{\mathrm{2}}}}  \slash  \languagesmv{x_{{\mathrm{2}}}}  \languagessym{]} \, e}
  }
}

\NewDocumentCommand\SECall{}{
  \infrule[E-Call]{
     \languagesmv{f}  \mapsto ( \languagesmv{y'_{{\mathrm{1}}}}  \languagessym{,} \, .. \, \languagessym{,}  \languagesmv{y'_{\languagesmv{m}}} )  e'  \in D
  }{
    \braket{X, \rho, \languageskw{let} \, \languagessym{(}  \languagesmv{x_{{\mathrm{1}}}}  \languagessym{,} \, .. \, \languagessym{,}  \languagesmv{x_{\languagesmv{n}}}  \languagessym{)}  \languagessym{=}  \languagesmv{f}  \languagessym{(}  \languagesmv{y_{{\mathrm{1}}}}  \languagessym{,} \, .. \, \languagessym{,}  \languagesmv{y_{\languagesmv{m}}}  \languagessym{)} \, \languageskw{in} \, e} \\
            \rightarrow _{ D }  \braket{X, \rho, \languageskw{let} \, \languagessym{(}  \languagesmv{x_{{\mathrm{1}}}}  \languagessym{,} \, .. \, \languagessym{,}  \languagesmv{x_{\languagesmv{n}}}  \languagessym{)}  \languagessym{=}  \languagessym{[}  \languagesmv{y_{{\mathrm{1}}}}  \slash  \languagesmv{y'_{{\mathrm{1}}}}  \languagessym{,} \, .. \, \languagessym{,}  \languagesmv{y_{\languagesmv{m}}}  \slash  \languagesmv{y'_{\languagesmv{m}}}  \languagessym{]} \, e' \, \languageskw{in} \, e}
  }
}

\NewDocumentCommand\SEIfTrue{}{
  \infrule[E-IfTrue]{
    M_1 = \frac{I - Z_x}{2}
  }{
    \braket{X, \rho, \languageskw{if} \, \languagesmv{x} \, \languageskw{then} \, e_{{\mathrm{1}}} \, \languageskw{else} \, e_{{\mathrm{2}}}}  \rightarrow _{ D }  \braket{X, M_1\rho M_1^\dagger, e_{{\mathrm{1}}}}
  }
}

\NewDocumentCommand\SEIfFalse{}{
  \infrule[E-IfFalse]{
    M_0 = \frac{I + Z_x}{2}
  }{
    \braket{X, \rho, \languageskw{if} \, \languagesmv{x} \, \languageskw{then} \, e_{{\mathrm{1}}} \, \languageskw{else} \, e_{{\mathrm{2}}}}  \rightarrow _{ D }  \braket{X, M_0\rho M_0^\dagger, e_{{\mathrm{2}}}}
  }
}


\NewDocumentCommand\TTJudgeExp{}{\Theta  \pipe  \Phi  \pipe  \Gamma  \vdash  e  \languagessym{:}  T}

\NewDocumentCommand\TTReturn{}{
  \infrule[T-Return]{
  }{
    \Theta  \pipe  \Phi  \pipe  \languagesmv{x_{{\mathrm{1}}}}  \languagessym{:}  \tau_{{\mathrm{1}}}  \languagessym{,} \, .. \, \languagessym{,}  \languagesmv{x_{\languagesmv{n}}}  \languagessym{:}  \tau_{\languagesmv{n}}  \vdash  \languagessym{(}  \languagesmv{x_{{\mathrm{1}}}}  \languagessym{,} \, .. \, \languagessym{,}  \languagesmv{x_{\languagesmv{n}}}  \languagessym{)}  \languagessym{:}  \tau_{{\mathrm{1}}}  \languagessym{*} \, .. \, \languagessym{*}  \tau_{\languagesmv{n}}
  }
}

\NewDocumentCommand\TTInit{}{
  \infrule[T-Init]{
    \Gamma  \languagessym{(}  \languagesmv{x}  \languagessym{)} =  \texttt{q}( \alpha ) 
    \andalso \Theta  \pipe  \Phi  \pipe  \Gamma  \vdash  e  \languagessym{:}  T
  }{
    \Theta  \pipe  \Phi  \pipe  \Gamma  \vdash  \languageskw{init} \, \languagesmv{x}  \languagessym{;}  e  \languagessym{:}  T
  }
}

\NewDocumentCommand\TTSwap{}{
  \infrule[T-Swap]{
     \alpha_{{\mathrm{1}}}  \sim  \alpha_{{\mathrm{2}}}  \in \Phi
    \andalso \Theta  \pipe  \Phi  \pipe  \Gamma  \languagessym{,}  \languagesmv{x_{{\mathrm{1}}}}  \languagessym{:}   \texttt{q}( \alpha_{{\mathrm{1}}} )   \languagessym{,}  \languagesmv{x_{{\mathrm{2}}}}  \languagessym{:}   \texttt{q}( \alpha_{{\mathrm{2}}} )   \vdash  e  \languagessym{:}  T
  }{
    \Theta  \pipe  \Phi  \pipe  \Gamma  \languagessym{,}  \languagesmv{y_{{\mathrm{1}}}}  \languagessym{:}   \texttt{q}( \alpha_{{\mathrm{1}}} )   \languagessym{,}  \languagesmv{y_{{\mathrm{2}}}}  \languagessym{:}   \texttt{q}( \alpha_{{\mathrm{2}}} )   \vdash  \languageskw{let} \, \languagessym{(}  \languagesmv{x_{{\mathrm{1}}}}  \languagessym{,}  \languagesmv{x_{{\mathrm{2}}}}  \languagessym{)}  \languagessym{=} \, \languageskw{swap} \, \languagessym{(}  \languagesmv{y_{{\mathrm{1}}}}  \languagessym{,}  \languagesmv{y_{{\mathrm{2}}}}  \languagessym{)} \, \languageskw{in} \, e  \languagessym{:}  T
  }
}

\NewDocumentCommand\TTCnot{}{
\infrule[T-Cnot]{
     \alpha_{{\mathrm{1}}}  \sim  \alpha_{{\mathrm{2}}}  \in \Phi
    \andalso \Theta  \pipe  \Phi  \pipe  \Gamma  \languagessym{,}  \languagesmv{x_{{\mathrm{1}}}}  \languagessym{:}   \texttt{q}( \alpha_{{\mathrm{1}}} )   \languagessym{,}  \languagesmv{x_{{\mathrm{2}}}}  \languagessym{:}   \texttt{q}( \alpha_{{\mathrm{2}}} )   \vdash  e  \languagessym{:}  T
  }{
    \Theta  \pipe  \Phi  \pipe  \Gamma  \languagessym{,}  \languagesmv{y_{{\mathrm{1}}}}  \languagessym{:}   \texttt{q}( \alpha_{{\mathrm{1}}} )   \languagessym{,}  \languagesmv{y_{{\mathrm{2}}}}  \languagessym{:}   \texttt{q}( \alpha_{{\mathrm{2}}} )   \vdash  \languageskw{let} \, \languagessym{(}  \languagesmv{x_{{\mathrm{1}}}}  \languagessym{,}  \languagesmv{x_{{\mathrm{2}}}}  \languagessym{)}  \languagessym{=} \, \languageskw{cnot} \, \languagessym{(}  \languagesmv{y_{{\mathrm{1}}}}  \languagessym{,}  \languagesmv{y_{{\mathrm{2}}}}  \languagessym{)} \, \languageskw{in} \, e  \languagessym{:}  T
  }
}

\NewDocumentCommand\TTIf{}{
  \infrule[T-If]{
    \languagesmv{x} \in \dom(\Gamma)
    \andalso \Theta  \pipe  \Phi  \pipe  \Gamma  \vdash  e_{{\mathrm{1}}}  \languagessym{:}  T
    \andalso \Theta  \pipe  \Phi  \pipe  \Gamma  \vdash  e_{{\mathrm{2}}}  \languagessym{:}  T
  }{
    \Theta  \pipe  \Phi  \pipe  \Gamma  \vdash  \languageskw{if} \, \languagesmv{x} \, \languageskw{then} \, e_{{\mathrm{1}}} \, \languageskw{else} \, e_{{\mathrm{2}}}  \languagessym{:}  T
  }
}

\NewDocumentCommand\TTLet{}{
  \infrule[T-Let]{
    \Theta  \pipe  \Phi  \pipe  \Gamma_{{\mathrm{1}}}  \vdash  e_{{\mathrm{1}}}  \languagessym{:}  \tau_{{\mathrm{1}}}  \languagessym{*} \, .. \, \languagessym{*}  \tau_{\languagesmv{m}} \\
    \Theta  \pipe  \Phi  \pipe  \Gamma_{{\mathrm{2}}}  \languagessym{,}  \languagesmv{x_{{\mathrm{1}}}}  \languagessym{:}  \tau_{{\mathrm{1}}}  \languagessym{,} \, .. \, \languagessym{,}  \languagesmv{x_{\languagesmv{m}}}  \languagessym{:}  \tau_{\languagesmv{m}}  \vdash  e_{{\mathrm{2}}}  \languagessym{:}  T
  }{
    \Theta  \pipe  \Phi  \pipe  \Gamma_{{\mathrm{1}}}  \languagessym{,}  \Gamma_{{\mathrm{2}}}  \vdash  \languageskw{let} \, \languagessym{(}  \languagesmv{x_{{\mathrm{1}}}}  \languagessym{,} \, .. \, \languagessym{,}  \languagesmv{x_{\languagesmv{m}}}  \languagessym{)}  \languagessym{=}  e_{{\mathrm{1}}} \, \languageskw{in} \, e_{{\mathrm{2}}}  \languagessym{:}  T
  }
}

\NewDocumentCommand\TTCall{}{
  \infrule[T-Call]{
    \Theta  \languagessym{(}  \languagesmv{f}  \languagessym{)} =  \forall   \overline{ \alpha }   .   \Phi'  \Rightarrow  \tau'_{{\mathrm{1}}}  \languagessym{*} \, .. \, \languagessym{*}  \tau'_{\languagesmv{n}}  \rightarrow  \tau''_{{\mathrm{1}}}  \languagessym{*} \, .. \, \languagessym{*}  \tau''_{\languagesmv{m}}   \\
     \sigma _{ \alpha }  = \languagessym{[}   \overline{ \alpha' } / \overline{ \alpha }   \languagessym{]}
    \andalso  \sigma _{ \alpha } \Phi' \subseteq \Phi
    \andalso \forall i \in \{1, \dots, n\}.  \sigma _{ \alpha }  \tau'_{\languagesmv{i}} = \tau_{\languagesmv{i}} \\
    \Theta  \pipe  \Phi  \pipe  \Gamma  \languagessym{,}  \languagesmv{x_{{\mathrm{1}}}}  \languagessym{:}   \sigma _{ \alpha }  \, \tau''_{{\mathrm{1}}}  \languagessym{,} \, .. \, \languagessym{,}  \languagesmv{x_{\languagesmv{m}}}  \languagessym{:}   \sigma _{ \alpha }  \, \tau''_{\languagesmv{m}}  \vdash  e  \languagessym{:}  T
  }{
    \Theta  \pipe  \Phi  \pipe  \Gamma  \languagessym{,}  \languagesmv{y_{{\mathrm{1}}}}  \languagessym{:}  \tau_{{\mathrm{1}}}  \languagessym{,} \, .. \, \languagessym{,}  \languagesmv{y_{\languagesmv{n}}}  \languagessym{:}  \tau_{\languagesmv{n}}  \vdash  \languageskw{let} \, \languagessym{(}  \languagesmv{x_{{\mathrm{1}}}}  \languagessym{,} \, .. \, \languagessym{,}  \languagesmv{x_{\languagesmv{m}}}  \languagessym{)}  \languagessym{=}  \languagesmv{f}  \languagessym{(}  \languagesmv{y_{{\mathrm{1}}}}  \languagessym{,} \, .. \, \languagessym{,}  \languagesmv{y_{\languagesmv{n}}}  \languagessym{)} \, \languageskw{in} \, e  \languagessym{:}  T
  }
}

\NewDocumentCommand\TTJudgeFunDef{}{\Theta  \vdash  D}

\NewDocumentCommand\TTFunDef{}{
  \infrule[T-FunDef]{
    \Theta  \languagessym{,}  \languagesmv{f}  \languagessym{:}   \forall   \overline{ \alpha }   .   \Phi  \Rightarrow  \tau_{{\mathrm{1}}}  \languagessym{*} \, .. \, \languagessym{*}  \tau_{\languagesmv{m}}  \rightarrow  T    \pipe  \Phi  \pipe  \languagesmv{x_{{\mathrm{1}}}}  \languagessym{:}  \tau_{{\mathrm{1}}}  \languagessym{,} \, .. \, \languagessym{,}  \languagesmv{x_{\languagesmv{m}}}  \languagessym{:}  \tau_{\languagesmv{m}}  \vdash  e  \languagessym{:}  T \\
  }{
    \Theta  \languagessym{,}  \languagesmv{f}  \languagessym{:}   \forall   \overline{ \alpha }   .   \Phi  \Rightarrow  \tau_{{\mathrm{1}}}  \languagessym{*} \, .. \, \languagessym{*}  \tau_{\languagesmv{m}}  \rightarrow  T    \vdash   \languagesmv{f}  \mapsto ( \languagesmv{x_{{\mathrm{1}}}}  \languagessym{,} \, .. \, \languagessym{,}  \languagesmv{x_{\languagesmv{m}}} )  e 
  }
}

\NewDocumentCommand\TTProg{}{
  \infrule[T-Prog]{
    \Theta  \vdash  D
    \andalso \Theta  \pipe  \Phi  \pipe  \Gamma  \vdash  e  \languagessym{:}  T
  }{
    \Phi  \pipe  \Gamma  \vdash   \braket{  D  ,  e  } 
  }
}


\NewDocumentCommand\TETrans{}{\qclosure{\rho, e}  \rightarrow _{D, G} \qclosure{\rho', e'}}

\NewDocumentCommand\TEInit{}{
  \infax[E-Init]{
    \qclosure{\rho, \languageskw{init} \, \languagesmv{x}  \languagessym{;}  e}
       \rightarrow _{D, G} \qclosure{\sum_{b \in \{0, 1\}} \ketbra{0}{b}{x}\rho\ketbra{b}{0}{x}, e}
  }
}

\NewDocumentCommand\TELetI{}{
  \infax[E-Let1]{
    \braket{\rho, \languageskw{let} \, \languagessym{(}  \languagesmv{x_{{\mathrm{1}}}}  \languagessym{,} \, .. \, \languagessym{,}  \languagesmv{x_{\languagesmv{n}}}  \languagessym{)}  \languagessym{=}  \languagessym{(}  \languagesmv{y_{{\mathrm{1}}}}  \languagessym{,} \, .. \, \languagessym{,}  \languagesmv{y_{\languagesmv{n}}}  \languagessym{)} \, \languageskw{in} \, e}
       \rightarrow _{D, G} \braket{\rho, \languagessym{[}  \languagesmv{y_{{\mathrm{1}}}}  \slash  \languagesmv{x_{{\mathrm{1}}}}  \languagessym{,} \, .. \, \languagessym{,}  \languagesmv{y_{\languagesmv{n}}}  \slash  \languagesmv{x_{\languagesmv{n}}}  \languagessym{]} \, e}
  }
}

\NewDocumentCommand\TELetII{}{
  \infrule[E-Let2]{
    \braket{\rho, e_{{\mathrm{1}}}}  \rightarrow _{D, G} \braket{\rho', e'_{{\mathrm{1}}}}
  }{
    \braket{\rho, \languageskw{let} \, \languagessym{(}  \languagesmv{x_{{\mathrm{1}}}}  \languagessym{,} \, .. \, \languagessym{,}  \languagesmv{x_{\languagesmv{n}}}  \languagessym{)}  \languagessym{=}  e_{{\mathrm{1}}} \, \languageskw{in} \, e_{{\mathrm{2}}}}
       \rightarrow _{D, G} \braket{\rho', \languageskw{let} \, \languagessym{(}  \languagesmv{x_{{\mathrm{1}}}}  \languagessym{,} \, .. \, \languagessym{,}  \languagesmv{x_{\languagesmv{n}}}  \languagessym{)}  \languagessym{=}  e'_{{\mathrm{1}}} \, \languageskw{in} \, e_{{\mathrm{2}}}}
  }
}

\NewDocumentCommand\TESwap{}{
  \infrule[E-Swap]{
    (\languagesmv{y_{{\mathrm{1}}}}, \languagesmv{y_{{\mathrm{2}}}}) \in G
    \andalso U = \mathit{SWAP}_{y_1, y_2}
  }{
    \braket{\rho, \languageskw{let} \, \languagessym{(}  \languagesmv{x_{{\mathrm{1}}}}  \languagessym{,}  \languagesmv{x_{{\mathrm{2}}}}  \languagessym{)}  \languagessym{=} \, \languageskw{swap} \, \languagessym{(}  \languagesmv{y_{{\mathrm{1}}}}  \languagessym{,}  \languagesmv{y_{{\mathrm{2}}}}  \languagessym{)} \, \languageskw{in} \, e}
       \rightarrow _{D, G} \braket{U\rho U^\dagger, e}
  }
}

\NewDocumentCommand\TECnot{}{
  \infrule[E-Cnot]{
    (y_1, y_2) \in G
    \andalso U = \mathit{CNOT}_{y_1, y_2}
  }{
    \qclosure{\rho, \languageskw{let} \, \languagessym{(}  \languagesmv{x_{{\mathrm{1}}}}  \languagessym{,}  \languagesmv{x_{{\mathrm{2}}}}  \languagessym{)}  \languagessym{=} \, \languageskw{cnot} \, \languagessym{(}  \languagesmv{y_{{\mathrm{1}}}}  \languagessym{,}  \languagesmv{y_{{\mathrm{2}}}}  \languagessym{)} \, \languageskw{in} \, e}
       \rightarrow _{D, G} \qclosure{U\rho U^\dagger, \languagessym{[}  \languagesmv{y_{{\mathrm{1}}}}  \slash  \languagesmv{x_{{\mathrm{1}}}}  \languagessym{,}  \languagesmv{y_{{\mathrm{2}}}}  \slash  \languagesmv{x_{{\mathrm{2}}}}  \languagessym{]} \, e}
  }
}

\NewDocumentCommand\TECall{}{
  \infrule[T-Call]{
     \languagesmv{f}  \mapsto ( \languagesmv{y'_{{\mathrm{1}}}}  \languagessym{,} \, .. \, \languagessym{,}  \languagesmv{y'_{\languagesmv{m}}} )  e_{{\mathrm{1}}}  \in D
  }{
    \braket{\rho, \languageskw{let} \, \languagessym{(}  \languagesmv{x_{{\mathrm{1}}}}  \languagessym{,} \, .. \, \languagessym{,}  \languagesmv{x_{\languagesmv{n}}}  \languagessym{)}  \languagessym{=}  \languagesmv{f}  \languagessym{(}  \languagesmv{y_{{\mathrm{1}}}}  \languagessym{,} \, .. \, \languagessym{,}  \languagesmv{y_{\languagesmv{m}}}  \languagessym{)} \, \languageskw{in} \, e_{{\mathrm{2}}}} \\
       \rightarrow _{D, G} \braket{\rho, \languageskw{let} \, \languagessym{(}  \languagesmv{x_{{\mathrm{1}}}}  \languagessym{,} \, .. \, \languagessym{,}  \languagesmv{x_{\languagesmv{n}}}  \languagessym{)}  \languagessym{=}  \languagessym{[}  \languagesmv{y_{{\mathrm{1}}}}  \slash  \languagesmv{y'_{{\mathrm{1}}}}  \languagessym{,} \, .. \, \languagessym{,}  \languagesmv{y_{\languagesmv{m}}}  \slash  \languagesmv{y'_{\languagesmv{m}}}  \languagessym{]} \, e_{{\mathrm{1}}} \, \languageskw{in} \, e_{{\mathrm{2}}}}
  }
}

\newcommand{\languagesdrule}[4][]{{\displaystyle\frac{\begin{array}{l}#2\end{array}}{#3}\quad\languagesdrulename{#4}}}
\newcommand{\languagesusedrule}[1]{\[#1\]}
\newcommand{\languagespremise}[1]{ #1 \\}
\newenvironment{languagesdefnblock}[3][]{ \framebox{\mbox{#2}} \quad #3 \\[0pt]}{}
\newenvironment{languagesfundefnblock}[3][]{ \framebox{\mbox{#2}} \quad #3 \\[0pt]\begin{displaymath}\begin{array}{l}}{\end{array}\end{displaymath}}
\newcommand{\languagesfunclause}[2]{ #1 \equiv #2 \\}
\newcommand{\languagesnt}[1]{\mathit{#1}}
\newcommand{\languagesmv}[1]{\mathit{#1}}
\newcommand{\languageskw}[1]{\mathbf{#1}}
\newcommand{\languagessym}[1]{#1}
\newcommand{\languagescom}[1]{\text{#1}}
\newcommand{\languagesdrulename}[1]{\textsc{#1}}
\newcommand{\languagescomplu}[5]{\overline{#1}^{\,#2\in #3 #4 #5}}
\newcommand{\languagescompu}[3]{\overline{#1}^{\,#2<#3}}
\newcommand{\languagescomp}[2]{\overline{#1}^{\,#2}}
\newcommand{\languagesgrammartabular}[1]{\begin{supertabular}{llcllllll}#1\end{supertabular}}
\newcommand{\languagesmetavartabular}[1]{\begin{supertabular}{ll}#1\end{supertabular}}
\newcommand{\languagesrulehead}[3]{$#1$ & & $#2$ & & & \multicolumn{2}{l}{#3}}
\newcommand{\languagesprodline}[6]{& & $#1$ & $#2$ & $#3 #4$ & $#5$ & $#6$}
\newcommand{\languagesfirstprodline}[6]{\languagesprodline{#1}{#2}{#3}{#4}{#5}{#6}}
\newcommand{\languageslongprodline}[2]{& & $#1$ & \multicolumn{4}{l}{$#2$}}
\newcommand{\languagesfirstlongprodline}[2]{\languageslongprodline{#1}{#2}}
\newcommand{\languagesbindspecprodline}[6]{\languagesprodline{#1}{#2}{#3}{#4}{#5}{#6}}
\newcommand{\languagesprodnewline}{\\}
\newcommand{\languagesinterrule}{\\[5.0mm]}
\newcommand{\languagesafterlastrule}{\\}
\newcommand{\languagesmetavars}{
\languagesmetavartabular{
 $ \languagesmv{termvar} ,\, \languagesmv{x} ,\, \languagesmv{y} ,\, \languagesmv{z} $ &  \\
 $ \languagesmv{funvar} ,\, \languagesmv{f} $ &  \\
 $ \languagesmv{declvar} ,\, \languagesmv{d} $ &  \\
 $ \languagesmv{qidxvar} ,\, \alpha ,\, \beta ,\, \gamma $ &  \\
 $ \languagesmv{substvar} ,\, \sigma $ &  \\
 $ \languagesmv{permvar} ,\, \languagesmv{p} $ &  \\
 $ \languagesmv{connectivity\_var} ,\, \languagesmv{c} $ &  \\
 $ \languagesmv{index} ,\, \languagesmv{n} ,\, \languagesmv{m} ,\, \languagesmv{l} ,\, \languagesmv{i} ,\, \languagesmv{j} ,\, \languagesmv{k} ,\, \languagesmv{L} ,\, \languagesmv{N} $ &  \\
}}

\newcommand{\languagesNat}{
\languagesrulehead{N}{::=}{\languagescom{natural_numbers}}\languagesprodnewline
\languagesfirstprodline{|}{\languagessym{1}}{}{}{}{}\languagesprodnewline
\languagesprodline{|}{N_{{\mathrm{1}}}  \languagessym{+}  N_{{\mathrm{2}}}}{}{}{}{}\languagesprodnewline
\languagesprodline{|}{N_{{\mathrm{1}}}  \languagessym{-}  N_{{\mathrm{2}}}}{}{}{}{}}

\newcommand{\languagesdeclvarXXseq}{
\languagesrulehead{\languagesnt{declvar\_seq}}{::=}{}\languagesprodnewline
\languagesfirstprodline{|}{\languagesmv{declvar}}{}{}{}{}\languagesprodnewline
\languagesprodline{|}{\languagesnt{declvar\_seq_{{\mathrm{1}}}}  \languagessym{,} \, .. \, \languagessym{,}  \languagesnt{declvar\_seq_{\languagesmv{n}}}}{}{}{}{}}

\newcommand{\languagesO}{
\languagesrulehead{\Theta}{::=}{\languagescom{fun_context}}\languagesprodnewline
\languagesfirstprodline{|}{ \emptyset }{}{}{}{}\languagesprodnewline
\languagesprodline{|}{\languagesmv{f}  \languagessym{:}  \theta}{}{}{}{}\languagesprodnewline
\languagesprodline{|}{\languagesmv{f}  \languagessym{:}  \theta}{}{}{}{}\languagesprodnewline
\languagesprodline{|}{\Theta_{{\mathrm{1}}}  \languagessym{,}  \Theta_{{\mathrm{2}}}}{}{}{}{}\languagesprodnewline
\languagesprodline{|}{\languagesnt{qidx\_subst} \, \Theta}{}{}{}{}}

\newcommand{\languagesOXXapply}{
\languagesrulehead{\languagesnt{O\_apply}}{::=}{}\languagesprodnewline
\languagesfirstprodline{|}{\Theta  \languagessym{(}  \languagesmv{funvar}  \languagessym{)}}{}{}{}{}}

\newcommand{\languagesG}{
\languagesrulehead{\Gamma}{::=}{\languagescom{context}}\languagesprodnewline
\languagesfirstprodline{|}{ \emptyset }{}{}{}{}\languagesprodnewline
\languagesprodline{|}{\languagesmv{x}  \languagessym{:}  \tau}{}{}{}{}\languagesprodnewline
\languagesprodline{|}{\languagesmv{x}  \languagessym{:}  \tau}{}{}{}{}\languagesprodnewline
\languagesprodline{|}{\Gamma_{{\mathrm{1}}}  \languagessym{,}  \Gamma_{{\mathrm{2}}}}{}{}{}{}\languagesprodnewline
\languagesprodline{|}{\Gamma_{{\mathrm{1}}}  \languagessym{,} \, .. \, \languagessym{,}  \Gamma_{{\mathrm{2}}}}{}{}{}{}\languagesprodnewline
\languagesprodline{|}{\languagesnt{var\_subst} \, \Gamma}{}{}{}{}\languagesprodnewline
\languagesprodline{|}{\languagesnt{qidx\_subst} \, \Gamma}{}{}{}{}\languagesprodnewline
\languagesprodline{|}{ \Psi (  \Gamma  ) }{}{}{}{}}

\newcommand{\languagesGXXapply}{
\languagesrulehead{\languagesnt{G\_apply}}{::=}{}\languagesprodnewline
\languagesfirstprodline{|}{\Gamma  \languagessym{(}  \languagesmv{x}  \languagessym{)}}{}{}{}{}}

\newcommand{\languagesunitary}{
\languagesrulehead{\languagesnt{unitary}}{::=}{}\languagesprodnewline
\languagesfirstprodline{|}{\languageskw{H}}{}{}{}{}}

\newcommand{\languagesqidx}{
\languagesrulehead{\languagesnt{qidx}}{::=}{}\languagesprodnewline
\languagesfirstprodline{|}{\languagesmv{qidxvar}}{}{}{}{}\languagesprodnewline
\languagesprodline{|}{\languagesnt{qidx\_subst} \, \languagesnt{qidx}}{}{}{}{}\languagesprodnewline
\languagesprodline{|}{ \phi ( \languagesnt{qidx} ) }{}{}{}{}}

\newcommand{\languagesqidxXXseq}{
\languagesrulehead{\languagesnt{qidx\_seq}}{::=}{}\languagesprodnewline
\languagesfirstprodline{|}{ \epsilon }{}{}{}{}\languagesprodnewline
\languagesprodline{|}{\languagesnt{qidx}}{}{}{}{}\languagesprodnewline
\languagesprodline{|}{\languagesnt{qidx\_seq_{{\mathrm{1}}}}  \languagessym{,}  \languagesnt{qidx\_seq_{{\mathrm{2}}}}}{}{}{}{}\languagesprodnewline
\languagesprodline{|}{\languagesnt{qidx_{{\mathrm{1}}}}  \languagessym{,} \, .. \, \languagessym{,}  \languagesnt{qidx_{{\mathrm{2}}}}}{}{}{}{}\languagesprodnewline
\languagesprodline{|}{ \overline{ \languagesnt{qidx} } }{}{}{}{}\languagesprodnewline
\languagesprodline{|}{\languagesnt{qidx\_subst} \, \languagesnt{qidx\_seq}}{}{}{}{}\languagesprodnewline
\languagesprodline{|}{ \mathit{QV}( T ) }{}{}{}{}\languagesprodnewline
\languagesprodline{|}{ \mathit{QV}( \Gamma ) }{}{}{}{}}

\newcommand{\languagesqidxXXsubstXXinner}{
\languagesrulehead{\languagesnt{qidx\_subst\_inner}}{::=}{}\languagesprodnewline
\languagesfirstprodline{|}{\languagesnt{qidx}  \slash  \languagesnt{qidx'}}{}{}{}{}\languagesprodnewline
\languagesprodline{|}{\languagesnt{qidx\_subst\_inner_{{\mathrm{1}}}}  \languagessym{,} \, .. \, \languagessym{,}  \languagesnt{qidx\_subst\_inner_{\languagesmv{n}}}}{}{}{}{}\languagesprodnewline
\languagesprodline{|}{ \overline{ \languagesnt{qidx} } / \overline{ \languagesnt{qidx'} } }{}{}{}{}}

\newcommand{\languagesqidxXXsubst}{
\languagesrulehead{\languagesnt{qidx\_subst}}{::=}{}\languagesprodnewline
\languagesfirstprodline{|}{\languagesmv{substvar}}{}{}{}{}\languagesprodnewline
\languagesprodline{|}{ \languagesmv{substvar} _{ \languagesnt{qidx} } }{}{}{}{}\languagesprodnewline
\languagesprodline{|}{\languagessym{[}  \languagesnt{qidx\_subst\_inner}  \languagessym{]}}{}{}{}{}\languagesprodnewline
\languagesprodline{|}{ \languagesnt{qidx\_subst}  \circ  \languagesnt{qidx\_subst'} }{}{}{}{}}

\newcommand{\languagesvar}{
\languagesrulehead{\languagesnt{var}}{::=}{}\languagesprodnewline
\languagesfirstprodline{|}{\languagesmv{x}}{}{}{}{}\languagesprodnewline
\languagesprodline{|}{\languagesnt{var\_subst} \, \languagesnt{var}}{}{}{}{}}

\newcommand{\languagesvarXXseq}{
\languagesrulehead{\languagesnt{var\_seq}}{::=}{}\languagesprodnewline
\languagesfirstprodline{|}{\languagesnt{var}}{}{}{}{}\languagesprodnewline
\languagesprodline{|}{\languagesnt{var}  \languagessym{,}  \languagesnt{var\_seq}}{}{}{}{}\languagesprodnewline
\languagesprodline{|}{\languagesnt{var_{{\mathrm{1}}}}  \languagessym{,} \, .. \, \languagessym{,}  \languagesnt{var_{\languagesmv{n}}}}{}{}{}{}\languagesprodnewline
\languagesprodline{|}{ \overline{ \languagesnt{var} } }{}{}{}{}\languagesprodnewline
\languagesprodline{|}{\languagesnt{var\_subst} \, \languagesnt{var\_seq}}{}{}{}{}}

\newcommand{\languagesvarXXsubst}{
\languagesrulehead{\languagesnt{var\_subst}}{::=}{}\languagesprodnewline
\languagesfirstprodline{|}{\sigma}{}{}{}{}\languagesprodnewline
\languagesprodline{|}{ \sigma _{ \languagesmv{x} } }{}{}{}{}\languagesprodnewline
\languagesprodline{|}{\languagessym{[}  \languagesmv{x}  \slash  \languagesmv{x'}  \languagessym{]}}{}{}{}{}\languagesprodnewline
\languagesprodline{|}{\languagesnt{var\_subst} \, \languagesnt{var\_subst'}}{}{}{}{}\languagesprodnewline
\languagesprodline{|}{\languagessym{[}  \languagesmv{x_{{\mathrm{1}}}}  \slash  \languagesmv{x'_{{\mathrm{1}}}}  \languagessym{,} \, .. \, \languagessym{,}  \languagesmv{x_{\languagesmv{n}}}  \slash  \languagesmv{x'_{\languagesmv{n}}}  \languagessym{]}}{}{}{}{}\languagesprodnewline
\languagesprodline{|}{\languagessym{[}  \languagesmv{x_{{\mathrm{1}}}}  \slash  \languagesmv{x'_{{\mathrm{1}}}}  \languagessym{]} \, .. \, \languagessym{[}  \languagesmv{x_{\languagesmv{n}}}  \slash  \languagesmv{x'_{\languagesmv{n}}}  \languagessym{]}}{}{}{}{}\languagesprodnewline
\languagesprodline{|}{\languagessym{[}  v  \slash  \languagesmv{x}  \languagessym{]}}{}{}{}{}}

\newcommand{\languagessty}{
\languagesrulehead{\tau}{::=}{\languagescom{src_types}}\languagesprodnewline
\languagesfirstprodline{|}{\languageskw{qbit}}{}{}{}{}}

\newcommand{\languagessfunty}{
\languagesrulehead{\theta}{::=}{\languagescom{function_types}}\languagesprodnewline
\languagesfirstprodline{|}{ T  \xrightarrow{ N }  T' }{}{}{}{}}

\newcommand{\languagesstupleXXty}{
\languagesrulehead{T}{::=}{\languagescom{src_tuple_types}}\languagesprodnewline
\languagesfirstprodline{|}{\languageskw{ty}}{}{}{}{}\languagesprodnewline
\languagesprodline{|}{T_{{\mathrm{1}}}  \languagessym{*}  T_{{\mathrm{2}}}}{}{}{}{}\languagesprodnewline
\languagesprodline{|}{\tau_{{\mathrm{1}}}  \languagessym{*} \, .. \, \languagessym{*}  \tau_{\languagesmv{n}}}{}{}{}{}}

\newcommand{\languagessdecl}{
\languagesrulehead{d}{::=}{\languagescom{declaration}}\languagesprodnewline
\languagesfirstprodline{|}{\languagesmv{declvar}}{}{}{}{}\languagesprodnewline
\languagesprodline{|}{ \languagesmv{funvar}  \mapsto ( \languagesnt{var\_seq} )  e }{}{}{}{}}

\newcommand{\languagessdecls}{
\languagesrulehead{D}{::=}{\languagescom{declarations}}\languagesprodnewline
\languagesfirstprodline{|}{\languagessym{\{}  d_{{\mathrm{1}}}  \languagessym{,} \, .. \, \languagessym{,}  d_{\languagesmv{n}}  \languagessym{\}}}{}{}{}{}\languagesprodnewline
\languagesprodline{|}{d_{{\mathrm{1}}}  \languagessym{,} \, .. \, \languagessym{,}  d_{\languagesmv{n}}}{}{}{}{}\languagesprodnewline
\languagesprodline{|}{D_{{\mathrm{1}}}  \languagessym{,}  D_{{\mathrm{2}}}}{}{}{}{}}

\newcommand{\languagessrcXXprogram}{
\languagesrulehead{P}{::=}{\languagescom{program}}\languagesprodnewline
\languagesfirstprodline{|}{ \braket{  D  ,  e  } }{}{}{}{}}

\newcommand{\languagesse}{
\languagesrulehead{e}{::=}{\languagescom{expression}}\languagesprodnewline
\languagesfirstprodline{|}{v}{}{}{}{}\languagesprodnewline
\languagesprodline{|}{\languageskw{let} \, \languagesnt{var}  \languagessym{=} \, \languageskw{init} \, \languagessym{()} \, \languageskw{in} \, e}{}{}{}{}\languagesprodnewline
\languagesprodline{|}{\languageskw{discard} \, \languagesnt{var}  \languagessym{;}  e}{}{}{}{}\languagesprodnewline
\languagesprodline{|}{\languageskw{let} \, \languagessym{(}  \languagesnt{var_{{\mathrm{1}}}}  \languagessym{,}  \languagesnt{var_{{\mathrm{2}}}}  \languagessym{)}  \languagessym{=} \, \languageskw{cnot} \, \languagessym{(}  \languagesnt{var'_{{\mathrm{1}}}}  \languagessym{,}  \languagesnt{var'_{{\mathrm{2}}}}  \languagessym{)} \, \languageskw{in} \, e}{}{}{}{}\languagesprodnewline
\languagesprodline{|}{\languageskw{let} \, \languagessym{(}  \languagesnt{var\_seq}  \languagessym{)}  \languagessym{=}  \languagesmv{f}  \languagessym{(}  \languagesnt{var\_seq'}  \languagessym{)} \, \languageskw{in} \, e}{}{}{}{}\languagesprodnewline
\languagesprodline{|}{\languageskw{if} \, \languagesnt{var} \, \languageskw{then} \, e_{{\mathrm{1}}} \, \languageskw{else} \, e_{{\mathrm{2}}}}{}{}{}{}\languagesprodnewline
\languagesprodline{|}{\languageskw{let} \, \languagesmv{f}  \languagessym{(} \, \languageskw{variable\_seq} \, \languagessym{)}  \languagessym{=}  e_{{\mathrm{1}}} \, \languageskw{in} \, e_{{\mathrm{2}}}}{}{}{}{}\languagesprodnewline
\languagesprodline{|}{\languageskw{let} \, \languagessym{(}  \languagesnt{var\_seq}  \languagessym{)}  \languagessym{=}  e_{{\mathrm{1}}} \, \languageskw{in} \, e_{{\mathrm{2}}}}{}{}{}{}\languagesprodnewline
\languagesprodline{|}{\languagesnt{var\_subst} \, e}{}{}{}{}\languagesprodnewline
\languagesprodline{|}{\languageskw{let} \, \languagesnt{var}  \languagessym{=}  \languagesnt{unitary}  \languagessym{(}  \languagesnt{var'}  \languagessym{)} \, \languageskw{in} \, e}{}{}{}{}\languagesprodnewline
\languagesprodline{|}{\languageskw{let} \, \languagessym{(}  \languagesnt{var\_seq_{{\mathrm{1}}}}  \languagessym{)}  \languagessym{=}  \languagesnt{unitary}  \languagessym{(}  \languagesnt{var\_seq_{{\mathrm{2}}}}  \languagessym{)} \, \languageskw{in} \, e}{}{}{}{}}

\newcommand{\languagessval}{
\languagesrulehead{v}{::=}{\languagescom{values}}\languagesprodnewline
\languagesfirstprodline{|}{\languagesmv{x}}{}{}{}{}\languagesprodnewline
\languagesprodline{|}{\languagessym{(}  \languagessym{)}}{}{}{}{}\languagesprodnewline
\languagesprodline{|}{\languagesnt{var\_seq}}{}{}{}{}\languagesprodnewline
\languagesprodline{|}{\languagessym{(}  \languagesnt{var\_seq}  \languagessym{)}}{}{}{}{}}

\newcommand{\languagessrcXXjudge}{
\languagesrulehead{\languagesnt{src\_judge}}{::=}{}\languagesprodnewline
\languagesfirstprodline{|}{\Theta  \pipe  N  \pipe  \Gamma  \vdash  e  \languagessym{:}  T}{}{}{}{}}

\newcommand{\languagessdeclsXXjudge}{
\languagesrulehead{\languagesnt{sdecls\_judge}}{::=}{}\languagesprodnewline
\languagesfirstprodline{|}{\Theta  \vdash  D}{}{}{}{}}

\newcommand{\languagessprogXXjudge}{
\languagesrulehead{\languagesnt{sprog\_judge}}{::=}{}\languagesprodnewline
\languagesfirstprodline{|}{N  \vdash  P}{}{}{}{}}

\newcommand{\languagestyqclosureXXaxiom}{
\languagesrulehead{\languagesnt{tyqclosure\_axiom}}{::=}{}\languagesprodnewline
\languagesfirstprodline{|}{ N  \pipe  \Gamma  \vdash_{ D } }{}{}{}{}}

\newcommand{\languageswfXXCG}{
\languagesrulehead{\languagesnt{wf\_CG}}{::=}{}\languagesprodnewline
\languagesfirstprodline{|}{ \vdash _{ \text{WF} }  \Phi  \pipe  \Gamma }{}{}{}{}}

\newcommand{\languagestdecl}{
\languagesrulehead{d}{::=}{\languagescom{declaration}}\languagesprodnewline
\languagesfirstprodline{|}{ \languagesmv{funvar}  \mapsto ( \languagesnt{var\_seq} )  e }{}{}{}{}}

\newcommand{\languagestdecls}{
\languagesrulehead{D}{::=}{\languagescom{declarations}}\languagesprodnewline
\languagesfirstprodline{|}{\languagessym{\{}  d_{{\mathrm{1}}}  \languagessym{,} \, .. \, \languagessym{,}  d_{\languagesmv{n}}  \languagessym{\}}}{}{}{}{}\languagesprodnewline
\languagesprodline{|}{d_{{\mathrm{1}}}  \languagessym{,} \, .. \, \languagessym{,}  d_{\languagesmv{n}}}{}{}{}{}\languagesprodnewline
\languagesprodline{|}{D_{{\mathrm{1}}}  \languagessym{,}  D_{{\mathrm{2}}}}{}{}{}{}}

\newcommand{\languagestdeclXXjudge}{
\languagesrulehead{\languagesnt{tdecl\_judge}}{::=}{}\languagesprodnewline
\languagesfirstprodline{|}{\Theta  \vdash  D}{}{}{}{}}

\newcommand{\languagestty}{
\languagesrulehead{\tau}{::=}{\languagescom{target_types}}\languagesprodnewline
\languagesfirstprodline{|}{ \texttt{q}( \languagesnt{qidx} ) }{}{}{}{}\languagesprodnewline
\languagesprodline{|}{\languagesnt{qidx\_subst} \, \tau}{}{}{}{}\languagesprodnewline
\languagesprodline{|}{\Gamma  \languagessym{(}  \languagesmv{x}  \languagessym{)}}{}{}{}{}}

\newcommand{\languagesttupleXXty}{
\languagesrulehead{T}{::=}{}\languagesprodnewline
\languagesfirstprodline{|}{\tau}{}{}{}{}\languagesprodnewline
\languagesprodline{|}{T_{{\mathrm{1}}}  \languagessym{*}  T_{{\mathrm{2}}}}{}{}{}{}\languagesprodnewline
\languagesprodline{|}{\tau_{{\mathrm{1}}}  \languagessym{*} \, .. \, \languagessym{*}  \tau_{\languagesmv{n}}}{}{}{}{}\languagesprodnewline
\languagesprodline{|}{\languagesnt{qidx\_subst} \, T}{}{}{}{}}

\newcommand{\languageste}{
\languagesrulehead{e}{::=}{\languagescom{expression}}\languagesprodnewline
\languagesfirstprodline{|}{v}{}{}{}{}\languagesprodnewline
\languagesprodline{|}{\languageskw{let} \, \languagessym{(}  \languagesnt{var\_seq}  \languagessym{)}  \languagessym{=}  \languagessym{(}  \languagesnt{var\_seq'}  \languagessym{)} \, \languageskw{in} \, e}{}{}{}{}\languagesprodnewline
\languagesprodline{|}{\languageskw{init} \, \languagesmv{x}  \languagessym{;}  e}{}{}{}{}\languagesprodnewline
\languagesprodline{|}{\languageskw{discard} \, \languagesmv{x}  \languagessym{;}  e}{}{}{}{}\languagesprodnewline
\languagesprodline{|}{\languageskw{let} \, \languagessym{(}  \languagesmv{x_{{\mathrm{1}}}}  \languagessym{,}  \languagesmv{x_{{\mathrm{2}}}}  \languagessym{)}  \languagessym{=} \, \languageskw{swap} \, \languagessym{(}  \languagesmv{y_{{\mathrm{1}}}}  \languagessym{,}  \languagesmv{y_{{\mathrm{2}}}}  \languagessym{)} \, \languageskw{in} \, e}{}{}{}{}\languagesprodnewline
\languagesprodline{|}{\languageskw{let} \, \languagessym{(}  \languagesmv{x_{{\mathrm{1}}}}  \languagessym{,}  \languagesmv{x_{{\mathrm{2}}}}  \languagessym{)}  \languagessym{=} \, \languageskw{cnot} \, \languagessym{(}  \languagesmv{y_{{\mathrm{1}}}}  \languagessym{,}  \languagesmv{y_{{\mathrm{2}}}}  \languagessym{)} \, \languageskw{in} \, e}{}{}{}{}\languagesprodnewline
\languagesprodline{|}{\languageskw{let} \, \languagessym{(}  \languagesnt{var\_seq}  \languagessym{)}  \languagessym{=}  \languagesmv{f}  \languagessym{(}  \languagesnt{var\_seq'}  \languagessym{)} \, \languageskw{in} \, e}{}{}{}{}\languagesprodnewline
\languagesprodline{|}{\languageskw{if} \, \languagesmv{x} \, \languageskw{then} \, e_{{\mathrm{1}}} \, \languageskw{else} \, e_{{\mathrm{2}}}}{}{}{}{}\languagesprodnewline
\languagesprodline{|}{\languageskw{let} \, \languagessym{(}  \languagesnt{var\_seq}  \languagessym{)}  \languagessym{=}  e_{{\mathrm{1}}} \, \languageskw{in} \, e_{{\mathrm{2}}}}{}{}{}{}\languagesprodnewline
\languagesprodline{|}{\languagesnt{var\_subst} \, e}{}{}{}{}}

\newcommand{\languagestval}{
\languagesrulehead{v}{::=}{}\languagesprodnewline
\languagesfirstprodline{|}{\languagessym{(}  \languagesnt{var\_seq}  \languagessym{)}}{}{}{}{}}

\newcommand{\languagesC}{
\languagesrulehead{\Phi}{::=}{\languagescom{constraint}}\languagesprodnewline
\languagesfirstprodline{|}{ \emptyset }{}{}{}{}\languagesprodnewline
\languagesprodline{|}{\languagesmv{connectivity\_var}}{}{}{}{}\languagesprodnewline
\languagesprodline{|}{ \languagesnt{qidx}  \sim  \languagesnt{qidx'} }{}{}{}{}\languagesprodnewline
\languagesprodline{|}{\Phi_{{\mathrm{1}}}  \languagessym{,}  \Phi_{{\mathrm{2}}}}{}{}{}{}\languagesprodnewline
\languagesprodline{|}{\languagesnt{qidx\_subst} \, \Phi}{}{}{}{}\languagesprodnewline
\languagesprodline{|}{\languageskw{perm} \, \Phi}{}{}{}{}}

\newcommand{\languagestfunty}{
\languagesrulehead{\theta}{::=}{\languagescom{function_types}}\languagesprodnewline
\languagesfirstprodline{|}{ \Phi  \Rightarrow  T  \rightarrow  T' }{}{}{}{}\languagesprodnewline
\languagesprodline{|}{ \forall  \languagesnt{qidx\_seq}  .  \theta }{}{}{}{}}

\newcommand{\languagestjudge}{
\languagesrulehead{\languagesnt{tjudge}}{::=}{}\languagesprodnewline
\languagesfirstprodline{|}{\Theta  \pipe  \Phi  \pipe  \Gamma  \vdash  e  \languagessym{:}  T}{}{}{}{}}

\newcommand{\languagestgtXXprogram}{
\languagesrulehead{P}{::=}{\languagescom{program}}\languagesprodnewline
\languagesfirstprodline{|}{ \braket{  D  ,  e  } }{}{}{}{}}

\newcommand{\languagestprogXXjudge}{
\languagesrulehead{\languagesnt{tprog\_judge}}{::=}{}\languagesprodnewline
\languagesfirstprodline{|}{\Phi  \pipe  \Gamma  \vdash  P}{}{}{}{}}

\newcommand{\languagestyqclosureXXaxiomXXtgt}{
\languagesrulehead{\languagesnt{tyqclosure\_axiom\_tgt}}{::=}{}\languagesprodnewline
\languagesfirstprodline{|}{ \Phi  \pipe  \Gamma  \vdash_{ D } }{}{}{}{}}

\newcommand{\languagesarrows}{
\languagesrulehead{\languagesnt{arrows}}{::=}{}\languagesprodnewline
\languagesfirstprodline{|}{ \rightarrow _{ D } }{}{}{}{}\languagesprodnewline
\languagesprodline{|}{ \xrightarrow{s} }{}{}{}{}}

\newcommand{\languagesterminals}{
\languagesrulehead{\languagesnt{terminals}}{::=}{}\languagesprodnewline
\languagesfirstprodline{|}{ \rightarrow }{}{}{}{}\languagesprodnewline
\languagesprodline{|}{ \Rightarrow }{}{}{}{}\languagesprodnewline
\languagesprodline{|}{\languagessym{;}}{}{}{}{}\languagesprodnewline
\languagesprodline{|}{ \vdash }{}{}{}{}\languagesprodnewline
\languagesprodline{|}{ \slash }{}{}{}{}\languagesprodnewline
\languagesprodline{|}{ \neq }{}{}{}{}\languagesprodnewline
\languagesprodline{|}{ \pipe }{}{}{}{}}

\newcommand{\languagesformula}{
\languagesrulehead{\languagesnt{formula}}{::=}{}\languagesprodnewline
\languagesfirstprodline{|}{\languagesnt{judgement}}{}{}{}{}}

\newcommand{\languagesjudgement}{
\languagesrulehead{\languagesnt{judgement}}{::=}{}}

\newcommand{\languagesuserXXsyntax}{
\languagesrulehead{\languagesnt{user\_syntax}}{::=}{}\languagesprodnewline
\languagesfirstprodline{|}{\languagesmv{termvar}}{}{}{}{}\languagesprodnewline
\languagesprodline{|}{\languagesmv{funvar}}{}{}{}{}\languagesprodnewline
\languagesprodline{|}{\languagesmv{declvar}}{}{}{}{}\languagesprodnewline
\languagesprodline{|}{\languagesmv{qidxvar}}{}{}{}{}\languagesprodnewline
\languagesprodline{|}{\languagesmv{substvar}}{}{}{}{}\languagesprodnewline
\languagesprodline{|}{\languagesmv{permvar}}{}{}{}{}\languagesprodnewline
\languagesprodline{|}{\languagesmv{connectivity\_var}}{}{}{}{}\languagesprodnewline
\languagesprodline{|}{\languagesmv{index}}{}{}{}{}\languagesprodnewline
\languagesprodline{|}{N}{}{}{}{}\languagesprodnewline
\languagesprodline{|}{\languagesnt{declvar\_seq}}{}{}{}{}\languagesprodnewline
\languagesprodline{|}{\Theta}{}{}{}{}\languagesprodnewline
\languagesprodline{|}{\languagesnt{O\_apply}}{}{}{}{}\languagesprodnewline
\languagesprodline{|}{\Gamma}{}{}{}{}\languagesprodnewline
\languagesprodline{|}{\languagesnt{G\_apply}}{}{}{}{}\languagesprodnewline
\languagesprodline{|}{\languagesnt{unitary}}{}{}{}{}\languagesprodnewline
\languagesprodline{|}{\languagesnt{qidx}}{}{}{}{}\languagesprodnewline
\languagesprodline{|}{\languagesnt{qidx\_seq}}{}{}{}{}\languagesprodnewline
\languagesprodline{|}{\languagesnt{qidx\_subst\_inner}}{}{}{}{}\languagesprodnewline
\languagesprodline{|}{\languagesnt{qidx\_subst}}{}{}{}{}\languagesprodnewline
\languagesprodline{|}{\languagesnt{var}}{}{}{}{}\languagesprodnewline
\languagesprodline{|}{\languagesnt{var\_seq}}{}{}{}{}\languagesprodnewline
\languagesprodline{|}{\languagesnt{var\_subst}}{}{}{}{}\languagesprodnewline
\languagesprodline{|}{\tau}{}{}{}{}\languagesprodnewline
\languagesprodline{|}{\theta}{}{}{}{}\languagesprodnewline
\languagesprodline{|}{T}{}{}{}{}\languagesprodnewline
\languagesprodline{|}{d}{}{}{}{}\languagesprodnewline
\languagesprodline{|}{D}{}{}{}{}\languagesprodnewline
\languagesprodline{|}{P}{}{}{}{}\languagesprodnewline
\languagesprodline{|}{e}{}{}{}{}\languagesprodnewline
\languagesprodline{|}{v}{}{}{}{}\languagesprodnewline
\languagesprodline{|}{\languagesnt{src\_judge}}{}{}{}{}\languagesprodnewline
\languagesprodline{|}{\languagesnt{sdecls\_judge}}{}{}{}{}\languagesprodnewline
\languagesprodline{|}{\languagesnt{sprog\_judge}}{}{}{}{}\languagesprodnewline
\languagesprodline{|}{\languagesnt{tyqclosure\_axiom}}{}{}{}{}\languagesprodnewline
\languagesprodline{|}{\languagesnt{wf\_CG}}{}{}{}{}\languagesprodnewline
\languagesprodline{|}{d}{}{}{}{}\languagesprodnewline
\languagesprodline{|}{D}{}{}{}{}\languagesprodnewline
\languagesprodline{|}{\languagesnt{tdecl\_judge}}{}{}{}{}\languagesprodnewline
\languagesprodline{|}{\tau}{}{}{}{}\languagesprodnewline
\languagesprodline{|}{T}{}{}{}{}\languagesprodnewline
\languagesprodline{|}{e}{}{}{}{}\languagesprodnewline
\languagesprodline{|}{v}{}{}{}{}\languagesprodnewline
\languagesprodline{|}{\Phi}{}{}{}{}\languagesprodnewline
\languagesprodline{|}{\theta}{}{}{}{}\languagesprodnewline
\languagesprodline{|}{\languagesnt{tjudge}}{}{}{}{}\languagesprodnewline
\languagesprodline{|}{P}{}{}{}{}\languagesprodnewline
\languagesprodline{|}{\languagesnt{tprog\_judge}}{}{}{}{}\languagesprodnewline
\languagesprodline{|}{\languagesnt{tyqclosure\_axiom\_tgt}}{}{}{}{}\languagesprodnewline
\languagesprodline{|}{\languagesnt{arrows}}{}{}{}{}\languagesprodnewline
\languagesprodline{|}{\languagesnt{terminals}}{}{}{}{}}

\newcommand{\languagesgrammar}{\languagesgrammartabular{
\languagesNat\languagesinterrule
\languagesdeclvarXXseq\languagesinterrule
\languagesO\languagesinterrule
\languagesOXXapply\languagesinterrule
\languagesG\languagesinterrule
\languagesGXXapply\languagesinterrule
\languagesunitary\languagesinterrule
\languagesqidx\languagesinterrule
\languagesqidxXXseq\languagesinterrule
\languagesqidxXXsubstXXinner\languagesinterrule
\languagesqidxXXsubst\languagesinterrule
\languagesvar\languagesinterrule
\languagesvarXXseq\languagesinterrule
\languagesvarXXsubst\languagesinterrule
\languagessty\languagesinterrule
\languagessfunty\languagesinterrule
\languagesstupleXXty\languagesinterrule
\languagessdecl\languagesinterrule
\languagessdecls\languagesinterrule
\languagessrcXXprogram\languagesinterrule
\languagesse\languagesinterrule
\languagessval\languagesinterrule
\languagessrcXXjudge\languagesinterrule
\languagessdeclsXXjudge\languagesinterrule
\languagessprogXXjudge\languagesinterrule
\languagestyqclosureXXaxiom\languagesinterrule
\languageswfXXCG\languagesinterrule
\languagestdecl\languagesinterrule
\languagestdecls\languagesinterrule
\languagestdeclXXjudge\languagesinterrule
\languagestty\languagesinterrule
\languagesttupleXXty\languagesinterrule
\languageste\languagesinterrule
\languagestval\languagesinterrule
\languagesC\languagesinterrule
\languagestfunty\languagesinterrule
\languagestjudge\languagesinterrule
\languagestgtXXprogram\languagesinterrule
\languagestprogXXjudge\languagesinterrule
\languagestyqclosureXXaxiomXXtgt\languagesinterrule
\languagesarrows\languagesinterrule
\languagesterminals\languagesinterrule
\languagesformula\languagesinterrule
\languagesjudgement\languagesinterrule
\languagesuserXXsyntax\languagesafterlastrule
}}

\newcommand{\languagesdefnss}{
}

\newcommand{\languagesall}{\languagesmetavars\\[0pt]
\languagesgrammar\\[5.0mm]
\languagesdefnss}

\section{Introduction}\label{sec:introduction}

In past years, quantum computation has received attention for quantum speedups of some algorithms,
and quantum computing technologies have advanced rapidly.
The \emph{quantum circuit model} is a fundamental model of quantum computation.
A quantum circuit is described by a sequence of quantum gates and measurements applied to qubits,
just as classical circuits are built by a sequence of logical gates applied to classical bits.
Many toolkits and quantum programming languages have been presented to construct quantum circuits.

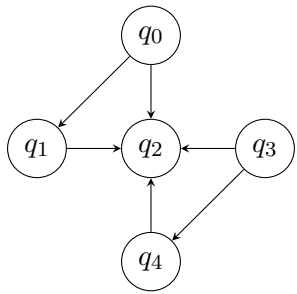
\begin{wrapfigure}[12]{l}[0mm]{70mm}
  \centering
  \begin{tikzpicture}
    \node[shape=circle, draw] (q0) at (0, 1.5) {$q_0$};
    \node[shape=circle, draw] (q1) at (-1.5, 0) {$q_1$};
    \node[shape=circle, draw] (q2) at (0, 0) {$q_2$};
    \node[shape=circle, draw] (q3) at (1.5, 0) {$q_3$};
    \node[shape=circle, draw] (q4) at (0, -1.5) {$q_4$};

    \draw [->, >=stealth] (q0) -- (q1);
    \draw [->, >=stealth] (q0) -- (q2);
    \draw [->, >=stealth] (q1) -- (q2);
    \draw [->, >=stealth] (q3) -- (q2);
    \draw [->, >=stealth] (q4) -- (q2);
    \draw [->, >=stealth] (q3) -- (q4);
  \end{tikzpicture}
  \caption{The coupling graph of IBM QX2}
  \label{fig:qx2-coupling-graph}
\end{wrapfigure}

Some near-term quantum computers have several architectural restrictions, such as the connectivity between qubits.
The connectivity constraints prevent us from applying a two-qubit gate to an arbitrary pair of qubits.
These constraints are represented by a \emph{coupling graph} whose nodes correspond to qubits.
We can apply two-qubit gates to a pair of qubits only when the pair is directly connected in the coupling graph.
For example, the IBM QX2 quantum computer~\cite{qiskitteamIBMQXBackend} has a graph given in \cref{fig:qx2-coupling-graph}.
This graph illustrates that the IBM QX2 computer has five qubits, and we can, for example, apply a two-qubit gate to $(q_0, q_1)$
but not to $(q_1, q_4)$ or $(q_1, q_0)$.

Due to the connectivity constraints, we often have to transform a quantum program to satisfy them.
Since this transformation may need additional gates, it is desirable to minimize the number of them.
This optimization problem is called \emph{Qubit Allocation Problem}, and many studies have been conducted~\cite{siraichiQubitAllocation2018,siraichiQubitAllocationCombination2019,dengCodarContextualDurationAware2020,nishioExtractingSuccessIBM2020,muraliFormalConstraintbasedCompilation2019a}.

These studies, however, formalize qubit allocation as a transformation of a low-level quantum program.
In the future, we would like to write quantum programs in high-level quantum programming languages
equipped with (recursive) functions, loops, and branching instructions.
It is not trivial to check statically whether a high-level quantum program satisfies connectivity constraints
because the qubit data flows in them are complex.
Thus, we need to develop a static verification method for connectivity constraints
and a correct qubit allocation algorithm for high-level quantum programs.

\subsection*{Contribution}
This paper introduces a type-based framework of qubit allocation for a first-order quantum programming language.
\cref{fig:workflow} gives the workflow of our framework, which consists of a source language, a target language,
and a qubit allocation algorithm.
We briefly explain the source language, the target language, and the algorithm in this order.

\begin{figure}[tb]
  \centering
  \includegraphics[width=14cm]{./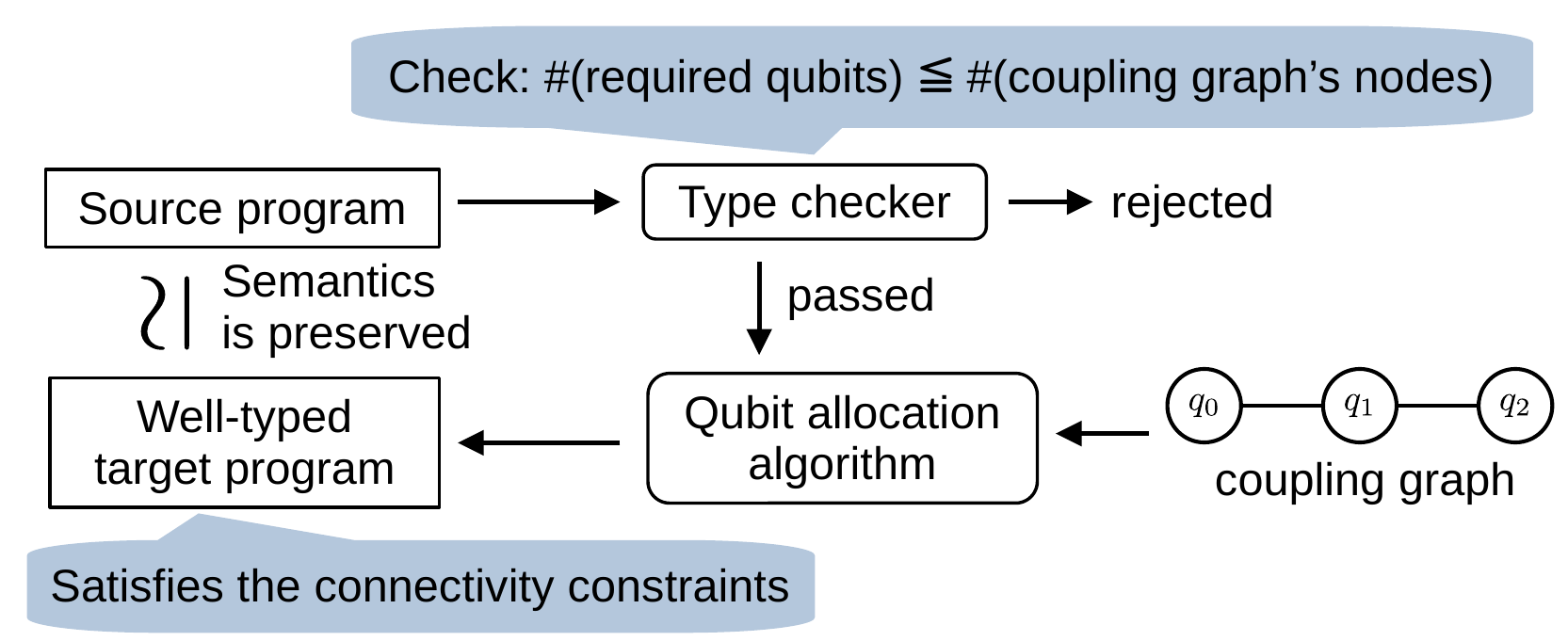}
  \caption{The workflow of our framework}
  \label{fig:workflow}
\end{figure}

We write a quantum program in the source language in the first step.
We design the source language and its linear type system~\cite{turnerOnceType1995}
to estimate the number of qubits required for a given program to run.
If the number of qubits required by the program exceeds the size of nodes of a coupling graph,
the program cannot run on the architecture even though we ignore the connectivity constraints.
Thus, we have to write a well-typed program in the source language before qubit allocation.
We prove the correctness of the estimation as type soundness.

The target language statically checks whether a given program satisfies the connectivity constraints or not.
It uses qualified types~\cite{jonesTheoryQualifiedTypes1992} for the connectivity verification.
We prove the type soundness that a well-typed program does not violate connectivity constraints.

The qubit allocation algorithm takes as input a well-typed program written in the source language
and transforms it so that the resulting program satisfies connectivity constraints.
In other words, the algorithm has to output a well-typed program because of the type soundness of the target language.
We design our allocation algorithm always to obtain a well-typed target program.
We prove the type-preserving property of our algorithm.
Furthermore, we conjecture the semantic-preserving property of our algorithm, whose proof is left as future work.

A type-based approach has two merits compared to the other approaches.
First, it provides a precise and intuitive formalization of correct qubit allocation for high-level quantum programming language.
Second, we can use the type system of the target language as a verifier for connectivity constraints.

\subsection*{The Organization of This Paper}
The rest of this paper is organized as follows;
\cref{sec:preliminaries} explains a basic of quantum computation and qubit allocation.
\cref{sec:src-lang} defines the source language, its type system, and sementics.
\cref{sec:tgt-lang} defines the target language, its type system and semantics.
\cref{sec:algorithm} formalizes qubit allocation as a type-preserving translation from the source language to the target language.
\cref{sec:related-work} discusses related work.
\cref{sec:conclusion} gives concluding remarks and discusses future work.

\section{Preliminaries}\label{sec:preliminaries}

This section explains a basic of quantum computation and qubit allocation briefly.
For a more detailed introduction of these topics, please see~\cite{nielsenQuantumComputationQuantum2011}.

\subsection{Quantum Computation}
In the quantum computation, a quantum bit or \emph{qubit} is the unit of information.
A \textit{quantum state} of a qubit is a normalized vector of the 2-dimentional Hilbert space $\mathbb{C}^2$.
We use \emph{Dirac notation}, which encloses a variable with $|$ and $\rangle$, to denote quantum states.
$\{\ket{0}, \ket{1}\}$ denotes the standard basis vectors of $\mathbb{C}^2$.
We describe the state of a single qubit in the form of
$\alpha\ket{0} + \beta\ket{1}$ with $\alpha, \beta \in \mathbb{C}$ and $|\alpha|^2 + |\beta|^2 = 1$.
The state of $n$-qubits is a normalized vector of $\bigotimes_{i=1}^n \mathbb{C}^2 \cong \mathbb{C}^{2^n}$.
For example, if $\ket{\psi} = \ket{0}, \ket{\phi} = \frac{1}{\sqrt{2}}(\ket{0} + \ket{1})$,
then $\ket{\psi\phi} = \ket{\psi}\ket{\phi} = \frac{1}{\sqrt{2}}(\ket{00} + \ket{01})$.
We call a quantum state $\ket{\psi} \in \mathbb{C}^2$ \emph{pure state}.
On the other hand, when we do not know the state completely,
but know that it is one of the pure states $\{\ket{\psi_i}\}$ with probabilities $p_i$,
we call the state $\{p_i, \ket{\psi_i}\}$ \emph{mixed state}.
A mixed state is described by a \emph{density operator} or a \emph{density matrix}
$\rho = \sum_i p_i \ket{\psi_i}\bra{\psi_i}$.
Here, $\bra{\psi_i}$ is an adjoint of $\ket{\psi_i}$.
We often use a density operator to describe quantum states because it can uniformly express both pure and mixed states.

We change a quantum state by unitary operators called \emph{quantum gates}.
Quantum programming languages provide a certain set of quantum gates, such as the Hadamard gate and the CNOT (controlled not) gate:
\begin{gather*}
  H\ket{x} = \frac{1}{\sqrt{2}}(\ket{0} + (-1)^x\ket{1}), \andalso \CNOT \ket{x}\ket{y} = \ket{x}\ket{x \oplus y},
\end{gather*}
where $x, y \in \{0, 1\}$ and $\oplus$ denotes XOR operation.

We get the result of quantum computation by performing a \emph{quantum measurement}.
In general, quantum measurements consist of projections $M_1, M_2, \dots, M_n$.
When performing measurements to a quantum state $\rho$,
we get an outcome corresponding one of measurements $M_i$ with probability $p_i = \Tr(M_i^\dagger M_i\rho)$
and then the quantum state is changed to $\frac{M_i\rho M_i^\dagger}{p_i}$.

\subsection{Qubit Allocation Problem}

As described in \cref{sec:introduction}, we can apply a CNOT gate to a pair of adjacent qubits on a coupling graph.
In other words, a given program can run safely when the coupling graph has a subgraph isomorphic to a graph whose edge represents a pair of qubits to which the CNOT gate is applied in the program.

The \emph{swap gate}, which consists of three CNOT gates (\cref{fig:swap-gate}), plays an important role when we cannot find an isomorphic graph in a coupling graph.
Since the swap gate swaps the states of adjacent qubits, when we want to apply a CNOT gate to distant qubits,
we move one of them to an adjacent node to another qubit by applying several swap gates.
\cref{fig:example-of-inserting-swaps} illustrates these swap insertions.
In this example, we consider a path graph with three nodes as a coupling graph,
and we try to apply a CNOT gate to qubits $q_0$ and $q_2$ separated by a distance of two.
We can achieve this in three steps:
(1) swap the states of $q_0$ and $q_1$,
(2) apply a CNOT gate to $q_1$ and $q_2$,
(3) swap the states of $q_0$ and $q_1$ again.
Note that the swap gate itself requires adjacent arguments because it is composed of CNOT gates.
\begin{figure}[htbp]
  \begin{tabular}{cc}
    \begin{minipage}{0.4\hsize}
      \leavevmode\centering
      \Qcircuit @C=1em @R=.7em {
        \lstick{} & \qswap     & \qw & \raisebox{-1.7em}{$\equiv$} & & \ctrl{1}  & \targ     & \ctrl{1} & \qw \\
        \lstick{} & \qswap\qwx & \qw &                             & & \targ     & \ctrl{-1} & \targ    & \qw
      }
      \caption{The swap gate}
      \label{fig:swap-gate}
    \end{minipage}

    \begin{minipage}{0.6\hsize}
      \leavevmode\centering
      \Qcircuit @C=1em @R=.7em {
        \lstick{} & \ctrl{1} & \qw &
          \raisebox{-2.2em}{$\equiv$} &
        & \gate{H} & \targ     & \gate{H} & \qw  \\
        \lstick{} & \targ    & \qw &
        &
        & \gate{H} & \ctrl{-1} & \gate{H} & \qw
      }
      \caption{The CNOT gate in the reverse direction}
      \label{fig:rev-cnot}
    \end{minipage}
  \end{tabular}
\end{figure}

\begin{figure}[htbp]
  \begin{tabular}{cc}
    \begin{minipage}{0.4\hsize}
      \centering
      \begin{tikzpicture}
        \node[shape=circle, draw] (q0) at (0, 0) {$q_0$};
        \node[shape=circle, draw] (q1) at (1.5, 0) {$q_1$};
        \node[shape=circle, draw] (q2) at (3.0, 0) {$q_2$};
        \draw [>=stealth] (q0) -- (q1);
        \draw [>=stealth] (q1) -- (q2);
      \end{tikzpicture}
    \end{minipage}
    \begin{minipage}{0.6\hsize}
      \leavevmode\centering
      \Qcircuit @C=1em @R=.7em {
        \lstick{\ket{q_0}} & \ctrl{2} & \qw & \raisebox{-2.2em}{$\rightarrow$} & & \qswap      & \qw      & \qswap     & \qw \\
        \lstick{\ket{q_1}} & \qw      & \qw &                                  & & \qswap\qwx  & \ctrl{1} & \qswap\qwx & \qw \\
        \lstick{\ket{q_2}} & \targ    & \qw &                                  & & \qw         & \targ    & \qw        & \qw
      }
    \end{minipage}
  \end{tabular}
  \caption{Applying a CNOT gate to distant qubits $q_0$ and $q_2$}
  \label{fig:example-of-inserting-swaps}
\end{figure}
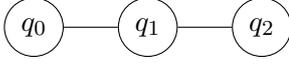

A coupling graph $G$ is a directed graph.
For example, if $(q_1, q_2) \in G$ and $(q_2, q_1) \not\in G$, then we can apply a CNOT gate to $(q_1, q_2)$ but not $(q_2, q_1)$.
However, we can reversely implement a CNOT gate with four additional Hadamard gates (\cref{fig:rev-cnot}).
Therefore, as long as we do not care about the cost of the Hadamard gate,
we can consider a coupling graph as an undirected graph without loss of generality.
In actual quantum computers, the cost of the Hadamard gate is cheaper than that of the CNOT gate.

\section{Source Language}\label{sec:src-lang}

This section defines the source language equipped with first-order (recursive) functions, if-expression,
and instructions for creating and discarding a qubit.
The goal of the source language is to estimate the number of qubits required for a program to run
because quantum computers have a limited number of qubits.
We achieve this by using linear types~\cite{turnerOnceType1995}.
Note that the source language does not take care of connectivity constraints.


\subsection{Syntax}

\begin{definition}{(Syntax of the source language)}
  The source language is defined by the grammer:
  \begin{align*}
    \languagesmv{d}\ (\text{function definition})
      &\Coloneqq  \languagesmv{f}  \mapsto ( \languagesmv{x_{{\mathrm{1}}}}  \languagessym{,} \, .. \, \languagessym{,}  \languagesmv{x_{\languagesmv{n}}} )  e  \\
    e\ (\text{expression})
      &\Coloneqq \languagessym{(}  \languagesmv{x_{{\mathrm{1}}}}  \languagessym{,} \, .. \, \languagessym{,}  \languagesmv{x_{\languagesmv{n}}}  \languagessym{)} \pipe \languageskw{let} \, \languagesmv{x}  \languagessym{=} \, \languageskw{init} \, \languagessym{()} \, \languageskw{in} \, e \pipe \languageskw{discard} \, \languagesmv{x}  \languagessym{;}  e \\
      &\quad \pipe \languageskw{let} \, \languagessym{(}  \languagesmv{x_{{\mathrm{1}}}}  \languagessym{,}  \languagesmv{x_{{\mathrm{2}}}}  \languagessym{)}  \languagessym{=} \, \languageskw{cnot} \, \languagessym{(}  \languagesmv{y_{{\mathrm{1}}}}  \languagessym{,}  \languagesmv{y_{{\mathrm{2}}}}  \languagessym{)} \, \languageskw{in} \, e \pipe \languageskw{let} \, \languagessym{(}  \languagesmv{x_{{\mathrm{1}}}}  \languagessym{,} \, .. \, \languagessym{,}  \languagesmv{x_{\languagesmv{n}}}  \languagessym{)}  \languagessym{=}  \languagesmv{f}  \languagessym{(}  \languagesmv{y_{{\mathrm{1}}}}  \languagessym{,} \, .. \, \languagessym{,}  \languagesmv{y_{\languagesmv{m}}}  \languagessym{)} \, \languageskw{in} \, e \\
      &\quad \pipe \languageskw{let} \, \languagessym{(}  \languagesmv{x_{{\mathrm{1}}}}  \languagessym{,} \, .. \, \languagessym{,}  \languagesmv{x_{\languagesmv{n}}}  \languagessym{)}  \languagessym{=}  e_{{\mathrm{1}}} \, \languageskw{in} \, e_{{\mathrm{2}}} \pipe \languageskw{if} \, \languagesmv{x} \, \languageskw{then} \, e_{{\mathrm{1}}} \, \languageskw{else} \, e_{{\mathrm{2}}} \\
    P\ (\text{program})
      &\Coloneqq  \braket{  \languagessym{\{}  \languagesmv{d_{{\mathrm{1}}}}  \languagessym{,} \, .. \, \languagessym{,}  \languagesmv{d_{\languagesmv{n}}}  \languagessym{\}}  ,  e  } 
  \end{align*}
\end{definition}

The expression $\languageskw{let} \, \languagesmv{x}  \languagessym{=} \, \languageskw{init} \, \languagessym{()} \, \languageskw{in} \, e$ creates a qubit in a basis state $\ket{0}$, binds $x$ to it and runs $e$.
In contrast, the expression $\languageskw{discard} \, \languagesmv{x}  \languagessym{;}  e$ discards qubit $\languagesmv{x}$.
The expression $\languageskw{let} \, \languagessym{(}  \languagesmv{x_{{\mathrm{1}}}}  \languagessym{,}  \languagesmv{x_{{\mathrm{2}}}}  \languagessym{)}  \languagessym{=} \, \languageskw{cnot} \, \languagessym{(}  \languagesmv{y_{{\mathrm{1}}}}  \languagessym{,}  \languagesmv{y_{{\mathrm{2}}}}  \languagessym{)} \, \languageskw{in} \, e$ applies the CNOT gate to a pair of qubits $\languagesmv{y_{{\mathrm{1}}}}$ and $\languagesmv{y_{{\mathrm{2}}}}$.
The function call expression $\languageskw{let} \, \languagessym{(}  \languagesmv{x_{{\mathrm{1}}}}  \languagessym{,} \, .. \, \languagessym{,}  \languagesmv{x_{\languagesmv{n}}}  \languagessym{)}  \languagessym{=}  \languagesmv{f}  \languagessym{(}  \languagesmv{y_{{\mathrm{1}}}}  \languagessym{,} \, .. \, \languagessym{,}  \languagesmv{y_{\languagesmv{m}}}  \languagessym{)} \, \languageskw{in} \, e$ calls the function $\languagesmv{f}$ with arguments $\languagesmv{y_{{\mathrm{1}}}}  \languagessym{,} \, .. \, \languagessym{,}  \languagesmv{y_{\languagesmv{m}}}$
and binds $\languagesmv{x_{{\mathrm{1}}}}  \languagessym{,} \, .. \, \languagessym{,}  \languagesmv{x_{\languagesmv{n}}}$ to the results of $f(y_1, \dots, y_m)$.
The if-expression $\languageskw{if} \, \languagesmv{x} \, \languageskw{then} \, e_{{\mathrm{1}}} \, \languageskw{else} \, e_{{\mathrm{2}}}$ performs measurements on $\languagesmv{x}$,
and then chooses the subsequent expression $e_{{\mathrm{1}}}$ or $e_{{\mathrm{2}}}$ depending on the outcome of the measurement.

The form of a function definition is $ \languagesmv{f}  \mapsto ( \languagesmv{x_{{\mathrm{1}}}}  \languagessym{,} \, .. \, \languagessym{,}  \languagesmv{x_{\languagesmv{n}}} )  e $, where $f$ is a function name, $\languagesmv{x_{{\mathrm{1}}}}  \languagessym{,} \, .. \, \languagessym{,}  \languagesmv{x_{\languagesmv{n}}}$ are arguments, and $e$ is its body expression.
A program $ \braket{  \languagessym{\{}  \languagesmv{d_{{\mathrm{1}}}}  \languagessym{,} \, .. \, \languagessym{,}  \languagesmv{d_{\languagesmv{n}}}  \languagessym{\}}  ,  e  } $ consists of function definitions $d_1, \dots, d_n$ and the entry point $e$.
In the $i$-th function definition $\languagesmv{d_{\languagesmv{i}}} =  \languagesmv{f_{\languagesmv{i}}}  \mapsto ( \languagesmv{x_{{\mathrm{1}}}}  \languagessym{,} \, .. \, \languagessym{,}  \languagesmv{x_{\languagesmv{n}}} )  e_{\languagesmv{i}} $,
the expression $e_{\languagesmv{i}}$ can call previously defined functions $f_1, f_2, \dots, f_{i-1}$ and $\languagesmv{f_{\languagesmv{i}}}$ itself recursively.

For simplicity, we omit classical data and useful unitary gates.
We can easily extend the language with these primitives because they do not affect qubit allocation.
We also omit mutual recursion, but we can add it by modifying the type system a little.

\begin{definition}{(Types and contexts)}
  \begin{align*}
    \tau\ (\text{simple type})
      &\Coloneqq \languageskw{qbit} \\
    T\ (\text{tuple type})
      &\Coloneqq \tau_{{\mathrm{1}}}  \languagessym{*} \, .. \, \languagessym{*}  \tau_{\languagesmv{n}} \\
    \theta\ (\text{function type})
      &\Coloneqq  T_{{\mathrm{1}}}  \xrightarrow{ N }  T_{{\mathrm{2}}}  \\
    \Theta\ (\text{function environment})
      &\Coloneqq \emptyset \pipe \Theta  \languagessym{,}  \languagesmv{f}  \languagessym{:}  \theta \\
    \Gamma\ (\text{type environment})
      &\Coloneqq  \emptyset  \pipe \Gamma  \languagessym{,}  \languagesmv{x}  \languagessym{:}  \tau 
  \end{align*}
\end{definition}

A function type $ T_{{\mathrm{1}}}  \xrightarrow{ N }  T_{{\mathrm{2}}} $ indicates that the argument type is $T_{{\mathrm{1}}}$,
the function generates $N$ qubits internally, and the type of the return value is $T_{{\mathrm{2}}}$.
A function environment $\Theta$ has pairs of a function name and its type.
We assume that all variables in $\Gamma$ are pairwise distinct.
We can consider all the contexts to be set.


\subsection{Typing Rules}

\begin{definition}{(Judgments)}
  The type system of the source language has the following judgments:
  \begin{itemize}
  \item (Typing judgment) $\STJudgeExp$,
  \item (Well-typed function definitions) $\STJudgeFunDef$, and
  \item (Well-typed program) $\SJudgeProg$.
  \end{itemize}
\end{definition}

The typing judgment $\STJudgeExp$ indicates that
$e$ is given type $T$ under the function type environment $\Theta$, the number of free qubits $N$, and typing context $\Gamma$.

$\STJudgeFunDef$ means that for all function definitions $\languagesmv{d_{\languagesmv{i}}} \in D$ is well-typed.

$\SJudgeProg$ indicates that the function definitions $D$ is well-typed under some $\Theta$,
and the expression $e$ is well-typed under the function definitions $D$ and $N$.

\begin{figure}[htbp]

\fbox{$\STJudgeExp$}

  \STReturn

  \STInit

  \STDiscard

  \STLet

  \STCnot

  \STIf

  \STCall

\fbox{$\Theta  \vdash  D$}

  \STFunDef

  \begin{tabular}{c}
    \begin{minipage}{0.3\hsize}
      \infrule[T-Empty]{
      }{
        \emptyset  \vdash  \emptyset
      }
    \end{minipage}
    \begin{minipage}{0.7\hsize}
      \infrule[T-FunDecl]{
        \Theta  \vdash  D
        \andalso \Theta  \languagessym{,}  \languagesmv{f}  \languagessym{:}  \theta  \vdash   \languagesmv{f}  \mapsto ( \languagesmv{x_{{\mathrm{1}}}}  \languagessym{,} \, .. \, \languagessym{,}  \languagesmv{x_{\languagesmv{n}}} )  e 
      }{
        \Theta  \languagessym{,}  \languagesmv{f}  \languagessym{:}  \theta  \vdash  D  \languagessym{,}   \languagesmv{f}  \mapsto ( \languagesmv{x_{{\mathrm{1}}}}  \languagessym{,} \, .. \, \languagessym{,}  \languagesmv{x_{\languagesmv{n}}} )  e 
      }
    \end{minipage}
  \end{tabular}

\fbox{$\SJudgeProg$}

  \STProg

  \caption{Typing judgments of the source language}
  \label{fig:src-typing-rules}
\end{figure}

The typing rules are given in \cref{fig:src-typing-rules}.
We explain some of them.

In (\textsc{T-Return}), we must return all variables in the type environment.
That is, we cannot discard quantum variables implicitly.

In (\textsc{T-Init}), a quantum variable $\languagesmv{x}$ is introduced to the type environment,
and then $N$ is reduced by 1.
In (\textsc{T-Discard}), a quantum variable $\languagesmv{x}$ is discarded and $N$ is increased by 1.

In (\textsc{T-Call}), the function $f$ requires that the caller has enough qubits to call $f$ ($N \geq N'$).
After calling $f$, the number of surviving qubits can be calculated by checking the type of $f$ because of linear types.

The rule (\textsc{T-FunDef}) judges whether the body expression of a function is well-typed.
If the expression is well-typed, then its function definition is well-typed.

Our language prohibits variable shadowing because it causes implicit discarding of qubits.
For example, the rule (\textsc{T-Init}) requires that the introduced variable $x$ does not appear in the current typing context $x \not\in \dom(\Gamma)$.

As shown in (\textsc{T-Empty}) and (\textsc{T-FunDecl}), a well-formed function type environment
comprises a sequence of well-typed function definitions.
The type system constructs a well-formed function definitions $d_1, \dots, d_n$ in this order.


\subsection{Operational Semantics}

\begin{definition}{(Runtime state)}
  A runtime state is represented by a triple $\braket{X, \rho, e}$,
  which consists of a set of free qubit variables, a density matrix, and a currently reducing expression.
\end{definition}

We label a density operator with variables appearing in $e$ for convenience.
This notation allows us to easily describe a quantum gate application to specific qubits.
We write $\Var(\rho)$ for the set of labels in $\rho$.
Basically, the equation $X = \Var(\rho)$ holds.

\begin{definition}{(Operational semantics)}
  The operational semantics of the source language is defined as a transition relation $ \rightarrow _{ D } $ on runtime states,
  where $D$ denotes function definitions.
  We write $ \rightarrow _{ D } ^*$ for the transitive and reflexive closure of $ \rightarrow _{ D } $.
  Each rule is displayed in \cref{fig:src-operational-semantics}.
\end{definition}

In (\textsc{E-Init}), an expression consumes one qubit $\languagesmv{x'} \in X$.
The program gets stuck if all the qubits are used ($X = \emptyset$).

In (\textsc{E-Discard}), it resets the state of $x$ and returns $x$ to $X$.
Here, $\ketbra{0}{b}{x}$ is the abbreviation of $\ketbra{0}{b}{x} \otimes I_{\tilde{x}}$,
where $\tilde{x}$ means all the qubits except $x$.
The resetting operation is not unitary,
and thus, strictly speaking, we have to reset a state by \emph{uncomputation} composed of unitary operations.
We assume that we can reset a quantum state in one step because we want to focus on qubit allocation in this paper.

In (\textsc{E-IfTrue}) and (\textsc{E-IfFalse}), it first performs \emph{z-basis measurements} $\{M_s\}_{s = 0, 1}$,
where $M_s = (I + (-1)^sZ)/2$, and $Z\ket{b} = (-1)^b \ket{b}\ (b \in \{0, 1\})$.
The operator $Z_x$ means that it manipulates the quantum state of the variable $x$.
Then it evaluates either $e_1$ or $e_2$ depending on the outcome of the measurements.

\begin{figure}[tb]
\fbox{$\SETrans$}

  \SELetI

  \SELetII

  \SEInit

  \SEDiscard

  \SECnot

  \SECall

  \SEIfTrue

  \SEIfFalse

  \caption{The operational semantics of the source language}
  \label{fig:src-operational-semantics}
\end{figure}


\subsection{Type Soundness}

The type system can estimate the number of qubits required by a given program via $N$.
We show the correctness of this estimation as type soundness
by combining the progress lemma and the subject reduction lemma.
The detailed proofs are in \cref{sec:src-proofs}.

\begin{theorem}{(Type soundness)}\label{thm:src-soundness}
  If $N  \vdash   \braket{  D  ,  e  } $ and $|X| \geq N$,
  then $\qclosure{X, \rho, e}$ does not reach a stuck state.
  In other words, it does not fail to allocate a qubit in (\textsc{E-Init}) during its execution.
\end{theorem}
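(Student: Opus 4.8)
The plan is to prove the theorem by the usual syntactic method, obtaining it from a \emph{progress} lemma and a \emph{subject reduction} (preservation) lemma for a suitable notion of well-typed runtime state. First I would declare a state $\qclosure{X, \rho, e}$ to be \emph{well typed for $D$} when there exist $\Theta, N, \Gamma, T$ such that $\Theta \vdash D$, $\Theta \pipe N \pipe \Gamma \vdash e : T$, $\dom(\Gamma) \cap X = \emptyset$, and $|X| \ge N$ (together with a standard consistency condition between $\rho$ and the qubit names of the state, whose precise form is immaterial to the argument, which only concerns $X$ and $e$). By (\textsc{T-Prog}) and the hypothesis $|X| \ge N$, the initial configuration is well typed for $D$ with $\Gamma = \emptyset$. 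The theorem then follows by induction on the number of reduction steps, once we establish (i) \emph{progress}: a well-typed state is either a value or admits a reduction step, and (ii) \emph{preservation}: reduction maps well-typed states to well-typed states. The ``in other words'' reformulation is then immediate: (\textsc{E-Init}) is the only evaluation rule with a non-trivial side condition, and its premise $x' \in X$ is discharged because inverting (\textsc{T-Init}) gives $N \ge 1$, whence $|X| \ge N \ge 1$ and $X \ne \emptyset$.

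For (i) and (ii) I would first assemble the standard auxiliary lemmas: inversion for each typing rule; a \emph{function-body lemma}, namely that $\Theta \vdash D$ and $f \mapsto (x_1, \dots, x_m)\,e \in D$ with $\Theta(f) = \tau_1 * \dots * \tau_m \xrightarrow{N} T$ imply $\Theta \pipe N \pipe x_1 : \tau_1, \dots, x_m : \tau_m \vdash e : T$ (induction on the derivation of $\Theta \vdash D$, using weakening on $\Theta$); a \emph{renaming lemma}, that substituting fresh pairwise-distinct variables for variables preserves typing; and a \emph{qubit-count weakening lemma}, that $\Theta \pipe N \pipe \Gamma \vdash e : T$ and $N \le M$ imply $\Theta \pipe M \pipe \Gamma \vdash e : T$ (more free qubits never hurt), whose only non-routine cases are (\textsc{T-Init}), (\textsc{T-Discard}), and (\textsc{T-Call}). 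Progress is an induction on the typing derivation of $e$: if $e$ is a value we are done, otherwise its syntactic shape selects an evaluation rule, the sole side condition requiring discharge being (\textsc{E-Init})'s, handled as above, while for (\textsc{E-Cnot}), (\textsc{E-IfTrue})/(\textsc{E-IfFalse}), and (\textsc{E-Call}) the runtime invariants together with ``$\Theta(f)$ defined implies $f \in D$'' (from $\Theta \vdash D$) ensure the gate, measurement, or inlined body is well defined. I would prove preservation in the strengthened form that additionally carries the invariant that $N + |\dom(\Gamma)|$ is constant (the total number of physical qubits), which is precisely what is needed to re-apply (\textsc{T-Let}) in the (\textsc{E-Let2}) case; the bound $|X| \ge N$ persists because (\textsc{E-Init}) decreases both $|X|$ and $N$ by one, (\textsc{E-Discard}) increases both by one, and every other rule leaves $X$ untouched while forcing the net change in $N$ dictated by (\textsc{T-Let}) and (\textsc{T-Call}).

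I expect the main obstacle to be the (\textsc{E-Call}) case of preservation. There $\languageskw{let}\,(x_1, \dots, x_n) = f(y_1, \dots, y_m)\,\languageskw{in}\,e$ reduces to $\languageskw{let}\,(x_1, \dots, x_n) = [y_1/y'_1, \dots, y_m/y'_m]e'\,\languageskw{in}\,e$, and one must build a fresh derivation ending in the source (\textsc{T-Let}) rule. The body $e'$ is delivered by the function-body lemma typed with the \emph{declared} qubit count $N'$ of $f$, whereas the caller has $N \ge N'$ (a premise of (\textsc{T-Call})); so I would apply the function-body lemma, then the renaming lemma to replace the formal parameters $y'_i$ by the actual arguments $y_i$ (legitimate since every $\tau_i$ equals $\languageskw{qbit}$ and the $y_i$ are distinct, appearing in a well-formed context), then qubit-count weakening from $N'$ up to $N$, and finally instantiate (\textsc{T-Let}) with $\Gamma_1 = y_1 : \tau_1, \dots, y_m : \tau_m$ and $\Gamma_2 = \Gamma$; its arithmetic side condition then forces $e$ to be typed with $N + |\dom(\Gamma_1)| - n = N + m - n$ free qubits, which is exactly (\textsc{T-Call})'s $N'' = N - n + m$. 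Carrying out this reconciliation while propagating all the freshness and disjointness conditions on the $x_i$, $y_i$, and $\Gamma$ is the delicate part; the remaining reduction rules are routine renaming-and-substitution arguments, and divergent executions need no special handling, since the outer induction is on the number of steps performed.
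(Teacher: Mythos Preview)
Your proposal is correct and follows essentially the same approach as the paper: the paper also introduces a notion of well-typed runtime state (with exactly your conditions $\Theta\vdash D$, $\Theta\pipe N\pipe\Gamma\vdash e:T$, $|X|\ge N$, $\dom(\Gamma)\cap X=\emptyset$, plus a $\rho$-consistency clause), proves progress and subject reduction with the same strengthened invariant $N+|\dom(\Gamma)|$ preserved, and uses the same auxiliary lemmas you name (variable substitution, weakening on $\Theta$, and weakening on $N$). Your treatment of the (\textsc{E-Call}) case---function-body lemma, renaming, qubit-count weakening from $N'$ to $N$, then (\textsc{T-Let})---matches the paper's argument step for step.
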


This theorem states that if a quantum computer architecture $A$ has the coupling graph $G$ and $|V(G)| \geq |N|$,
then $e$ can run safely on $A$ ignoring connectivity constraints.

\section{Target Language}\label{sec:tgt-lang}

Next, we define the target language, which aims to ensure that
a well-typed target program does not violate connectivity constraints during its execution.
We achieve this goal by using qualified types~\cite{jonesTheoryQualifiedTypes1992}.
Since the definition of the target language is similar to the source language,
we show the main differences between them.

\begin{definition}{(Syntax, contexts and types)}
  The target language is defined by the grammar:
  \begin{align*}
    \alpha  \languagessym{,}  \beta   &\in \mathit{Qidx} \\
    e     &\Coloneqq \languageskw{init} \, \languagesmv{x}  \languagessym{;}  e \pipe \languageskw{let} \, \languagessym{(}  \languagesmv{x_{{\mathrm{1}}}}  \languagessym{,}  \languagesmv{x_{{\mathrm{2}}}}  \languagessym{)}  \languagessym{=} \, \languageskw{swap} \, \languagessym{(}  \languagesmv{y_{{\mathrm{1}}}}  \languagessym{,}  \languagesmv{y_{{\mathrm{2}}}}  \languagessym{)} \, \languageskw{in} \, e \\
               &\quad \pipe \languagessym{(}  \languagesmv{x_{{\mathrm{1}}}}  \languagessym{,} \, .. \, \languagessym{,}  \languagesmv{x_{\languagesmv{n}}}  \languagessym{)} \pipe \languageskw{let} \, \languagessym{(}  \languagesmv{x_{{\mathrm{1}}}}  \languagessym{,}  \languagesmv{x_{{\mathrm{2}}}}  \languagessym{)}  \languagessym{=} \, \languageskw{cnot} \, \languagessym{(}  \languagesmv{y_{{\mathrm{1}}}}  \languagessym{,}  \languagesmv{y_{{\mathrm{2}}}}  \languagessym{)} \, \languageskw{in} \, e \pipe \languageskw{let} \, \languagessym{(}  \languagesmv{x_{{\mathrm{1}}}}  \languagessym{,} \, .. \, \languagessym{,}  \languagesmv{x_{\languagesmv{n}}}  \languagessym{)}  \languagessym{=}  \languagesmv{f}  \languagessym{(}  \languagesmv{y_{{\mathrm{1}}}}  \languagessym{,} \, .. \, \languagessym{,}  \languagesmv{y_{\languagesmv{m}}}  \languagessym{)} \, \languageskw{in} \, e \\
               &\quad \pipe \languageskw{let} \, \languagessym{(}  \languagesmv{x_{{\mathrm{1}}}}  \languagessym{,} \, .. \, \languagessym{,}  \languagesmv{x_{\languagesmv{n}}}  \languagessym{)}  \languagessym{=}  e_{{\mathrm{1}}} \, \languageskw{in} \, e_{{\mathrm{2}}} \pipe \languageskw{if} \, \languagesmv{x} \, \languageskw{then} \, e_{{\mathrm{1}}} \, \languageskw{else} \, e_{{\mathrm{2}}} \\
    \tau    &\Coloneqq  \texttt{q}( \alpha )  \\
    \Phi      &\Coloneqq  \emptyset  \pipe \Phi  \languagessym{,}   \alpha_{{\mathrm{1}}}  \sim  \alpha_{{\mathrm{2}}}  \\
    \theta &\Coloneqq  \forall   \overline{ \alpha }   .   \Phi  \Rightarrow  T_{{\mathrm{1}}}  \rightarrow  T_{{\mathrm{2}}}  
  \end{align*}
\end{definition}

The main difference is that the qubit types are indexed by \emph{qidx}, which corresponds to a node of a coupling graph.
Qidxs allows us to analyze the data flows of qubits and connections between qubits in the type system, as described later.

Unlike the source language, all qubits are allocated statically like classical registers.
Hence, the expression $\languageskw{init} \, \languagesmv{x}  \languagessym{;}  e$ resets the state of $\languagesmv{x}$ instead of creating a qubit.
We introduce the swap gate as a primitive gate for simplicity, which three CNOT gates can implement.

$ \alpha_{{\mathrm{1}}}  \sim  \alpha_{{\mathrm{2}}} $ indicates the connectivity between qubits associated with qidx $\alpha_{{\mathrm{1}}}, \alpha_{{\mathrm{2}}}$.
A connectivity context $\Phi$ has the connections between qidxs.
We can use qidxs and a connectivity context to represent a coupling graph.

A function may call a CNOT gate in its body, so it requires that its arguments satisfy the connectivity constraints.
We add the required connectivity constraints $\Phi$ to the function type $ \forall   \overline{ \alpha }   .   \Phi  \Rightarrow  T_{{\mathrm{1}}}  \rightarrow  T_{{\mathrm{2}}}  $.
When calling a function $f$ whose type has $\Phi$, a caller must arrange arguments to meet $\Phi$.
Qidxs are polymorphic in function types to enhance the reusability of functions.

A function environment and a type environment are the same as the source language's environments.


\subsection{Typing Rules and Operational Semantics}

The typing judgment has the form $\TTJudgeExp$.
In the typing judgment, the connectivity constraints $\Phi$ restrict applying the two-qubit gates.
In other words, the expression $e$ must satisfy the connectivity constraints $\Phi$ to be well-typed.

The typing rules are given in \cref{fig:tgt-typing-rules}.
As seen in (\textsc{T-Cnot}), we can apply a CNOT gate only to directly connected qubits in $\Phi$.
Similarly, in (\textsc{T-Call}), the current connectivity context must meet the connectivity constraints appearing in the type of a called function.

In (\textsc{T-FunDef}), the target language allows polymorphic recursion. 
The other rules are almost the same as the source language.

\begin{figure}[tbp]
  \fbox{$\TTJudgeExp$}

  \TTInit 

  \TTCnot

  \TTCall

  \fbox{$\TTJudgeFunDef$}

  \TTFunDef

  \TTProg

  \caption{Typing judgments of the target language (excerpt)}
  \label{fig:tgt-typing-rules}
\end{figure}

Next, we define the operational semantics of the target language.
In the target language, the runtime state is represented by $\qclosure{\rho, e}$
because the number of qubits does not change in reducing an expression.

\begin{definition}{(Operational semantics)}
  The operational semantics of the target language is defined as a transition relation $ \rightarrow _{D, G}$ on runtime states,
  where $D$ and $G$ denote function definitions and a coupling graph respectively.
  Transition rules are displayed in \cref{fig:tgt-operational-semantics}.
\end{definition}

In (\textsc{E-Cnot}), the reduction proceeds when the arguments are connected in a coupling graph.
A program that does not satisfy connectivity constraints may reach a stuck state.
The transition rule for the swap gate is almost the same as (\textsc{E-Cnot}).
 
\begin{figure}[tb]
  \fbox{$\TETrans$}

  \TEInit

  \TECnot

  \caption{The operational semantics of the target language (excerpt)}
  \label{fig:tgt-operational-semantics}
\end{figure}


\subsection{Type Soundness}

To conclude this section, we show that a well-typed program satisfies the connectivity constraints.
We must prove the type soundness that a well-typed program does not reach a stuck state.
The proof is easily done by the progress lemma and the subject reduction lemma similarly to the source language.
For more detailed proof, please see \cref{sec:appendix-tgt}.

\begin{theorem}{(Type soundness)}\label{thm:tgt-soundness}
  If $\Phi  \pipe  \Gamma  \vdash   \braket{  D  ,  e  } $, then $\qclosure{\rho, e}$ does not reach a stuck state on a coupling graph $G$
  such that $\dom(\Gamma) \subseteq V(G)$ and $\languagesmv{x_{{\mathrm{1}}}}  \languagessym{:}   \texttt{q}( \alpha_{{\mathrm{1}}} )   \languagessym{,}  \languagesmv{x_{{\mathrm{2}}}}  \languagessym{:}   \texttt{q}( \alpha_{{\mathrm{2}}} )  \in \Gamma \land  \alpha_{{\mathrm{1}}}  \sim  \alpha_{{\mathrm{2}}}  \Rightarrow (x_1, x_2) \in E(G)$.
\end{theorem}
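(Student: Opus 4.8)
The plan is to prove Theorem~\ref{thm:tgt-soundness} in the standard syntactic style, by combining a progress lemma and a subject-reduction (type-preservation) lemma, exactly as announced in the text. The statement of type soundness is about closed programs $\Phi \pipe \Gamma \vdash \braket{D, e}$ running on a coupling graph $G$ that \emph{realizes} the connectivity context $\Phi$ over $\Gamma$ — meaning $\dom(\Gamma) \subseteq V(G)$ and every asserted connection $\alpha_1 \sim \alpha_2$ between variables in $\Gamma$ is backed by an edge in $E(G)$. So the first thing I would do is isolate this compatibility condition into a named relation, say ``$G$ realizes $\Phi \pipe \Gamma$'', and carry it as an invariant through the whole execution. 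I would then strengthen the theorem to a statement about arbitrary reachable configurations $\qclosure{\rho, e'}$ with a typing derivation $\Phi' \pipe \Gamma' \vdash e' : T$ and a graph $G$ realizing $\Phi' \pipe \Gamma'$, since that is what an induction on the reduction sequence actually needs.

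\textbf{Progress.} First I would prove: if $\Phi \pipe \Gamma \vdash e : T$ with $\dom(\Gamma) \subseteq \Var(\rho)$ (and $D$ well-typed under the ambient $\Theta$), and $G$ realizes $\Phi \pipe \Gamma$, then either $e$ is a value $(x_1, \dots, x_n)$ or there is a transition $\qclosure{\rho, e} \rightarrow_{D,G} \qclosure{\rho', e'}$. The proof is by case analysis on the typing rule at the root of the derivation. The interesting cases are (\textsc{T-Cnot}) and (\textsc{T-Swap}): here the premise gives $\alpha_1 \sim \alpha_2 \in \Phi$ for the relevant argument variables $y_1 : \texttt{q}(\alpha_1)$, $y_2 : \texttt{q}(\alpha_2)$, and since $G$ realizes $\Phi \pipe \Gamma$ we get $(y_1, y_2) \in E(G)$, which is precisely the side condition that licenses (\textsc{E-Cnot}) / (\textsc{E-Swap}). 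For (\textsc{T-Call}) I would appeal to well-typedness of $D$ to locate the matching definition $f \mapsto (y'_1, \dots, y'_m)\, e_1 \in D$ so that (\textsc{E-Call}) fires. The remaining cases (\textsc{T-Init}, \textsc{T-If}, \textsc{T-Let}, \textsc{T-Return}) are immediate since their operational rules have trivial or no nontrivial side conditions; for (\textsc{T-Let}) with a non-value scrutinee I would recurse via (\textsc{E-Let2}).

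\textbf{Subject reduction.} Then I would prove: if $\Phi \pipe \Gamma \vdash e : T$, $G$ realizes $\Phi \pipe \Gamma$, and $\qclosure{\rho, e} \rightarrow_{D,G} \qclosure{\rho', e'}$, then there is $\Gamma'$ with $\Phi \pipe \Gamma' \vdash e' : T$ and $G$ still realizes $\Phi \pipe \Gamma'$. The proof is by induction on the reduction derivation with inner case analysis on the typing rule. This needs the usual plumbing lemmas for the target type system — weakening, and a substitution lemma for the renamings $[y_i/x_i]$ that appear in (\textsc{E-Let1}), (\textsc{E-Cnot}), (\textsc{E-Call}); because these are variable-for-variable renamings rather than general term substitutions, the substitution lemma is light. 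The delicate points are: (i) in (\textsc{E-Cnot})/(\textsc{E-Swap}) the result binds $y_1, y_2$ (which keep qidxs $\alpha_1, \alpha_2$) where $x_1, x_2$ stood, so $\Gamma'$ differs from $\Gamma$ only by this renaming and realizability is preserved because the qidx assignment is unchanged; (ii) in (\textsc{E-Call}) I must instantiate the quantified qidxs of $f$'s type via the substitution $\sigma_\alpha$ from (\textsc{T-Call}), check that $\sigma_\alpha \Phi' \subseteq \Phi$ makes the inlined body well-typed under the caller's $\Phi$, and confirm that the freshly substituted qidx variables on the result binders are consistent with $G$ — the latter holds because (\textsc{T-Call})'s premise types the continuation $e$ under exactly those binders, so realizability for the composite \textbf{let} follows from realizability of $G$ for the caller plus the induction hypothesis; (iii) in (\textsc{E-Init}), (\textsc{E-IfTrue}), (\textsc{E-IfFalse}) the context $\Gamma$ is literally unchanged, so there is nothing to re-establish. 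Finally, chaining progress and subject reduction along $\rightarrow_{D,G}^*$ gives that no reachable configuration is stuck, which is the theorem.

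\textbf{The main obstacle} I anticipate is the (\textsc{T-Call})/(\textsc{E-Call}) interaction in subject reduction: getting the qidx-substitution bookkeeping right so that the inlined, renamed body $[\,\overline{y}/\overline{y'}\,]\, e_1$ is well-typed under the \emph{caller's} connectivity context $\Phi$ rather than the callee's $\Phi'$, and simultaneously verifying that the graph $G$ — fixed throughout — still realizes the new context after the call binders acquire instantiated qidxs. This requires a small lemma that typing is stable under qidx substitution (applying $\sigma_\alpha$ everywhere in a derivation), together with the monotonicity observation $\sigma_\alpha \Phi' \subseteq \Phi$; once that lemma is in place the case goes through mechanically. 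Everything else is routine structural induction, and the quantum-state component $\rho$ plays no role in the soundness argument — it is just carried along — so I would not belabor it.
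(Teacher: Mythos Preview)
Your proposal is correct and follows essentially the same approach as the paper: progress plus subject reduction, supported by a variable-substitution lemma, a qidx-substitution (renaming) lemma, and weakening of the function environment, with the \textsc{E-Call} case handled exactly as you outline. One simplification you can exploit is that in the target language the typing context $\Gamma$ is literally invariant under reduction (qubits are statically allocated), so the paper's subject-reduction lemma concludes with the \emph{same} $\Phi \pipe \Gamma$, making your ``$G$ still realizes $\Phi \pipe \Gamma'$'' side-condition trivially preserved rather than something you need to re-establish case by case.
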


\section{Type-based Qubit Allocation Algorithm}\label{sec:algorithm}

This section introduces a type-based algorithm of qubit allocation for the language defined in \cref{sec:src-lang}.
As described later, our algorithm uses heuristics because the qubit allocation problem has NP-hard subproblems.

The algorithm \textsc{QubitAlloc} is the entry point of our algorithm (\cref{alg:qalloc}).
\textsc{QubitAlloc} takes as input a well-typed source program $ \braket{  \languagessym{\{}  \languagesmv{d_{{\mathrm{1}}}}  \languagessym{,} \, .. \, \languagessym{,}  \languagesmv{d_{\languagesmv{n}}}  \languagessym{\}}  ,  e  } $ and a coupling graph,
and returns a target program, which satisfies connectivity constraints.

\cref{fig:example-alg} illustrates how our algorithm works.
In this example, the algorithm replaces the instructions for creating and discarding a qubit with the resetting operation
and extends the arguments and the return values of \texttt{func} with the third qubit
because all the qubits are allocated statically in the target language, unlike the source language.
Finally, the function \texttt{func} takes as input 3-qubits and requires that all the pairs of the arguments are connected.
The caller has to insert swap gates so that the arguments passed to \texttt{func} meet the requirement.
We can consider this requirement for the caller to be a calling convention in connectivity-aware quantum programming languages.
After the transformation of functions, the algorithm transforms the entry expression.
In the first call of \texttt{func}, we give to \texttt{func} $q_1, q_2, q_3$, which have all-to-all connections,
and thus the algorithm does not insert swap gates.
In contrast, in the second call of \texttt{func}, the arguments $q_2, q_4, q_3$ does not meet the requirements of \texttt{func}.
Hence, the algorithm searches a subgraph from the coupling graph isomorphic to the requirement
and then moves $q_2, q_4, q_3$ to the positions by inserting swap gates.
Before returning the values, the algorithm inserts swap gates to preserve the semantics of the given program.

\begin{figure}[tb]
  \centering
  \begin{tikzpicture}
    \node[shape=circle, draw] (q0) at (0, 1.5) {$q_1$};
    \node[shape=circle, draw] (q1) at (-1.5, 0) {$q_2$};
    \node[shape=circle, draw] (q2) at (0, 0) {$q_3$};
    \node[shape=circle, draw] (q3) at (1.5, 0) {$q_4$};

    \draw [>=stealth] (q0) -- (q1);
    \draw [>=stealth] (q0) -- (q2);
    \draw [>=stealth] (q1) -- (q2);
    \draw [>=stealth] (q3) -- (q2);
  \end{tikzpicture}

  \vspace{4mm}
  \begin{tabular}{cc}
    \begin{minipage}{0.45\hsize}
      \lstinputlisting[language=target, basicstyle=\ttfamily\scriptsize]{./example/example-src.qasm}
    \end{minipage}
    \begin{minipage}{0.55\hsize}
      \lstinputlisting[language=target, basicstyle=\ttfamily\scriptsize]{./example/example-tgt.qasm}
    \end{minipage}
  \end{tabular}

  \caption{
    An example of the qubit allocation algorithm.
    On the coupling graph displayed above,
    it transforms the source program (left) into the target program (right).
  }
  \label{fig:example-alg}
\end{figure}

Our algorithm works in three steps:
(1) it assigns a subgraph of the coupling graph to each function (\textsc{AssignSubgraph}),
(2) it solves the qubit allocation problem on each function independently (\textsc{QubitAllocFunc}), and
(3) it solve the qubit allocation problem for the main expression $e$ (\textsc{QubitAllocExp}).
In the following subsections, we describe each procedure in detail.

%

\begin{algorithm}[tb]
  \caption{Type-based Qubit Allocation}
  \label{alg:qalloc}
  \begin{algorithmic}[1]
    \Require{$\Delta$ : the derivation tree of a source program, \quad $G$ : a coupling graph}
    \Function{QubitAlloc}{$\Delta, G$}
      \State
        \AxiomC{$\Theta  \vdash  D$}\AxiomC{$\Theta  \pipe  N  \pipe   \emptyset   \vdash  e  \languagessym{:}  T$}
        \BinaryInfC{$N  \vdash   \braket{  D  ,  e  } $}\DisplayProof
        $\gets \Delta$
      \State $S \gets$ \Call{AssignSubgraphs}{$\Theta, G$} \label{lst:line:assign-subgraphs}
      \State $\Theta', D' \gets$ \Call{QubitAllocFunc}{$\Theta, D, S$}
      \State $\Gamma \gets \{\languagesmv{x_{\languagesmv{i}}} :  \texttt{q}( \alpha_{\languagesmv{i}} )  \pipe \alpha_{\languagesmv{i}} \in V(G)\}$, where $\languagesmv{x_{\languagesmv{i}}}$ are fresh
      \State $e' \gets$ \Call{QubitAllocExp}{$e, \Theta', E(G), \emptyset, \Gamma$}
      \State \Return $\Theta', D', e'$
    \EndFunction
  \end{algorithmic}
\end{algorithm}

\subsection{Assignment of a Subgraph to Each Function}\label{subsec:assignment}

Consider the case that a function $f$ calls another function $g$.
As shown in \cref{fig:example-alg},
the function $f$ have to insert swap gates to satisfy the connectivity constraints of $g$.
In other words, the graph composed of the arguments of $f$ has a subgraph isomorphic to
the graph of the arguments of $g$.

Our algorithm assigns a subgraph of a coupling graph to each function.
We can regard an assigned subgraph as a workspace of the function,
that is, the graph composed of the function's arguments.
We have to apply the CNOT gate and the swap gate in the subgraph.

As one possible assignment, we define \textsc{AssignSubgraph} in \cref{alg:assign-subgraphs}.
First, we construct a sequence of graphs $G_n \supseteq \dots \supseteq G_1$ by \textsc{ConstructSubgraphs},
which uses a heauristic and greedy algorithm that repeatedly removes a non-articulation point with the minimum degree (\textsc{NonArticulationPointWithMinDeg}).
The reason why we adopt this strategy is that it may increase the number of qubit pairs to which the swap gate can be applied.
Moreover, keeping subgraphs connected makes the algorithm simple
because we can always move a qubit to a node adjacent to any other qubit in a connected graph.
At last, \textsc{AssignSubgraphs} assigns $G_k$ to functions in which the number of qubits used is $k$.
This assignment ensures that all functions meet the arguments' requirements by inserting swap gates correctly.

\textbf{Complexity Analysis:}
The algorithm \textsc{ConstructSubgraph} calls the algorithm \textsc{NonArticulationPointWithMinDeg} in every iteration of the for-loop
which searches a non-articulation point with the minimum degree.
There is an implementation of \textsc{NonArticulationPointWithMinDeg} working in $O(V + E)$~\cite{hopcroftAlgorithm447Efficient1973}.
Thus, the time complexity of this step is $O(V(V + E))$.
The space complexity is $O(V + E)$ because it is sufficient to remember the differences between $G_i$ and $G_{i+1}$.

\begin{algorithm}[tb]
  \caption{Subgraph Assignments}
  \label{alg:assign-subgraphs}
  \begin{algorithmic}[1]
    \Function{AssignSubgraphs}{$\Theta, G$}
      \State $G_1, \dots, G_n \gets$ \Call{ConstructSubgraphs}{$G$}
      \State \Return $\{f \mapsto G_{N + |T_{{\mathrm{1}}}|} \pipe \forall \languagesmv{f}  \languagessym{:}   T_{{\mathrm{1}}}  \xrightarrow{ N }  T_{{\mathrm{2}}}  \in \Theta\}$
    \EndFunction

    \Function{ConstructSubgraphs}{$G$}
      \State $G_s \gets \bullet$
      \State $G_n \gets G$
      \State \texttt{/*} $d(v)$ is degrees of a vertex $v$. \texttt{*/}
      \For{$i \gets n$ downto $1$}
        \State $u \gets$ \Call{NonArticulationPointWithMinDeg}{$G_i$}
        \For{$w \in \mathop{\mathrm{adj}}(u)$}
          \State $d(w) \gets d(w) - 1$
        \EndFor
        \State $G_{i - 1} \gets G_i \backslash \{u\}$
        \State $G_s \gets G_s, G_i$
      \EndFor
      \State \Return $G_s$
    \EndFunction
  \end{algorithmic}
\end{algorithm}

\subsection{Qubit Allocation for a Function} 
In the next step, \textsc{QubitAllocFunc} translates well-typed function definitions $d_1, \dots, d_n$ (\cref{alg:qalloc-func}).
\textsc{QubitAllocFunc} reconstructs a function type environment $\Theta$ and function definitions $\languagesmv{d_{{\mathrm{1}}}}  \languagessym{,} \, .. \, \languagessym{,}  \languagesmv{d_{\languagesmv{n}}}$ in this order.

Unlike the source language, the target language cannot dynamically initialize and discard a qubit.
Hence, we extend the argument and the return value of a function with qubits initialized or discarded in the source language.
Moreover, we also extend the function type with the connectivity constraints $\Phi$, which \textsc{AssignSubgraph} assigns.

Next, we add the updated function type to the type environment, the result of \textsc{QubitAllocFunc} on previously defined functions.
\textsc{QubitAllocExp} (\cref{alg:qalloc-exp}) translates the body expression under the transformed environment, as explained in the next section.

\begin{algorithm}[tbp]
  \caption{Qubit Allocation Algorithm for Functions}
  \label{alg:qalloc-func}
  \begin{algorithmic}[1]
    \Function{QubitAllocFunc}{$\Theta, D, S$}
      \If{$\Theta = \bullet \land D = \bullet$}
        \State \Return $\bullet, \bullet$
      \ElsIf{$\Theta = \Theta', \languagesmv{f}  \languagessym{:}   \tau_{{\mathrm{1}}}  \languagessym{*} \, .. \, \languagessym{*}  \tau_{\languagesmv{n}}  \xrightarrow{ N }  T  \land D = D'  \languagessym{,}   \languagesmv{f}  \mapsto ( \languagesmv{x_{{\mathrm{1}}}}  \languagessym{,} \, .. \, \languagessym{,}  \languagesmv{x_{\languagesmv{n}}} )  e $}
        \State $\Theta_{{\mathrm{1}}}, D_{{\mathrm{1}}} \gets$ \Call{QubitAllocFunc}{$\Theta', D', S$}
        \State Generate fresh qidx variables $\alpha_{{\mathrm{1}}}, \dots, \alpha_N$
        \State $G_f \gets S(f); ( \overline{ \alpha' } , \Phi) \gets G_f$~~~($ \overline{ \alpha' }  = \alpha'_{{\mathrm{1}}}  \languagessym{,} \, .. \, \languagessym{,}  \alpha'_{\languagesmv{n}}$)
        \State $\Gamma_{{\mathrm{1}}} \gets \{ \languagesmv{x_{\languagesmv{i}}}  \languagessym{:}   \texttt{q}( \alpha'_{\languagesmv{i}} )  \pipe 1 \leq i \leq n \}$
        \State $\Gamma_{{\mathrm{2}}} \gets \{ \languagesmv{y_{\languagesmv{i}}}  \languagessym{:}   \texttt{q}( \beta_{\languagesmv{i}} )  \pipe 1 \leq i \leq N \land \languagesmv{y_{\languagesmv{i}}}, \beta_{\languagesmv{i}}\ \text{are fresh} \}$
        \State $T' \gets  \texttt{q}( \alpha'_{{\mathrm{1}}} )   \languagessym{*} \, .. \, \languagessym{*}   \texttt{q}( \alpha'_{\languagesmv{n}} )   \languagessym{*}   \texttt{q}( \beta_{{\mathrm{1}}} )   \languagessym{*} \, .. \, \languagessym{*}   \texttt{q}( \beta_{\languagesmv{N}} ) $.
        \State $ \overline{ \alpha }  \gets  \mathit{QV}( T' ) $
        \State $\Theta_{{\mathrm{2}}} \gets \Theta_{{\mathrm{1}}}  \languagessym{,}  \languagesmv{f}  \languagessym{:}   \forall   \overline{ \alpha }   .   \Phi  \Rightarrow  T'  \rightarrow  T'  $
        \State $e' \gets$ \Call{QubitAllocExp}{$e, \Theta_{{\mathrm{2}}}, \Phi, \Gamma_{{\mathrm{1}}}, \Gamma_{{\mathrm{2}}}$}
        \State $D_{{\mathrm{2}}} \gets D_{{\mathrm{1}}}  \languagessym{,}   \languagesmv{f}  \mapsto ( \languagesmv{x_{{\mathrm{1}}}}  \languagessym{,} \, .. \, \languagessym{,}  \languagesmv{x_{\languagesmv{n}}}  \languagessym{,}  \languagesmv{y_{{\mathrm{1}}}}  \languagessym{,} \, .. \, \languagessym{,}  \languagesmv{y_{\languagesmv{k}}} )  e' $
        \State \Return $\Theta_{{\mathrm{2}}}, D_{{\mathrm{2}}}$
      \EndIf
    \EndFunction
  \end{algorithmic}
\end{algorithm}

\subsection{Qubit Allocation for an Expression} 
\textsc{QubitAllocExp} is the central part of the algorithm.
This algorithm takes a source expression and the target language's contexts and returns a target expression.
Under the given contexts, the result of \textsc{QubitAllocExp} is well-typed.
We explain briefly how \textsc{QubitAllocExp} works by case analysis.

In the case of returning values, the algorithm adds qubits to the return values which do not appear in the source type environment.
If the expression is the body expression of a function, the algorithm may insert swap gates before returning values so that the return type is consistent with the type of the function.

In the case of initializing a qubit, the algorithm consumes an unused qubit in $\Gamma_{{\mathrm{2}}}$
and transforms the subsequent expression where $z$ is substituted for $x$.
Here, $[z / x]\Delta$ replaces $x$ with $z$ in the type environments and the expressions of $\Delta$.
This substitution is a type-preserving operation because if the typing derivation tree $\Delta$ is derivable,
$[z / x]\Delta$ is also derivable.

The discarding expression corresponds to the initialization expression in the target language.
Hence, the algorithm changes $\languageskw{discard} \, \languagesmv{x}  \languagessym{;}  e$ to $\languageskw{init} \, \languagesmv{x}  \languagessym{;}  e'$,
where $e'$ is the result of $\textsc{QubitAllocExp}$ with $e$.
The algorithm also moves $x$ to $\Gamma_{{\mathrm{2}}}$, whose variables are unused.

The case of calling a function is critical.
When calling a function $f$, we have to solve two subproblems:
(1) finding a subgraph of the current workspace isomorphic to $\Phi_f$, and
(2) moving the arguments to the isomorphic subgraph.
The problem (1) is called \emph{Subgraph Isomorphism Problem}, which is a well-known NP-complete problem.
From the way to construct subgraphs in step 1, we can immediately find one of the subgraphs
isomorphic to $\Phi_f$ from the current graph.
Of course, we can choose another subgraph if we want a more efficient solution, but we use the trivial subgraph currently.
The problem (2) is called \emph{Token Swapping Problem}~\cite{yamanakaSwappingLabeledTokens2014}, which is proved to be NP-hard.
We adopt a 4-approximation algorithm~\cite{miltzowApproximationHardnessToken2016} for Token Swapping Problem
so that our algorithm works in polynomial time.
We emphasize that \textsc{SubgraphIsomorphism} and \textsc{TokenSwapping} always successes
because of the way of assigning a subgraph to each function as explained in \cref{subsec:assignment}.
After solving these subproblems, \textsc{QubitAlloc} inserts swap gates by using the result of the token swapping problem and transforms contexts.
The algorithm has to give additional qubits to the functions because functions have additional arguments and return values, as explained in the previous section.
We can use any variables as free qubits which appear in $\Gamma_{{\mathrm{2}}}$
because \textsc{QubitAllocExp} keeps the condition that no used qubits appear in the $\Gamma_{{\mathrm{2}}}$.

We give the definitions of the algorithms for the Subgraph Isomorphism problem and the Token Swapping problem.
We do not give the implementation of these algorithms but assume that these algorithms are implemented correctly.

\begin{definition}{(Subgraph Isomorphism)}
  The algorithm \textsc{SubgraphIsomorphism} takes two graphs $G_1$ and $G_2$,
  and outputs an embbeding injection $\phi : G_2 \rightarrow G_1$
  such that $\phi (G_2) \simeq G_2$
  if $G_1$ has a subgraph isomorphic to $G_2$.
\end{definition}

\begin{definition}{(Token Swapping)}
  The algorithm \textsc{TokenSwapping} takes a graph $G$ and
  a partial injection $p : V(G) \rightarrow V(G)$,
  and returns a sequence of pairs of vertices $\Psi = (v_{11}, v_{12}), \dots, (v_{L1}, v_{L2})$
  such that $\forall i \in \{1, \dots, L\}.\ (v_{i1}, v_{i2}) \in G$,
  and $\forall v \in G.\ p(v) = \bot \lor p(v) = \sigma (v)$,
  where $\sigma = (v_{L1}\ v_{L2}) \circ \dots \circ (v_{11}\ v_{12})$ and $\bot$ denotes the undefined value.
\end{definition}

\textbf{Complexity Analysis:}
Let $M$ be the size of the given program. 
The most time-consuming case is a function call, where \textsc{SubgraphIsomorphism} and \textsc{TokenSwapping} are invoked.
If the time complexities of \textsc{SubgraphIsomorphism} and \textsc{TokenSwapping} are $O(r)$ and $O(t)$ respectively,
then $\textsc{QubitAllocExp}$ works in $O(M(s + r))$.
In our settings, there is an implementation of \textsc{SubgraphIsomorphism} working in $O(1)$.
We can also implement the 4-approximation algorithm \textsc{TokenSwapping} running in $O(V^3)$, where $V$ is the size of nodes of a coupling graph.

\begin{algorithm}[tbp]
  \caption{Qubit Allocation for Expressions}
  \label{alg:qalloc-exp}
  \begin{algorithmic}[1]
    \Function{QubitAllocExp}{$e, \Theta, \Phi, \Gamma_{{\mathrm{1}}}, \Gamma_{{\mathrm{2}}}$}
      \If {$e \equiv \languagessym{(}  \languagesmv{x_{{\mathrm{1}}}}  \languagessym{,} \, .. \, \languagessym{,}  \languagesmv{x_{\languagesmv{n}}}  \languagessym{)}$}
        \State (The expected return type is $ \texttt{q}( \alpha_{{\mathrm{1}}} )   \languagessym{*} \, .. \, \languagessym{*}   \texttt{q}( \alpha_{\languagesmv{k}} ) $)
        \Assert{$\Gamma_{{\mathrm{1}}}, \Gamma_{{\mathrm{2}}} = \languagesmv{x_{{\mathrm{1}}}}  \languagessym{:}   \texttt{q}( \beta_{{\mathrm{1}}} )   \languagessym{,} \, .. \, \languagessym{,}  \languagesmv{x_{\languagesmv{k}}}  \languagessym{:}   \texttt{q}( \beta_{\languagesmv{k}} ) $}
        \State $\Psi \gets$ \Call{TokenSwapping}{$( \mathit{QV}( \Gamma_{{\mathrm{1}}} ) , \Phi), \{\alpha_{\languagesmv{i}} \mapsto \beta_{\languagesmv{i}}\}_{i=1}^k$}
        \State \Return \Call{InsertSwaps}{$\languagessym{(}  \languagesmv{x_{{\mathrm{1}}}}  \languagessym{,} \, .. \, \languagessym{,}  \languagesmv{x_{\languagesmv{k}}}  \languagessym{)}, \Gamma_{{\mathrm{1}}} \cup \Gamma_{{\mathrm{2}}}, \Psi$}

      \ElsIf {$e \equiv \languageskw{let} \, \languagesmv{x}  \languagessym{=} \, \languageskw{init} \, \languagessym{()} \, \languageskw{in} \, e'$}
        \State Take $\languagesmv{x'}  \languagessym{:}   \texttt{q}( \alpha )  \in \Gamma_{{\mathrm{2}}}$
        \State \Return \Call{QubitAllocExp}{$[x' / x]\Delta', \Theta, \Phi, \Gamma_{{\mathrm{1}}} \uplus \{\languagesmv{x'}  \languagessym{:}   \texttt{q}( \alpha ) \}, \Gamma_{{\mathrm{2}}} \backslash \languagesmv{x'}$}

      \ElsIf {$e \equiv \languageskw{discard} \, \languagesmv{x}  \languagessym{;}  e'$}
        \State $e' \gets$ \Call{QubitAllocExp}{$\Delta', \Theta, \Phi, \Gamma_{{\mathrm{1}}}, \Gamma_{{\mathrm{2}}}$}
        \State \Return $\languageskw{init} \, \languagesmv{x}  \languagessym{;}  e'$

      \ElsIf {$e \equiv \languageskw{let} \, \languagessym{(}  \languagesmv{x_{{\mathrm{1}}}}  \languagessym{,} \, .. \, \languagessym{,}  \languagesmv{x_{\languagesmv{n}}}  \languagessym{)}  \languagessym{=}  \languagesmv{f}  \languagessym{(}  \languagesmv{y_{{\mathrm{1}}}}  \languagessym{,} \, .. \, \languagessym{,}  \languagesmv{y_{\languagesmv{m}}}  \languagessym{)} \, \languageskw{in} \, e'$}
        \Assert{$\Theta  \languagessym{(}  \languagesmv{f}  \languagessym{)} =  \forall   \overline{ \alpha }   .   \Phi'  \Rightarrow  \tau_{{\mathrm{1}}}  \languagessym{*} \, .. \, \languagessym{*}  \tau_{\languagesmv{k}}  \rightarrow  \tau'_{{\mathrm{1}}}  \languagessym{*} \, .. \, \languagessym{*}  \tau'_{\languagesmv{k}}  $}
        \State $G \gets ( \mathit{QV}( \Gamma_{{\mathrm{1}}} ) , \Phi); G_f \gets ( \overline{ \alpha } , \Phi')$
        \State $\phi \gets$ \Call{SubgraphIsomorphism}{$G, G_f$}
        \State Take $y_{m+1} : \texttt{q}(\beta_{m+1}), \dots, y_k :  \texttt{q}( \beta_{\languagesmv{k}} )  \in \Gamma_{{\mathrm{2}}}$
        \Assert{$\tau_{{\mathrm{1}}} =  \texttt{q}( \alpha_{{\mathrm{1}}} )  \land \cdots \land \tau_{\languagesmv{k}} =  \texttt{q}( \alpha_{\languagesmv{k}} )  \land \languagesmv{y_{{\mathrm{1}}}}  \languagessym{:}   \texttt{q}( \beta_{{\mathrm{1}}} )   \languagessym{,} \, .. \, \languagessym{,}  \languagesmv{y_{\languagesmv{m}}}  \languagessym{:}   \texttt{q}( \beta_{\languagesmv{m}} )  \in \Gamma_{{\mathrm{1}}}$}
        \State $\Psi \gets$ \Call{TokenSwapping}{$G, \{\beta_{\languagesmv{i}} \mapsto \phi (\alpha_{\languagesmv{i}})\}_{i=1}^k$}
        \State $ \sigma _{ \alpha }  \gets \languagessym{[}   \phi ( \alpha_{{\mathrm{1}}} )   \slash  \alpha_{{\mathrm{1}}}  \languagessym{,} \, .. \, \languagessym{,}   \phi ( \alpha_{\languagesmv{k}} )   \slash  \alpha_{\languagesmv{k}}  \languagessym{]}$
        \State $\Gamma_{{\mathrm{3}}} \gets (\Psi(\Gamma_{{\mathrm{1}}}) \backslash \{\languagesmv{y_{{\mathrm{1}}}}  \languagessym{,} \, .. \, \languagessym{,}  \languagesmv{y_{\languagesmv{m}}}\}) \cup \{\languagesmv{x_{{\mathrm{1}}}}  \languagessym{:}   \sigma _{ \alpha }  \, \tau'_{{\mathrm{1}}}  \languagessym{,} \, .. \, \languagessym{,}  \languagesmv{x_{\languagesmv{n}}}  \languagessym{:}   \sigma _{ \alpha }  \, \tau'_{\languagesmv{n}}\}$
        \State $\Gamma_{{\mathrm{4}}} \gets (\Psi(\Gamma_{{\mathrm{2}}}) \backslash \{y_{m+1}, \dots, y_k\}) \cup \{x_{n+1} :  \sigma _{ \alpha } \tau_{n+1}', \dots, \languagesmv{x_{\languagesmv{k}}}  \languagessym{:}   \sigma _{ \alpha }  \, \tau'_{\languagesmv{k}}\}$, where $x_{n+1}, \dots, x_k$ are fresh variables.
        \State $e' \gets$ \Call{QubitAllocExp}{$e', \Theta, \Phi, \Gamma_{{\mathrm{3}}}, \Gamma_{{\mathrm{4}}}$}
        \State $e \gets \languageskw{let} \, \languagessym{(}  \languagesmv{x_{{\mathrm{1}}}}  \languagessym{,} \, .. \, \languagessym{,}  \languagesmv{x_{\languagesmv{k}}}  \languagessym{)}  \languagessym{=}  \languagesmv{f}  \languagessym{(}  \languagesmv{y_{{\mathrm{1}}}}  \languagessym{,} \, .. \, \languagessym{,}  \languagesmv{y_{\languagesmv{k}}}  \languagessym{)} \, \languageskw{in} \, e'$
        \State \Return \Call{InsertSwaps}{$e, \Gamma_{{\mathrm{1}}} \cup \Gamma_{{\mathrm{2}}}, \Psi$}
      \EndIf
      \State (The other cases are omit due to limitations of space.)
      \State 
    \EndFunction

    \Function{InsertSwaps}{$e, \Gamma, \Psi$}
      \If{$\Psi = \epsilon$}
      \State \Return $e$
      \ElsIf{$\Psi = (\alpha_{{\mathrm{1}}}, \alpha_{{\mathrm{2}}}), \Psi'$}
      \State $e' \gets$ \Call{InsertSwaps}{$e, (\alpha_{{\mathrm{1}}}\ \alpha_{{\mathrm{2}}})(\Gamma), \Psi'$}
      \State Find $\languagesmv{x_{{\mathrm{1}}}}  \languagessym{:}   \texttt{q}( \alpha_{{\mathrm{1}}} )   \languagessym{,}  \languagesmv{x_{{\mathrm{2}}}}  \languagessym{:}   \texttt{q}( \alpha_{{\mathrm{2}}} )  \in \Gamma_{\languagesmv{i}}$
      \State \Return $\languageskw{let} \, \languagessym{(}  \languagesmv{x_{{\mathrm{2}}}}  \languagessym{,}  \languagesmv{x_{{\mathrm{1}}}}  \languagessym{)}  \languagessym{=} \, \languageskw{swap} \, \languagessym{(}  \languagesmv{x_{{\mathrm{1}}}}  \languagessym{,}  \languagesmv{x_{{\mathrm{2}}}}  \languagessym{)} \, \languageskw{in} \, e'$
      \EndIf
    \EndFunction
  \end{algorithmic}
\end{algorithm}

\subsection{Correctness of the Algorithm}
To conclude this chapter, we prove the correctness of the qubit allocation algorithm, that is,
the algorithm translates a well-typed source program into a well-typed target program.

\begin{theorem}{(Type preservation)}\label{thm:alg-type-preserve}
  Let $G$ be a coupling graph.
  Suppose that $N  \vdash   \braket{  D  ,  e  } $, its derivation tree is $\Delta$ and $N \leq |V(G)|$.
  Then $\textsc{QubitAlloc}(\Delta, G)$ is derivable.
\end{theorem}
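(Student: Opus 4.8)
The plan is to decompose \textsc{QubitAlloc} according to its three phases, prove a typing invariant for each, and compose them through the target rule \textsc{T-Prog} (so it suffices to show that \textsc{QubitAllocFunc} returns $\Theta', D'$ with $\Theta' \vdash D'$ and that \textsc{QubitAllocExp} on the entry point returns $e'$ with $\Theta' \pipe E(G) \pipe \Gamma \vdash e' : T'$ for some $T'$). The technical core is a lemma for \textsc{QubitAllocExp}: whenever $\Theta_s \pipe N \pipe \Gamma_s \vdash e : T_s$ is derivable with subtree $\Delta$, and the target inputs $(\Theta_t, \Phi, \Gamma_1, \Gamma_2)$ satisfy a compatibility invariant $\mathcal{I}$, the call $\textsc{QubitAllocExp}(e, \Theta_t, \Phi, \Gamma_1, \Gamma_2)$ returns $e'$ with $\Theta_t \pipe \Phi \pipe \Gamma_1, \Gamma_2 \vdash e' : T_t$, where $T_t$ is the target tuple type determined by $\mathcal{I}$ (the qidx-reindexed version of $T_s$, padded by the qidxs of the residual free qubits). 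The invariant $\mathcal{I}$ must record: (i) $\dom(\Gamma_1) = \dom(\Gamma_s)$ and every binding of $\Gamma_1$ has the form $\texttt{q}(\alpha)$; (ii) $\dom(\Gamma_1) \cap \dom(\Gamma_2) = \emptyset$ and all variables are distinct; (iii) $(\mathit{QV}(\Gamma_1, \Gamma_2), \Phi)$ is exactly the connected subgraph assigned by \textsc{AssignSubgraphs} to the enclosing function (or $G$ itself, for the entry point), so in particular $\Phi$ is its edge set; (iv) $|\Gamma_2| = N + \delta$, where $\delta \ge 0$ is the constant offset $|V(G_f)| - (\text{initial arity} + \text{initial counter})$ of the current body (which is $0$ for a function body, since \textsc{AssignSubgraphs} gives $f$ a graph on $m_f + N_f$ nodes, and $|V(G)| - N \ge 0$ for the entry point); and (v) $\Theta_t$ is the systematic allocation of $\Theta_s$ produced by \textsc{QubitAllocFunc}. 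I would prove the lemma by induction on $\Delta$, one case per source typing rule, checking both that the algorithm does not fail (the relevant assertions hold) and that its output is well typed.

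The cases split as follows. \textsc{T-Return} ($e = (x_1,\dots,x_n)$): the invariant forces $\Gamma_1, \Gamma_2 = x_1{:}\texttt{q}(\beta_1),\dots,x_k{:}\texttt{q}(\beta_k)$ with $k = n + |\Gamma_2|$; the algorithm routes these qidxs onto the $k$ qidxs demanded by $T_t$ via \textsc{TokenSwapping} on the connected workspace graph (which succeeds by connectivity, clause (iii)) and then \textsc{InsertSwaps}. Here I would prove an auxiliary lemma: $\textsc{InsertSwaps}(e, \Gamma, \Psi)$ is typed as $T$ under $\Gamma$ whenever $e$ is typed as $T$ under $\sigma_\Psi(\Gamma)$ and every pair in $\Psi$ is a $\Phi$-edge — the latter guaranteed by the contract of \textsc{TokenSwapping} and clause (iii), with $\sigma_\Psi$ the permutation realized by the swap sequence (matching, up to inversion, the partial injection fed to \textsc{TokenSwapping}). \textsc{T-Init} uses $N \ge 1$ together with $|\Gamma_2| = N + \delta \ge 1$ to pick $x'{:}\texttt{q}(\alpha) \in \Gamma_2$, applies the renaming property stated in the description of \textsc{QubitAllocExp} to $[x'/x]\Delta'$, and recurses with $x'$ moved into $\Gamma_1$; \textsc{T-Discard} is the mirror image, emitting $\mathbf{init}\,x; e'$ via the target \textsc{T-Init} rule with $x$ moved into $\Gamma_2$; \textsc{T-If} recurses on both branches with the contexts unchanged; \textsc{T-Let} splits $\Gamma_2$ according to the source context split and hands the residual pool (together with the qubits $e_1'$ returns beyond its $n$ real results) to the continuation, checking $\mathcal{I}$ survives. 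The decisive case is \textsc{T-Call}: success of \textsc{SubgraphIsomorphism} follows because the callee's graph $G_f = G_{m_f + N_f}$ and the caller's workspace $G_K$ (with $K = |\Gamma_1| + |\Gamma_2|$) both lie in the nested chain $G_1 \subseteq \dots \subseteq G_n$ built by \textsc{ConstructSubgraphs}, and $K = |\dom(\Gamma_s)| + (N + \delta) \ge m_f + N_f$ by the side condition $N \ge N'$ of \textsc{T-Call} and $|\dom(\Gamma_s)| \ge m_f$, so $G_f \subseteq G_K$ and the (trivial) embedding $\phi$ exists; success of \textsc{TokenSwapping} follows again from connectivity. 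For the emitted $\mathbf{let}\,(\overline{x}) = f(\overline{y})\,\mathbf{in}\,e'$ wrapped by \textsc{InsertSwaps} one checks the premises of target \textsc{T-Call}: the substitution $\sigma_\alpha = [\,\overline{\phi(\alpha)}/\overline{\alpha}\,]$ maps $\Phi_f = E(G_f)$ into $\Phi$ because $\phi$ is a graph embedding, sends each formal type $\texttt{q}(\alpha_i)$ to $\texttt{q}(\phi(\alpha_i))$, which after the routing swaps is exactly the type of $y_i$, and the continuation is well typed by the induction hypothesis on the source continuation under $\Gamma_3, \Gamma_4$; the \textsc{InsertSwaps} lemma absorbs the prepended swaps. \textsc{T-Cnot} and the unitary-application cases are like \textsc{T-Call} but only need a routing step making the two gate arguments adjacent, again possible by connectivity.

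Phase two, \textsc{QubitAllocFunc}, is an induction on the list of definitions using the target \textsc{T-FunDef} rule (and the rule forming well-typed sequences of definitions): for $f{:}\,T_1 \xrightarrow{N} T_2$ with body $e$, the source subtree gives $\Theta_s, f{:}\cdots \pipe N \pipe x_1{:}\tau_1,\dots,x_m{:}\tau_m \vdash e : T_2$, the freshly built $\Gamma_1$ (the $m$ arguments, indexed by the node set of $S(f)$) and $\Gamma_2$ ($N$ fresh free qubits) satisfy $\mathcal{I}$ with $\delta = 0$ because $|V(S(f))| = m + N$ and $S(f)$ is connected, so the \textsc{QubitAllocExp} lemma produces $e'$ with $\Theta_2 \pipe \Phi \pipe \Gamma_1, \Gamma_2 \vdash e' : T'$ and $T'$ equal to the declared target arg/return type $\texttt{q}(\alpha'_1) * \cdots * \texttt{q}(\alpha'_m) * \texttt{q}(\beta_1) * \cdots * \texttt{q}(\beta_N)$ (the return case reorders all $m + N$ present qubits onto precisely these qidxs); the target's support for polymorphic recursion is what lets the self-reference in $e$ type-check against $\Theta_2(f)$, and the induction hypothesis handles the already-translated prefix $\Theta_1 \vdash D_1$. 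Phase three is immediate: $\Gamma = \{x_i{:}\texttt{q}(\alpha_i) \mid \alpha_i \in V(G)\}$ has $|\Gamma| = |V(G)| \ge N$, so $(\emptyset, \Gamma)$ satisfies $\mathcal{I}$ with $\delta = |V(G)| - N \ge 0$ and $G$ (assumed connected) as the workspace, and the \textsc{QubitAllocExp} lemma yields $\Theta' \pipe E(G) \pipe \Gamma \vdash e' : T'$; combining with $\Theta' \vdash D'$ through \textsc{T-Prog} gives $E(G) \pipe \Gamma \vdash \braket{D', e'}$, i.e., $\textsc{QubitAlloc}(\Delta, G)$ is derivable. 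Note that this argument uses the hypothesis $N \le |V(G)|$ only to make $\delta \ge 0$; unlike the conjectured semantic-preservation property, it does not need \cref{thm:src-soundness}.

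The step I expect to be the real obstacle is pinning down the invariant $\mathcal{I}$ so that it is simultaneously preserved by every case of \textsc{QubitAllocExp} — the context-splitting \textsc{T-Let} case and the reshuffling of $\Gamma_1, \Gamma_2$ in \textsc{T-Call} being the worst offenders — and yet strong enough to discharge the success of \textsc{SubgraphIsomorphism} and \textsc{TokenSwapping} (connectivity plus the size bound $K \ge m_f + N_f$) and the hypotheses of the \textsc{InsertSwaps} auxiliary lemma. Getting the bookkeeping of clause (iv) right across \textsc{T-Let}, where the source counter is redistributed between the two premises and the residual free qubits returned by the first subexpression flow into the second, is the part most likely to require care; a secondary subtlety is ensuring $m_f + N_f \le |V(G)|$ for every (in particular, unreachable) function so that $G_{m_f+N_f}$ is defined, which follows from the \textsc{T-Call} side conditions and $N \le |V(G)|$ when every function is reachable from the entry point and should otherwise be assumed.
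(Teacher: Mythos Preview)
Your proposal is correct and matches the paper's approach: an \textsc{InsertSwaps} lemma, a \textsc{QubitAllocExp} lemma proved by induction on the source derivation under a compatibility invariant between source and target contexts, and a \textsc{QubitAllocFunc} lemma by induction on the definition list, all composed via \textsc{T-Prog}. The only notable difference is that the paper states your clause~(iii) more abstractly---requiring that $(\mathit{QV}(\Gamma_1,\Gamma_2),\Phi)$ be connected and contain an isomorphic copy of each callee's constraint graph, rather than naming a specific $G_K$ in the chain---and records only the \emph{size} of the return type in the conclusion of the \textsc{QubitAllocExp} lemma rather than its precise qidx content; the abstract form is what one actually needs in the \textsc{T-Let} case, where recursing into $e_1$ with the restricted $\Gamma'_1\subseteq\Gamma_1$ need not land on one of the $G_K$.
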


\begin{proof}
  Please see \cref{sec:appendix-alg}.
\end{proof}

\cref{thm:alg-type-preserve} states that the result of \textsc{QubitAllocExp} always satisfies connectivity constraints,
and thus it is sufficient to write a well-typed program in the source language.

Finally, we conjecture that our algorithm preserves the semantics of a given program.


\begin{definition}
  Density operators $\rho_1, \rho_2$ are \emph{isomorphic}
  if $\rho_1$ equals $\rho_2$ ignoring their labels and the order of tensor products.
  We write $\rho_1 \simeq \rho_2$ for isomorphic density operators $\rho_1, \rho_2$.
\end{definition}
For example, $\ketbra{01}{01}{x_1,x_2} \simeq \ketbra{10}{10}{x_2, x_1} \simeq \ketbra{10}{10}{y_1, y_2}$.

\begin{conjecture}{(Semantics preservation)}
  Suppose that $N  \vdash   \braket{  D  ,  e  } $ and its derivation tree is $\Delta$.
  Let $G$ be a coupling graph such that $N \leq |V(G)|$.
  We put $\Theta', D', e' \coloneqq \textsc{QubitAlloc}(\Delta, G)$.
  Let $\qclosure{X_1, \rho_1, e}, \qclosure{\rho_1', e'}$ be runtime states such that $\rho_1 \simeq \rho_1'$.
  If $\qclosure{X_1, \rho_1, e}  \rightarrow _{ D } ^* \qclosure{X_2, \rho_2, \languagessym{(}  \languagesmv{x_{{\mathrm{1}}}}  \languagessym{,} \, .. \, \languagessym{,}  \languagesmv{x_{\languagesmv{n}}}  \languagessym{)}}$,
  then there exists $\qclosure{\rho_2', \languagessym{(}  \languagesmv{y_{{\mathrm{1}}}}  \languagessym{,} \, .. \, \languagessym{,}  \languagesmv{y_{\languagesmv{m}}}  \languagessym{)}}$ such that $\qclosure{\rho_1', e'}  \rightarrow _{D', G}^* \qclosure{\rho_2', \languagessym{(}  \languagesmv{y_{{\mathrm{1}}}}  \languagessym{,} \, .. \, \languagessym{,}  \languagesmv{y_{\languagesmv{m}}}  \languagessym{)}}$ and $\rho_2 \simeq \rho_2'$.
  The opposite is also true.
\end{conjecture}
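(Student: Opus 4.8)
The plan is to set up a simulation relation $\mathcal{R}$ between source runtime states $\qclosure{X, \rho, e_1}$ and target runtime states $\qclosure{\rho', e_1'}$ and to show that it is preserved, up to stuttering, by reduction on both sides; the termination hypothesis on the source run then turns the two one-sided simulations into the claimed equivalence. A related pair will additionally carry the bookkeeping data that \textsc{QubitAllocExp} threads (the target contexts $\Gamma_1, \Gamma_2$, the constraint $\Phi$, the transformed $\Theta'$), and $\mathcal{R}$ will record three things: (i) $e_1'$ is the output of \textsc{QubitAllocExp} on the residual derivation for $e_1$ under $\Gamma_1, \Gamma_2$ (so in particular, for a function body, $e_1'$ is produced by \textsc{QubitAllocFunc}); (ii) a bijection between the live source qubits (those in $\dom(\Gamma_1)$ on the target side, or the non-discarded labels of $\rho$ on the source side) and the target positions currently holding them, under which the ``live part'' of $\rho$ and of $\rho'$ agree up to relabelling and tensor reordering; and (iii) both $\rho$ and $\rho'$ factor as that live part tensored with the all-zero pure state on the remaining positions, with no entanglement across the tensor (discarded qubits, and the $\Gamma_2$-qubits, are reset to $\ket{0}$). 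Invariant (iii) is what makes the translated \textbf{discard} (an \textsc{E-Init} in the target) and the erased \textbf{init} (no target instruction at all, since the fresh $\Gamma_2$-qubit is already $\ket{0}$) semantically transparent; invariant (ii) is the part that a naive use of $\simeq$ would drop, and it is exactly what we need to predict the effect of later gates. I will read the $\simeq$ in the statement as $\simeq$ up to padding with such $\ket{0}$ ancillas.

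For the base case, the pair $\qclosure{X_1, \rho_1, e} \mathrel{\mathcal{R}} \qclosure{\rho_1', e'}$ holds with $\Gamma_1 = \emptyset$ and $\Gamma_2$ the full register $\{x_i : \texttt{q}(\alpha_i) \mid \alpha_i \in V(G)\}$, since \textsc{QubitAlloc} invokes \textsc{QubitAllocExp} on $e$ with precisely these contexts and $\rho_1 \simeq \rho_1'$ gives the decompositions in (ii)--(iii). The forward simulation lemma states: if $\qclosure{X, \rho, e_1} \mathrel{\mathcal{R}} \qclosure{\rho', e_1'}$ and $\qclosure{X, \rho, e_1} \rightarrow_D \qclosure{X_2, \rho_2, e_2}$, then $\qclosure{\rho', e_1'} \rightarrow_{D',G}^{*} \qclosure{\rho_2', e_2'}$ for some state with $\qclosure{X_2, \rho_2, e_2} \mathrel{\mathcal{R}} \qclosure{\rho_2', e_2'}$. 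The proof is a case analysis on the source rule against the matching \textsc{QubitAllocExp} clause: \textsc{E-IfTrue}/\textsc{E-IfFalse}, \textsc{E-Let1}, and \textsc{E-Let2} match one step for one step (the last needs $\mathcal{R}$ to be a congruence for \textbf{let}-contexts, and \textsc{E-Let1} needs stability under variable substitution); \textsc{E-Init} is matched by zero target steps, updating only the bijection and $\Gamma_1, \Gamma_2$; \textsc{E-Discard} is matched by one target \textsc{E-Init}; and \textsc{E-Cnot} and \textsc{E-Call} are matched by the block of \textsc{E-Swap} reductions emitted by \textsc{InsertSwaps} followed by one \textsc{E-Cnot} or \textsc{E-Call}. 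In the last two cases I will use correctness of \textsc{TokenSwapping} (the composed transposition sequence realises the requested partial injection, so the argument qubits land on positions that, by the assigned-subgraph invariant of \cref{subsec:assignment} and correctness of \textsc{SubgraphIsomorphism}, are adjacent in $G$), and the fact that $\mathit{SWAP}$ only permutes labels --- hence each inserted swap is invisible modulo $\simeq$ and merely updates the bijection in (ii). For \textsc{E-Call} I will simultaneously prove the companion claim that the \textsc{QubitAllocFunc}-transformed body of each $f$, together with the contexts the algorithm builds from $S(f)$, is $\mathcal{R}$-related to the source body; here \cref{thm:alg-type-preserve} is reused so that the inlined target body is well-typed and the reductions make sense.

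The backward simulation lemma is symmetric: a target step from an $\mathcal{R}$-related state is either an \textsc{E-Swap} emitted by \textsc{InsertSwaps} --- matched by zero source steps, re-establishing $\mathcal{R}$ after updating the bijection --- or is matched by exactly one source step (with the \textsc{E-Init}$\leftrightarrow$\textbf{discard} pairing, and with erased \textbf{init}s absorbed as bounded source-side stutter). Since \textsc{QubitAllocExp} and \textsc{InsertSwaps} emit only finitely many swaps between consecutive ``visible'' instructions, only finitely many consecutive source-stuttering target steps can occur, so a \emph{terminating} target run is mirrored by a terminating source run. Assembling: from the hypothesised source run to a value $\qclosure{X_2, \rho_2, (x_1,\dots,x_n)}$, iterating the forward lemma yields $\qclosure{\rho_1', e'} \rightarrow_{D',G}^{*} \qclosure{\rho_2', e_2'}$ with the pair related; because the source component is a tuple value, (i) forces $e_2'$ to reduce (in a bounded number of further \textsc{E-Swap}/erased-\textbf{init} steps) to a tuple value $(y_1,\dots,y_m)$, and (ii)--(iii) give $\rho_2 \simeq \rho_2'$. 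The converse is the same argument driven by the backward lemma together with the termination observation. I expect the main obstacle to be the function-call case: it forces $\mathcal{R}$ to be phrased on open expressions/derivations and threaded compositionally through \textsc{QubitAllocFunc}, and re-establishing it after inlining on both sides requires reconciling the argument permutation from \textsc{TokenSwapping}, the embedding $\phi$ from \textsc{SubgraphIsomorphism}, the qidx substitution $\sigma_\alpha$, and the extra ancilla arguments and return values, all while preserving the unentangled-$\ket{0}$ invariant (iii) across a call that both consumes and produces qubits. The remaining cases are routine bookkeeping once the bijection discipline of (ii) is fixed up front.
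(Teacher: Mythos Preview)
The paper does not prove this statement: it is explicitly stated as a \emph{conjecture}, and the authors write that its proof ``is left as future work,'' offering only the single sentence ``Bisimulation relation between the source and target language may be a key to proving this conjecture.'' There is therefore no paper proof to compare against. Your proposal is not a reconstruction of an existing argument but an attempt to discharge an open problem the authors left unresolved.

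That said, your approach is precisely the one the authors gesture at: a stuttering (bi)simulation between source and target configurations, indexed by the bookkeeping data $(\Theta',\Phi,\Gamma_1,\Gamma_2)$ that \textsc{QubitAllocExp} threads, with an explicit bijection tracking where each live logical qubit currently sits and an invariant that the unused positions are disentangled $\ket{0}$'s. The case split you outline (swaps as label permutations invisible modulo $\simeq$; source \textbf{init} matched by zero target steps; source \textbf{discard} matched by a target \textsc{E-Init}; \textsc{E-Cnot}/\textsc{E-Call} matched by a finite \textsc{InsertSwaps} prefix plus one step) is the natural one, and your identification of the function-call case as the crux is right: closing the diagram there requires commuting the variable substitution $[\,y_i/y_i'\,]$ of \textsc{E-Call} past the qidx substitution $\sigma_\alpha$ and the token-swapping permutation, and simultaneously showing that the extra ancilla arguments you thread through the transformed $f$ really are in the $\ket{0}$ part of the tensor so that invariant (iii) survives the call. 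None of this is in the paper, so if you carry it out you are going well beyond what the authors supply.

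Two cautions. First, your reading of $\simeq$ ``up to padding with $\ket{0}$ ancillas'' is a strengthening of the paper's Definition of isomorphic density operators, which speaks only of relabelling and reordering tensor factors; you should either argue that the dimensions already match (the source state $\rho$ carries labels for both $\dom(\Gamma)$ and $X$, and the hypothesis $\rho_1\simeq\rho_1'$ forces $|X_1|=|V(G)|$), or make the padding explicit as part of your relation $\mathcal{R}$ rather than redefining $\simeq$. Second, the paper's pseudocode for the \textbf{discard} case does not visibly move $x$ from $\Gamma_1$ to $\Gamma_2$ (the prose says it does, the code does not), so when you formalise invariant (i) you will have to work from the intended algorithm rather than the displayed one; be explicit about this discrepancy.
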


Bisimulation relation between the source and target language may be a key to proving this conjecture.

\section{Related Work}\label{sec:related-work}

\paragraph{Type-Based Register Allocation.}
The idea of formalizing the process of qubit allocation by using type systems is influenced by Ohori's work~\cite{ohoriRegisterAllocationProof2004}.
He proposed a proof-theoretical framework for register allocation --- liveness analysis, inserting instructions, and optimization.
His framework is concerned with the number of registers, but not the locations of them
because the locations are not important in register allocation.
In contrast, qubit allocation is mainly concerned with the positions of qubits,
and thus the location-aware type system of the target language is suitable for qubit allocation.

\paragraph{Qubit Allocation.}
For the algorithm of qubit allocation, many studies have been conducted~\cite{siraichiQubitAllocation2018,siraichiQubitAllocationCombination2019,dengCodarContextualDurationAware2020,nishioExtractingSuccessIBM2020,muraliFormalConstraintbasedCompilation2019a,itokoOptimizationQuantumCircuit2020}.
Siraichi \etal proposed an optimal solution by dynamic programming and a heuristic solution by combining Subgraph Isomorphism and Token Swapping~\cite{siraichiQubitAllocationCombination2019}.
Nishio \etal developed a noise-aware compilation tool based on a heuristic search algorithm
for noisy quantum computers~\cite{nishioExtractingSuccessIBM2020}.
Deng \etal presented a context-sensitive and duration-aware remapping algorithm (\textsc{Codar}), which considers the parallelism of a program~\cite{dengCodarContextualDurationAware2020}.
Hietala \etal developed a verified optimizer for quantum circuits (\textsc{voqc}) by Coq
and implemented a simple qubit allocation algorithm~\cite{hietalaVerifiedOptimizerQuantum2021} in the framework.
The difference between theirs and ours is that they used a low-level quantum program, which does not have functions.
Furthermore, our framework currently is not concerned with the other constraints, such as quantum error.

\paragraph{Resource Aware Quantum Programming.}
Amy recently proposed metaQASM~\cite{amySizedTypesLowlevel2019},
an extension of the openQASM supporting the metaprogramming of quantum circuits.
MetaQASM uses \emph{sized types}~\cite{hughesProvingCorrectnessReactive1996} to analyze how many qubits a quantum circuit uses.
Our source language is quite similar to metaQASM for type-based resource analysis.
MetaQASM has a strong normalization property, while we can write infinite loops with recursive calls in our languages.


\section{Conclusion and Future Work}\label{sec:conclusion}

This paper presented a type-based framework of qubit allocation for a first-order quantum programming language.
The source language estimates the number of qubits required for a program to run by linear types.
The target language checks whether a given program satisfies connectivity constraints by qualified types.
The type-based qubit allocation algorithm translates a well-typed source program into a well-typed target program.
We also proved the type-safety of both languages and the type-preserving property of our algorithm.

We are trying to prove that the translation preserves the semantics of a given program.
We are also implementing our algorithm.
After we finish implementing the algorithm, we will find or make a benchmark for high-level quantum programs
because standard benchmarks rely on low-level quantum programs, such as~\cite{maslovReversibleLogicSynthesis}.
We shall also extend our language with higher-order functions.
Our previous work~\cite{wakizakaDependentTypeSystem2020} can verify connectivity constraints for higher-order programming languages,
while it does not take qubit allocation into account.
In higher-order programming languages, a function may be called with functions whose constraints are different,
and thus qubit allocation is made more complicated.
Beyond near-term quantum computers, we plan to develop a type-based compilation technique for fault-tolerant quantum computers.
The constraints in such computers are different from near-term quantum computers,
but we can nevertheless apply a type-based approach to those constraints similarly to this study.

\bibliographystyle{junsrt}
\bibliography{references}

\begin{thebibliography}{10}

\bibitem{qiskitteamIBMQXBackend}
{{IBM QX}} backend information.
\newblock \url{https://github.com/Qiskit/ibmq-device-information}.

\bibitem{siraichiQubitAllocation2018}
Marcos~Yukio Siraichi, Vin{\'i}cius~Fernandes dos Santos, Sylvain Collange, and
  Fernando Magno~Quintao Pereira.
\newblock Qubit allocation.
\newblock In {\em Proceedings of the 2018 {{International Symposium}} on {{Code
  Generation}} and {{Optimization}}}, {{CGO}} 2018, pp. 113--125, {New York,
  NY, USA}, February 2018. {Association for Computing Machinery}.

\bibitem{siraichiQubitAllocationCombination2019}
Marcos~Yukio Siraichi, Vin{\'i}cius~Fernandes dos Santos, Caroline Collange,
  and Fernando Magno~Quint{\~a}o Pereira.
\newblock Qubit allocation as a combination of subgraph isomorphism and token
  swapping.
\newblock {\em Proceedings of the ACM on Programming Languages}, Vol.~3, No.
  OOPSLA, pp. 120:1--120:29, October 2019.

\bibitem{dengCodarContextualDurationAware2020}
Haowei Deng, Yu~Zhang, and Quanxi Li.
\newblock Codar: A contextual duration-aware qubit mapping for various {{NISQ}}
  devices.
\newblock In {\em 2020 57th {{ACM}}/{{IEEE Design Automation Conference}}
  ({{DAC}})}, pp. 1--6, July 2020.

\bibitem{nishioExtractingSuccessIBM2020}
Shin Nishio, Yulu Pan, Takahiko Satoh, Hideharu Amano, and Rodney~Van Meter.
\newblock Extracting success from ibm's 20-qubit machines using error-aware
  compilation.
\newblock {\em ACM Journal on Emerging Technologies in Computing Systems},
  Vol.~16, No.~3, pp. 32:1--32:25, May 2020.

\bibitem{muraliFormalConstraintbasedCompilation2019a}
Prakash Murali, Ali {Javadi-Abhari}, Frederic~T. Chong, and Margaret Martonosi.
\newblock Formal constraint-based compilation for noisy intermediate-scale
  quantum systems.
\newblock {\em Microprocessors and Microsystems}, Vol.~66, pp. 102--112, April
  2019.

\bibitem{turnerOnceType1995}
David~N. Turner, Philip Wadler, and Christian Mossin.
\newblock Once upon a type.
\newblock In {\em Proceedings of the Seventh International Conference on
  {{Functional}} Programming Languages and Computer Architecture}, {{FPCA}}
  '95, pp. 1--11, {New York, NY, USA}, October 1995. {Association for Computing
  Machinery}.

\bibitem{jonesTheoryQualifiedTypes1992}
Mark~P. Jones.
\newblock A theory of qualified types.
\newblock In Bernd {Krieg-Br{\"u}ckner}, editor, {\em {{ESOP}} '92}, Lecture
  {{Notes}} in {{Computer Science}}, pp. 287--306, {Berlin, Heidelberg}, 1992.
  {Springer}.

\bibitem{nielsenQuantumComputationQuantum2011}
Michael~A. Nielsen and Isaac~L. Chuang.
\newblock {\em Quantum Computation and Quantum Information: 10th Anniversary
  Edition}.
\newblock {Cambridge University Press}, {USA}, tenth edition, 2011.

\bibitem{hopcroftAlgorithm447Efficient1973}
John Hopcroft and Robert Tarjan.
\newblock Algorithm 447: Efficient algorithms for graph manipulation.
\newblock {\em Communications of the ACM}, Vol.~16, No.~6, pp. 372--378,
  January 1973.

\bibitem{yamanakaSwappingLabeledTokens2014}
Katsuhisa Yamanaka, Erik~D. Demaine, Takehiro Ito, Jun Kawahara, Masashi
  Kiyomi, Yoshio Okamoto, Toshiki Saitoh, Akira Suzuki, Kei Uchizawa, and
  Takeaki Uno.
\newblock Swapping labeled tokens on graphs.
\newblock In Alfredo Ferro, Fabrizio Luccio, and Peter Widmayer, editors, {\em
  Fun with {{Algorithms}}}, Lecture {{Notes}} in {{Computer Science}}, pp.
  364--375, {Cham}, 2014. {Springer International Publishing}.

\bibitem{miltzowApproximationHardnessToken2016}
Tillmann Miltzow, Lothar Narins, Yoshio Okamoto, G{\"u}nter Rote, Antonis
  Thomas, and Takeaki Uno.
\newblock Approximation and hardness for token swapping.
\newblock {\em arXiv:1602.05150 [cs]}, August 2016.

\bibitem{ohoriRegisterAllocationProof2004}
Atsushi Ohori.
\newblock Register allocation by proof transformation.
\newblock {\em Science of Computer Programming}, Vol.~50, No.~1, pp. 161--187,
  March 2004.

\bibitem{itokoOptimizationQuantumCircuit2020}
Toshinari Itoko, Rudy Raymond, Takashi Imamichi, and Atsushi Matsuo.
\newblock Optimization of quantum circuit mapping using gate transformation and
  commutation.
\newblock {\em Integration}, Vol.~70, pp. 43--50, January 2020.

\bibitem{hietalaVerifiedOptimizerQuantum2021}
Kesha Hietala, Robert Rand, Shih-Han Hung, Xiaodi Wu, and Michael Hicks.
\newblock A verified optimizer for quantum circuits.
\newblock {\em Proceedings of the ACM on Programming Languages}, Vol.~5, No.
  POPL, pp. 37:1--37:29, January 2021.

\bibitem{amySizedTypesLowlevel2019}
Matthew Amy.
\newblock Sized types for low-level quantum metaprogramming.
\newblock {\em arXiv:1908.02644 [quant-ph]}, Vol. 11497, pp. 87--107, 2019.

\bibitem{hughesProvingCorrectnessReactive1996}
John Hughes, Lars Pareto, and Amr Sabry.
\newblock Proving the correctness of reactive systems using sized types.
\newblock In {\em Proceedings of the 23rd {{ACM SIGPLAN-SIGACT}} Symposium on
  {{Principles}} of Programming Languages}, {{POPL}} '96, pp. 410--423, {New
  York, NY, USA}, January 1996. {Association for Computing Machinery}.

\bibitem{maslovReversibleLogicSynthesis}
Dmitri Maslov.
\newblock Reversible logic synthesis benchmarks page.
\newblock \url{https://reversiblebenchmarks.github.io}.

\bibitem{wakizakaDependentTypeSystem2020}
Ryo Wakizaka and Atsushi Igarashi.
\newblock A dependent type system for qubit connectivity verification
  (japanese).
\newblock {\em Proceeding of the JSSST Annual Conference (Web)}, Vol. 37th,
  p.~35, 2020.

\end{thebibliography}

\newpage
\appendix
\section{The Proofs of the Source Language}\label{sec:src-proofs}

To prove the type soundness, we introduce the well-typedness of a runtime state in \cref{fig:src-typing-state}.
\begin{figure}[htbp]
  \infrule[]{
    \Theta  \vdash  D
    \andalso \Theta  \pipe  N  \pipe  \Gamma  \vdash  e  \languagessym{:}  T \\
    |X| \geq N
    \andalso \dom(\Gamma) \cap X = \emptyset
    \andalso \dom(\Gamma) \uplus X \subseteq \Var(\rho)
  }{
     N  \pipe  \Gamma  \vdash_{ D }  \qclosure{X, \rho, e}
  }
  \caption{The typing rule for a runtime state}
  \label{fig:src-typing-state}
\end{figure}

\begin{lemma}{(Progress)}\label{lem:src-progress}
If $ N  \pipe  \Gamma  \vdash_{ D }  \braket{X, \rho, e}$, then one of the following conditions hold:
  \begin{itemize}
  \item $e$ is $\languagessym{(}  \languagesmv{x_{{\mathrm{1}}}}  \languagessym{,} \, .. \, \languagessym{,}  \languagesmv{x_{\languagesmv{n}}}  \languagessym{)}$, or
  \item $\exists \braket{X', \rho', e'}$ . $\braket{X, \rho, e}  \rightarrow  \braket{X', \rho', e'}$.
  \end{itemize}
\end{lemma}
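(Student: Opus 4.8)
The plan is to prove the progress lemma by induction on the derivation of the typing judgment $ N  \pipe  \Gamma  \vdash_{ D }  \braket{X, \rho, e}$, or equivalently (since that judgment has a single rule) by case analysis on the last rule used to derive $\Theta  \pipe  N  \pipe  \Gamma  \vdash  e  \languagessym{:}  T$. Inverting the runtime-state rule gives us $\Theta  \vdash  D$, $\Theta  \pipe  N  \pipe  \Gamma  \vdash  e  \languagessym{:}  T$, $|X| \geq N$, $\dom(\Gamma) \cap X = \emptyset$, and $\dom(\Gamma) \uplus X \subseteq \Var(\rho)$; these side conditions are what make each reduction step applicable. For the base case (\textsc{T-Return}), $e$ is a tuple $\languagessym{(}  \languagesmv{x_{{\mathrm{1}}}}  \languagessym{,} \, .. \, \languagessym{,}  \languagesmv{x_{\languagesmv{n}}}  \languagessym{)}$, so the first disjunct holds immediately.

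For the inductive cases I would show that a reduction rule always applies, so the second disjunct holds. For (\textsc{T-Let}), if $e_1$ is already a tuple of variables then (\textsc{E-Let1}) fires; otherwise $e_1$ is not a value, and by a subderivation together with the induction hypothesis applied to a suitably constructed runtime state for $e_1$ we get a step $\braket{X, \rho, e_1}  \rightarrow  \braket{X', \rho', e_1'}$, and then (\textsc{E-Let2}) lifts it. (Here one must observe that $e_1$ being well-typed and not a tuple means the induction hypothesis's second clause must hold; this uses a small canonical-forms style observation that the only well-typed values are tuples.) For (\textsc{T-Discard}), (\textsc{T-Cnot}), (\textsc{T-If}), and (\textsc{T-Call}), the corresponding reduction rules (\textsc{E-Discard}), (\textsc{E-Cnot}), (\textsc{E-IfTrue}/\textsc{E-IfFalse}), and (\textsc{E-Call}) have side conditions that are either trivially satisfiable or supplied by the typing premises: for (\textsc{E-Call}) we use $\Theta  \vdash  D$ to locate the definition $ \languagesmv{f}  \mapsto ( \languagesmv{y'_{{\mathrm{1}}}}  \languagessym{,} \, .. \, \languagessym{,}  \languagesmv{y'_{\languagesmv{m}}} )  e'  \in D$ matching the name $f$ appearing in $\Theta$, relying on the fact that a well-formed $\Theta$ is built in lockstep with $D$ via (\textsc{T-Empty})/(\textsc{T-FunDecl}).

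The one genuinely delicate case is (\textsc{T-Init}): the expression $\languageskw{let} \, \languagesmv{x}  \languagessym{=} \, \languageskw{init} \, \languagessym{()} \, \languageskw{in} \, e$ reduces via (\textsc{E-Init}) only if there exists some $\languagesmv{x'} \in X$, i.e.\ only if $X \neq \emptyset$. But (\textsc{T-Init}) requires $N \geq 1$, and the runtime-state invariant gives $|X| \geq N \geq 1$, hence $X$ is nonempty and a fresh qubit label is available. This is exactly the point where the linear-type accounting in $N$ pays off, and it is the crux of the whole soundness argument — everything else is bookkeeping — so I expect this step to be where the proof's real content lies, even though its verification is short once the invariant is in hand.
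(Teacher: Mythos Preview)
Your proposal is correct and follows essentially the same approach as the paper: induction on the typing derivation of $e$, with the (\textsc{T-Let}) case handled by case-splitting on whether $e_1$ is a tuple (using (\textsc{E-Let1}) or the induction hypothesis plus (\textsc{E-Let2})), and the remaining cases dispatched directly by the corresponding reduction rules. The paper's proof is in fact terser than yours---it spells out only the (\textsc{T-Let}) case (noting that $\Gamma_1 \subseteq \Gamma$ is what lets the runtime-state side conditions transfer to the subderivation) and declares the rest ``easy''; your explicit treatment of (\textsc{T-Init}) via $|X| \geq N \geq 1$ is exactly the intended argument, just made visible.
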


\begin{proof}
  By induction on the typing derivation of $e$.
  \begin{itemize}
  \item Case of (\textsc{T-Let}):
    \begin{gather*}
      \Theta  \pipe  N  \pipe  \Gamma_{{\mathrm{1}}}  \vdash  e_{{\mathrm{1}}}  \languagessym{:}  T_{{\mathrm{1}}}
      \andalso \Theta  \pipe  N  \pipe  \Gamma_{{\mathrm{1}}}  \languagessym{,}  \Gamma_{{\mathrm{2}}}  \vdash  \languageskw{let} \, \languagessym{(}  \languagesmv{x_{{\mathrm{1}}}}  \languagessym{,} \, .. \, \languagessym{,}  \languagesmv{x_{\languagesmv{n}}}  \languagessym{)}  \languagessym{=}  e_{{\mathrm{1}}} \, \languageskw{in} \, e_{{\mathrm{2}}}  \languagessym{:}  T
      \andalso \Gamma = \Gamma_{{\mathrm{1}}}  \languagessym{,}  \Gamma_{{\mathrm{2}}}
    \end{gather*}
    \begin{itemize}
    \item If $e_{{\mathrm{1}}} = \languagessym{(}  \languagesmv{y_{{\mathrm{1}}}}  \languagessym{,} \, .. \, \languagessym{,}  \languagesmv{y_{\languagesmv{n}}}  \languagessym{)}$,
      then $\braket{X, \rho, e}  \rightarrow  \braket{X, \rho, \languagessym{[}  \languagesmv{y_{{\mathrm{1}}}}  \slash  \languagesmv{x_{{\mathrm{1}}}}  \languagessym{,} \, .. \, \languagessym{,}  \languagesmv{y_{\languagesmv{n}}}  \slash  \languagesmv{x_{\languagesmv{n}}}  \languagessym{]} \, e_{{\mathrm{2}}}}$ by (\textsc{E-Let}).
    \item Otherwise: Obviously, $\Gamma_{{\mathrm{1}}} \subseteq \Gamma$,
      and thus $\Gamma_{{\mathrm{1}}} \uplus X \subseteq \Var(\rho)$ and $\Gamma_{{\mathrm{1}}} \cap X$.
      Therefore, this case follows the induction hypothesis.
    \end{itemize}
  \item The other cases are easy.
  \end{itemize}
\end{proof}

\begin{lemma}{(Substitution)}\label{lem:src-substitution}
  Suppose that variables $\languagesmv{y_{{\mathrm{1}}}}  \languagessym{,} \, .. \, \languagessym{,}  \languagesmv{y_{\languagesmv{n}}}$ are distinct,
  $\languagesmv{y_{{\mathrm{1}}}}  \languagessym{,} \, .. \, \languagessym{,}  \languagesmv{y_{\languagesmv{n}}} \not\in \dom(\Gamma)$,
  $\{\languagesmv{x_{{\mathrm{1}}}}  \languagessym{,} \, .. \, \languagessym{,}  \languagesmv{x_{\languagesmv{n}}}\} \subseteq \dom(\Gamma)$
  and $\Theta  \pipe  N  \pipe  \Gamma  \vdash  e  \languagessym{:}  T$.
  Then, $\Theta  \pipe  N  \pipe  \languagessym{[}  \languagesmv{y_{{\mathrm{1}}}}  \slash  \languagesmv{x_{{\mathrm{1}}}}  \languagessym{,} \, .. \, \languagessym{,}  \languagesmv{y_{\languagesmv{n}}}  \slash  \languagesmv{x_{\languagesmv{n}}}  \languagessym{]} \, \Gamma  \vdash  \languagessym{[}  \languagesmv{y_{{\mathrm{1}}}}  \slash  \languagesmv{x_{{\mathrm{1}}}}  \languagessym{,} \, .. \, \languagessym{,}  \languagesmv{y_{\languagesmv{n}}}  \slash  \languagesmv{x_{\languagesmv{n}}}  \languagessym{]} \, e  \languagessym{:}  T$.
\end{lemma}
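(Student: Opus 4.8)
The plan is to prove the lemma by induction on the derivation of $\Theta \pipe N \pipe \Gamma \vdash e : T$, treating $\sigma \coloneqq [y_1/x_1,\ldots,y_n/x_n]$ as a simultaneous renaming. Before the induction I would record two routine facts. First, since $y_1,\ldots,y_n$ are distinct and disjoint from $\dom(\Gamma) \supseteq \{x_1,\ldots,x_n\}$, the map $\sigma$ is injective on $\dom(\Gamma)$; hence $\sigma$ turns a well-formed (pairwise-distinct) context into a well-formed context, preserves domain cardinalities ($|\dom(\sigma\Gamma)| = |\dom(\Gamma)|$), commutes with context concatenation and extension, and preserves side conditions of the form ``$x \notin \dom(\cdot)$'' and domain-disjointness. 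Second, a standard free-variable lemma: $\Theta \pipe N \pipe \Gamma \vdash e : T$ implies $\mathrm{fv}(e) \subseteq \dom(\Gamma)$; this lets me, in the context-splitting rules, replace $\sigma$ by its restriction to the relevant half of $\Gamma$ without changing its effect on the corresponding subexpression. Throughout I work modulo $\alpha$-equivalence, so whenever a rule binds variables in a subexpression I first rename them to be fresh for $\{x_1,\ldots,x_n\} \cup \{y_1,\ldots,y_n\}$, after which $\sigma$ pushes through the binder.

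Then I proceed by cases on the last rule. For (\textsc{T-Return}), with $\Gamma = z_1:\tau_1,\ldots,z_m:\tau_m$ and $e = (z_1,\ldots,z_m)$, injectivity of $\sigma$ keeps $\sigma z_1,\ldots,\sigma z_m$ distinct, so $\sigma\Gamma$ is legal and the rule re-applies (the tuple type $T$ contains no term variables, so it is unchanged). For the other rules I separate the occurrences that $\sigma$ genuinely renames --- the discarded variable in (\textsc{T-Discard}), the gate arguments in (\textsc{T-Cnot}), the guard in (\textsc{T-If}), the actual arguments in (\textsc{T-Call}) --- from the variables a rule binds in a subexpression --- $x$ in (\textsc{T-Init}), the pattern variables in (\textsc{T-Cnot})/(\textsc{T-Let})/(\textsc{T-Call}) --- which I first $\alpha$-rename as above. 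In each case the relevant premise is a typing judgment under $\Gamma$ possibly extended by the (renamed) bound variables; that extended context still satisfies the lemma's hypotheses, so the induction hypothesis applies, and the numeric component ($N$, $N-1$, $N+1$, $N'$, $N''$) and the capture-avoidance side conditions all transfer because $\sigma$ preserves domains and cardinalities and is injective. (\textsc{T-If}) is then immediate from the two induction hypotheses. In (\textsc{T-Let}) and (\textsc{T-Call}) I additionally split $\Gamma = \Gamma_1,\Gamma_2$ and invoke the induction hypothesis on each premise with the restriction of $\sigma$ to the domain of the corresponding half (legitimate by the free-variable lemma), then reassemble; for (\textsc{T-Call}) I note that $\Theta(f)$, hence the demanded $N'$, is untouched since $\sigma$ does not act on function names, and the arithmetic constraints $N' = N + |\dom(\Gamma_1)| - n$ and $N'' = N - n + m$ survive because $|\dom(\sigma\Gamma_1)| = |\dom(\Gamma_1)|$.

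The only place that really needs care --- and hence the main obstacle --- is the interaction of $\alpha$-renaming with context splitting in (\textsc{T-Let}) and (\textsc{T-Call}): after renaming the let-/call-bound variables away from $\{x_i\} \cup \{y_i\}$ one must check that $\sigma$ genuinely commutes with the binder, that on the half of the judgment whose context omits the $x_i$ being substituted $\sigma$ acts as the identity (this is exactly where the free-variable lemma is used), and that none of the premises' capture-avoidance conditions is violated. Everything else is a direct transfer of a premise through the induction hypothesis. A tidier alternative would be to prove the single-variable case $\sigma = [y/x]$ first, by the same induction, and then iterate $n$ times: this is sound here because $\{y_1,\ldots,y_n\}$ is disjoint from $\{x_1,\ldots,x_n\}$, so the sequential composition $[y_n/x_n] \circ \cdots \circ [y_1/x_1]$ agrees with the simultaneous substitution, and the hypotheses ``$y_i \notin \dom(\cdot)$'' and ``$x_i \in \dom(\cdot)$'' are maintained after each step.
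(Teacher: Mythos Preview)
Your proposal is correct and follows the same approach as the paper, which simply states ``By induction on the typing derivation of $e$'' with no further detail. Your write-up is thus a careful elaboration of exactly that induction; the auxiliary facts you record (injectivity of $\sigma$ on $\dom(\Gamma)$, the free-variable inclusion, and $\alpha$-renaming of binders) are the standard ingredients one needs to make the paper's one-line proof go through.
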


\begin{proof}
  By induction on the typing derivation of $e$.
\end{proof}

\begin{lemma}{(Extend a function type environment)}\label{lem:src-ext-func-env}
  If $\Theta  \pipe  N  \pipe  \Gamma  \vdash  e  \languagessym{:}  T$,
  then $\Theta'  \pipe  N  \pipe  \Gamma  \vdash  e  \languagessym{:}  T$ for any $\Theta \subseteq \Theta'$.
\end{lemma}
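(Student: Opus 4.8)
The plan is to proceed by a routine induction on the derivation of $\Theta  \pipe  N  \pipe  \Gamma  \vdash  e  \languagessym{:}  T$, using the observation that the function type environment is inspected by exactly one typing rule, (\textsc{T-Call}), and is otherwise merely threaded through unchanged to the premises.

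First I would dispatch every rule other than (\textsc{T-Call}). For (\textsc{T-Return}) there are no premises and the side conditions say nothing about $\Theta$, so the very same rule instance yields $\Theta'  \pipe  N  \pipe  \Gamma  \vdash  e  \languagessym{:}  T$. For (\textsc{T-Init}), (\textsc{T-Discard}), (\textsc{T-Let}), (\textsc{T-Cnot}), and (\textsc{T-If}), every premise carries the same function environment as the conclusion, so I apply the induction hypothesis to each sub-derivation with the larger environment $\Theta'$ and then re-apply the rule; the arithmetic and freshness side conditions never mention $\Theta$, so they are preserved verbatim.

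The only interesting case is (\textsc{T-Call}), whose conclusion is justified by a lookup premise $\Theta(f) = \tau'_1 * \cdots * \tau'_m \xrightarrow{N'} \tau''_1 * \cdots * \tau''_n$, the conditions $N \geq N'$ and $N'' = N - n + m$, a freshness condition on $x_1, \dots, x_n$, and a sub-derivation of the continuation under $\Theta$ with context extended by $x_1 : \tau''_1, \dots, x_n : \tau''_n$ and index $N''$. Since function names in any function environment we consider are pairwise distinct, the inclusion $\Theta \subseteq \Theta'$ implies that $\Theta$ and $\Theta'$ agree as finite maps on $\dom(\Theta)$, and in particular $\Theta'(f) = \Theta(f)$; hence the lookup premise still holds with $\Theta'$. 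The induction hypothesis upgrades the sub-derivation to one over $\Theta'$, the remaining side conditions are untouched, and re-applying (\textsc{T-Call}) closes the case.

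I expect no genuine obstacle. The only point that deserves an explicit word is that an inclusion of function environments cannot silently rebind an already-present function name, which is precisely the standing pairwise-distinctness assumption on contexts; everything else is mechanical bookkeeping. (The same argument, verbatim, would establish the analogous weakening statement for the target language, where again only (\textsc{T-Call}) reads $\Theta$.)
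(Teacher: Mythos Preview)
Your proposal is correct and matches the paper's approach: the paper simply records the proof as a ``straightforward induction on the typing derivation of $e$,'' which is exactly the case analysis you spell out. Your extra remark about the pairwise-distinctness of function names ensuring $\Theta'(f)=\Theta(f)$ is the one nontrivial point, and it is implicitly assumed in the paper.
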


\begin{proof}
  Straightforward induction on the typing derivation of $e$.
\end{proof}

\begin{lemma}\label{lem:src-incr-n}
  If $\Theta  \pipe  N  \pipe  \Gamma  \vdash  e  \languagessym{:}  T$,
  then $\Theta  \pipe  N'  \pipe  \Gamma  \vdash  e  \languagessym{:}  T$ for any $N' \geq N$.
\end{lemma}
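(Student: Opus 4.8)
The plan is to induct on the derivation of $\Theta \pipe N \pipe \Gamma \vdash e : T$, showing that every typing rule stays applicable when the qubit counter in the conclusion is raised from $N$ to an arbitrary $N' \ge N$, with the counters occurring in the premises adjusted by exactly the same arithmetic as in the rule. Throughout, $\Gamma$, $\Theta$, and $T$ are left untouched, so no further weakening is needed; the whole content of the lemma is that every use of the counter is monotone.

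First I would dispatch the rules that either impose no constraint on the counter or thread it unchanged into their premises. (\textsc{T-Return}) has nothing to check. In (\textsc{T-Cnot}) and (\textsc{T-If}) the premises carry the same counter $N$ as the conclusion, so applying the induction hypothesis to each premise at the same $N'$ and re-applying the rule closes these cases. In (\textsc{T-Discard}) the premise uses $N + 1$; since $N' + 1 \ge N + 1$, the induction hypothesis gives the premise at $N' + 1$, and re-applying (\textsc{T-Discard}) yields the conclusion at $N'$.

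The remaining rules involve side conditions or arithmetic on the counter, but all of it is monotone in $N$. In (\textsc{T-Init}) the hypothesis $N \ge 1$ is preserved because $N' \ge N \ge 1$, and the premise counter becomes $N' - 1 \ge N - 1$, so the induction hypothesis applies and (\textsc{T-Init}) re-applies. In (\textsc{T-Let}) the premise for $e_2$ uses $N + |\dom(\Gamma_1)| - n$ (with $n$ the arity of the returned tuple); replacing $N$ by $N'$ only increases this value, so the premise for $e_1$ follows at $N'$ and the premise for $e_2$ at $N' + |\dom(\Gamma_1)| - n$ by the induction hypothesis, and the rule re-applies. In (\textsc{T-Call}) the side condition requiring that the caller's budget is at least the callee's required budget is preserved, since the bumped counter is still $\ge N$ and hence still bounds the callee's requirement; the post-call counter $N - n + m$ is likewise monotone in $N$; thus the induction hypothesis on the single premise plus re-application of (\textsc{T-Call}) finishes the case.

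There is essentially no obstacle here: the only point requiring care is the bookkeeping for the inequality side conditions --- $N \ge 1$ in (\textsc{T-Init}) and the caller-has-enough-qubits condition in (\textsc{T-Call}) --- each of which survives by transitivity of $\ge$, once one observes that every occurrence of the counter in a premise is an affine function of the conclusion's counter with leading coefficient $1$, hence monotone. This lemma (together with \cref{lem:src-ext-func-env}) is exactly the weakening needed to make the subject-reduction argument for \cref{thm:src-soundness} go through.
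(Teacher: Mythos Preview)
Your proof is correct and is exactly the approach the paper takes: the paper's own proof consists of the single sentence ``Straightforward induction on the typing derivation of $e$.'' You have simply spelled out each case of that induction, and every step you give is sound.
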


\begin{proof}
  Straightforward induction on the typing derivation of $e$.
\end{proof}

\begin{lemma}{(Subject Reduction)}\label{lem:src-sr}
  If $ N  \pipe  \Gamma  \vdash_{ D }  \qclosure{X, \rho, e}$
    and $\qclosure{X, \rho, e}  \rightarrow  \qclosure{X', \rho', e'}$,
  then there exists $N'$ and $\Gamma'$ such that all the following conditions hold:
  \begin{itemize}
  \item $N + \dom(\Gamma) = N' + \dom(\Gamma')$, and
  \item $ N'  \pipe  \Gamma'  \vdash_{ D }  \qclosure{X', \rho', e'}$.
  \end{itemize}
\end{lemma}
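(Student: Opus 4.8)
The plan is to prove the lemma by induction on the derivation of $\qclosure{X, \rho, e} \rightarrow \qclosure{X', \rho', e'}$, equivalently by case analysis on the last reduction rule with the induction hypothesis available for the single congruence rule \textsc{E-Let2}. Since the typing rules of the source language are syntax-directed, each case begins by inverting the derivation of $\Theta \pipe N \pipe \Gamma \vdash e : T$ recorded inside $N \pipe \Gamma \vdash_D \qclosure{X,\rho,e}$ and reading off the premises. It is convenient to carry the type $T$ through the induction, i.e.\ to prove the slightly stronger statement that $e'$ is again given type $T$ under $\Theta$, $N'$ and $\Gamma'$, since rebuilding a \textsc{T-Let} derivation in the \textsc{E-Let2} case requires the reduced subterm $e_1'$ to keep the type of $e_1$. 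I would also note at the outset that every reduction rule leaves $\Var(\rho)$ unchanged (\textsc{E-Init} and \textsc{E-Call} do not touch $\rho$; \textsc{E-Discard} resets $x$ but keeps its label; \textsc{E-Cnot}, \textsc{E-IfTrue}, \textsc{E-IfFalse} apply a unitary or a one-qubit projector), so the $\Var$-clause of the runtime-state typing is automatically maintained and the real content is the arithmetic relating $N$, $\Gamma$ and $X$.

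The base cases are routine. For \textsc{E-Init} we invert \textsc{T-Init}, take $\Gamma' = \Gamma, x' : \languageskw{qbit}$ and $N' = N-1$ (using $N \geq 1$), and obtain the typing of $[x'/x]e$ from Lemma~\ref{lem:src-substitution}; the side conditions $\dom(\Gamma')\cap X' = \emptyset$ and $\dom(\Gamma')\uplus X' \subseteq \Var(\rho)$ hold because $x' \in X$ and $|X| \geq N$, and $N+|\dom(\Gamma)| = N'+|\dom(\Gamma')|$ because one qubit moves from the free pool into the context. \textsc{E-Discard} is symmetric, with $x$ moving the other way. For \textsc{E-Cnot}, \textsc{E-IfTrue} and \textsc{E-IfFalse} we take $\Gamma' = \Gamma$, $N' = N$, and inversion of \textsc{T-Cnot} resp.\ \textsc{T-If} immediately yields the typing of $e'$. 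For \textsc{E-Let1} the left component is a tuple, so its typing must come from \textsc{T-Return}; this pins $|\dom(\Gamma_1)| = n$, hence $N' = N$, and Lemma~\ref{lem:src-substitution} transports the typing of $e_2$ along $[\vec y/\vec x]$, with $[\vec y/\vec x](\Gamma_2, x_1 : \tau_1, \dots, x_n : \tau_n)$ collapsing back to $\Gamma$.

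In the congruence case \textsc{E-Let2}, inversion of \textsc{T-Let} gives a split $\Gamma = \Gamma_1, \Gamma_2$ with $\Theta \pipe N \pipe \Gamma_1 \vdash e_1 : \tau_1 * \cdots * \tau_n$. Since $\Gamma_1 \subseteq \Gamma$, the three runtime-state side conditions restrict to $\Gamma_1$, so $N \pipe \Gamma_1 \vdash_D \qclosure{X,\rho,e_1}$ and the induction hypothesis yields $N_1'$ and $\Gamma_1'$ with $N + |\dom(\Gamma_1)| = N_1' + |\dom(\Gamma_1')|$ and $\Theta \pipe N_1' \pipe \Gamma_1' \vdash e_1' : \tau_1 * \cdots * \tau_n$. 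Reassembling \textsc{T-Let} with $\Gamma_1'$ for $\Gamma_1$ and $N_1'$ for the outer count, the second premise ($e_2$ typed at $N_1' + |\dom(\Gamma_1')| - n$) is exactly the original one because of the invariant equation, the freshness side conditions still hold by the usual variable-freshness conventions and because $X \cap \dom(\Gamma) = \emptyset$, and the global invariant follows by adding $|\dom(\Gamma_2)|$ to both sides.

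I expect \textsc{E-Call} to be the main obstacle, as it is the case with the most moving parts. One must: (i) extract a typing of the callee's body $e'$ from $\Theta \vdash D$ --- the body is typed by \textsc{T-FunDef} under a prefix $\Theta_f$ of $\Theta$, which we lift to $\Theta$ by Lemma~\ref{lem:src-ext-func-env}; (ii) substitute actuals for formals by Lemma~\ref{lem:src-substitution}, first $\alpha$-renaming the chosen definition so that its formal parameters are fresh for the actuals; (iii) raise the callee's internal qubit count $N'$ to the caller's $N$ using $N \geq N'$ and Lemma~\ref{lem:src-incr-n}; and (iv) recombine everything with \textsc{T-Let}, checking that $\dom(\Gamma_A) \cap \dom(\Gamma_B) = \emptyset$, that $x_1, \dots, x_n \notin \dom(\Gamma_B)$, and that the arithmetic side condition of \textsc{T-Let} instantiates precisely to the $N'' = N - n + m$ delivered by inverting \textsc{T-Call}. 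The resulting $\Gamma'$ is the original context and $N' = N$, so the invariant is trivial, and since $X$ and $\rho$ are unchanged the remaining clauses are immediate.
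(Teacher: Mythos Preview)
Your proposal is correct and follows essentially the same route as the paper's proof: induction on the reduction derivation, inversion of the syntax-directed typing rule in each case, and appeals to Lemmas~\ref{lem:src-substitution}, \ref{lem:src-ext-func-env} and \ref{lem:src-incr-n} in exactly the places the paper uses them (notably the chain ext-func-env $\to$ substitution $\to$ incr-$N$ $\to$ \textsc{T-Let} for \textsc{E-Call}). Your treatment is in fact somewhat more careful than the paper's --- you make explicit that $T$ must be preserved for the \textsc{E-Let2} congruence to reassemble, you note why $\Var(\rho)$ is invariant under each rule, and you cover \textsc{E-Cnot} and the two \textsc{E-If} cases that the paper omits --- but the underlying argument is the same.
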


\begin{proof}
  By induction on the derivation of $\braket{X, \rho, e}  \rightarrow  \braket{X', \rho', e'}$.
  \begin{itemize}
  \item Case of (\textsc{E-Let1}):
    \begin{gather}
      \braket{X, \rho, \languageskw{let} \, \languagessym{(}  \languagesmv{x_{{\mathrm{1}}}}  \languagessym{,} \, .. \, \languagessym{,}  \languagesmv{x_{\languagesmv{m}}}  \languagessym{)}  \languagessym{=}  \languagessym{(}  \languagesmv{y_{{\mathrm{1}}}}  \languagessym{,} \, .. \, \languagessym{,}  \languagesmv{y_{\languagesmv{m}}}  \languagessym{)} \, \languageskw{in} \, e_{{\mathrm{2}}}}
         \rightarrow  \braket{X, \rho, \languagessym{[}  \languagesmv{y_{{\mathrm{1}}}}  \slash  \languagesmv{x_{{\mathrm{1}}}}  \languagessym{,} \, .. \, \languagessym{,}  \languagesmv{y_{\languagesmv{m}}}  \slash  \languagesmv{x_{\languagesmv{m}}}  \languagessym{]} \, e_{{\mathrm{2}}}} \notag \\
      \Theta  \pipe  N  \pipe  \Gamma_{{\mathrm{2}}}  \languagessym{,}  \languagesmv{y_{{\mathrm{1}}}}  \languagessym{:}  \tau_{{\mathrm{1}}}  \languagessym{,} \, .. \, \languagessym{,}  \languagesmv{y_{\languagesmv{m}}}  \languagessym{:}  \tau_{\languagesmv{m}}  \vdash  \languageskw{let} \, \languagessym{(}  \languagesmv{x_{{\mathrm{1}}}}  \languagessym{,} \, .. \, \languagessym{,}  \languagesmv{x_{\languagesmv{m}}}  \languagessym{)}  \languagessym{=}  \languagessym{(}  \languagesmv{y_{{\mathrm{1}}}}  \languagessym{,} \, .. \, \languagessym{,}  \languagesmv{y_{\languagesmv{m}}}  \languagessym{)} \, \languageskw{in} \, e_{{\mathrm{2}}}  \languagessym{:}  T \notag \\
      \andalso \Theta  \pipe  N  \pipe  \Gamma_{{\mathrm{2}}}  \languagessym{,}  \languagesmv{x_{{\mathrm{1}}}}  \languagessym{:}  \tau_{{\mathrm{1}}}  \languagessym{,} \, .. \, \languagessym{,}  \languagesmv{x_{\languagesmv{m}}}  \languagessym{:}  \tau_{\languagesmv{m}}  \vdash  e_{{\mathrm{2}}}  \languagessym{:}  T \label{eq:src-sr-1}
    \end{gather}
    By Lemma \ref{lem:src-substitution} with (\ref{eq:src-sr-1}),
    \begin{gather*}
      \Theta  \pipe  N''  \pipe  \Gamma_{{\mathrm{2}}}  \languagessym{,}  \languagesmv{y_{{\mathrm{1}}}}  \languagessym{:}  \tau_{{\mathrm{1}}}  \languagessym{,} \, .. \, \languagessym{,}  \languagesmv{y_{\languagesmv{m}}}  \languagessym{:}  \tau_{\languagesmv{m}}  \vdash  \languagessym{[}  \languagesmv{y_{{\mathrm{1}}}}  \slash  \languagesmv{x_{{\mathrm{1}}}}  \languagessym{,} \, .. \, \languagessym{,}  \languagesmv{y_{\languagesmv{m}}}  \slash  \languagesmv{x_{\languagesmv{m}}}  \languagessym{]} \, e_{{\mathrm{2}}}  \languagessym{:}  T
    \end{gather*}
    Therefore, $ N  \pipe  \Gamma_{{\mathrm{2}}}  \languagessym{,}  \languagesmv{y_{{\mathrm{1}}}}  \languagessym{:}  \tau_{{\mathrm{1}}}  \languagessym{,} \, .. \, \languagessym{,}  \languagesmv{y_{\languagesmv{m}}}  \languagessym{:}  \tau_{\languagesmv{m}}  \vdash_{ D }  \qclosure{X, \rho, \languagessym{[}  \languagesmv{y_{{\mathrm{1}}}}  \slash  \languagesmv{x_{{\mathrm{1}}}}  \languagessym{,} \, .. \, \languagessym{,}  \languagesmv{y_{\languagesmv{m}}}  \slash  \languagesmv{x_{\languagesmv{m}}}  \languagessym{]} \, e_{{\mathrm{2}}}}$.
    The other conditions hold obviously,
    and thus $ N  \pipe  \Gamma_{{\mathrm{2}}}  \languagessym{,}  \languagesmv{y_{{\mathrm{1}}}}  \languagessym{:}  \tau_{{\mathrm{1}}}  \languagessym{,} \, .. \, \languagessym{,}  \languagesmv{y_{\languagesmv{m}}}  \languagessym{:}  \tau_{\languagesmv{m}}  \vdash_{ D }  \qclosure{X, \rho, \languagessym{[}  \languagesmv{y_{{\mathrm{1}}}}  \slash  \languagesmv{x_{{\mathrm{1}}}}  \languagessym{,} \, .. \, \languagessym{,}  \languagesmv{y_{\languagesmv{m}}}  \slash  \languagesmv{x_{\languagesmv{m}}}  \languagessym{]} \, e_{{\mathrm{2}}}}$.

  \item Case of (\textsc{E-Let2}):
    \begin{gather*}
      \qclosure{X, \rho, \languageskw{let} \, \languagessym{(}  \languagesmv{x_{{\mathrm{1}}}}  \languagessym{,} \, .. \, \languagessym{,}  \languagesmv{x_{\languagesmv{n}}}  \languagessym{)}  \languagessym{=}  e_{{\mathrm{1}}} \, \languageskw{in} \, e_{{\mathrm{2}}}}  \rightarrow _{ D }  \qclosure{X', \rho', \languageskw{let} \, \languagessym{(}  \languagesmv{x_{{\mathrm{1}}}}  \languagessym{,} \, .. \, \languagessym{,}  \languagesmv{x_{\languagesmv{n}}}  \languagessym{)}  \languagessym{=}  e'_{{\mathrm{1}}} \, \languageskw{in} \, e_{{\mathrm{2}}}} \\
      \qclosure{X, \rho, e_{{\mathrm{1}}}}  \rightarrow _{ D }  \braket{X', \rho', e'_{{\mathrm{1}}}} \\
      \Theta  \pipe  N  \pipe  \Gamma_{{\mathrm{1}}}  \vdash  e_{{\mathrm{1}}}  \languagessym{:}  \tau_{{\mathrm{1}}}  \languagessym{*} \, .. \, \languagessym{*}  \tau_{\languagesmv{n}}
      \andalso \Theta  \pipe  N'  \pipe  \Gamma_{{\mathrm{2}}}  \languagessym{,}  \languagesmv{x_{{\mathrm{1}}}}  \languagessym{:}  \tau_{{\mathrm{1}}}  \languagessym{,} \, .. \, \languagessym{,}  \languagesmv{x_{\languagesmv{n}}}  \languagessym{:}  \tau_{\languagesmv{n}}  \vdash  e_{{\mathrm{2}}}  \languagessym{:}  T \\
      N' = N + \dom(\Gamma_{{\mathrm{1}}}) - n
      \andalso \Theta  \pipe  N'  \pipe  \Gamma_{{\mathrm{2}}}  \vdash  e_{{\mathrm{2}}}  \languagessym{:}  T
    \end{gather*}
    By induction hypothesis with $e_{{\mathrm{1}}}$, there exists $N_{{\mathrm{3}}}$ and $\Gamma_{{\mathrm{3}}}$ such that
    \begin{gather*}
       N_{{\mathrm{3}}}  \pipe  \Gamma_{{\mathrm{3}}}  \vdash_{ D }  \qclosure{X', \rho', e'_{{\mathrm{1}}}}
      \andalso \Theta  \pipe  N_{{\mathrm{3}}}  \pipe  \Gamma_{{\mathrm{3}}}  \vdash  e'_{{\mathrm{1}}}  \languagessym{:}  \tau_{{\mathrm{1}}}  \languagessym{*} \, .. \, \languagessym{*}  \tau_{\languagesmv{n}}
      \andalso |X'| \geq N_{{\mathrm{3}}} \\
      \dom(\Gamma_{{\mathrm{3}}}) \cap X' = \emptyset
      \andalso \dom(\Gamma_{{\mathrm{3}}}) \uplus X' \subseteq \Var(\rho')
      \andalso N + \dom(\Gamma_{{\mathrm{1}}}) = N_{{\mathrm{3}}} + \dom(\Gamma_{{\mathrm{3}}})
    \end{gather*}
    Thus, $N' = N + \dom(\Gamma_{{\mathrm{1}}}) - n = N_{{\mathrm{3}}} + \dom(\Gamma_{{\mathrm{3}}}) - n$.
    By (\textsc{T-Let}),
    \begin{gather*}
      \Theta  \pipe  N_{{\mathrm{3}}}  \pipe  \Gamma_{{\mathrm{3}}}  \languagessym{,}  \Gamma_{{\mathrm{2}}}  \languagessym{,}  \languagesmv{x_{{\mathrm{1}}}}  \languagessym{:}  \tau_{{\mathrm{1}}}  \languagessym{,} \, .. \, \languagessym{,}  \languagesmv{x_{\languagesmv{n}}}  \languagessym{:}  \tau_{\languagesmv{n}}  \vdash  \languageskw{let} \, \languagessym{(}  \languagesmv{x_{{\mathrm{1}}}}  \languagessym{,} \, .. \, \languagessym{,}  \languagesmv{x_{\languagesmv{n}}}  \languagessym{)}  \languagessym{=}  e'_{{\mathrm{1}}} \, \languageskw{in} \, e_{{\mathrm{2}}}  \languagessym{:}  T.
    \end{gather*}
    We put $\Gamma' \coloneqq \Gamma_{{\mathrm{3}}}  \languagessym{,}  \Gamma_{{\mathrm{2}}}  \languagessym{,}  \languagesmv{x_{{\mathrm{1}}}}  \languagessym{:}  \tau_{{\mathrm{1}}}  \languagessym{,} \, .. \, \languagessym{,}  \languagesmv{x_{\languagesmv{n}}}  \languagessym{:}  \tau_{\languagesmv{n}}$.
    The relation $\dom(\Gamma_{{\mathrm{3}}}) \uplus X' \subseteq \Var(\rho')$ implies $\dom(\Gamma') \uplus X' \subseteq \Var(\rho')$.
    The equation $\dom(\Gamma_{{\mathrm{3}}}) \cap X' = \emptyset$ implies $\dom(\Gamma') \cap X' = \emptyset$.
    Therefore, $ N_{{\mathrm{3}}}  \pipe  \Gamma'  \vdash_{ D }  \qclosure{X', \rho', \languageskw{let} \, \languagessym{(}  \languagesmv{x_{{\mathrm{1}}}}  \languagessym{,} \, .. \, \languagessym{,}  \languagesmv{x_{\languagesmv{n}}}  \languagessym{)}  \languagessym{=}  e'_{{\mathrm{1}}} \, \languageskw{in} \, e_{{\mathrm{2}}}}$.

  \item Case of (\textsc{E-Init}):
    \begin{gather*}
      \qclosure{X, \rho, \languageskw{let} \, \languagesmv{x}  \languagessym{=} \, \languageskw{init} \, \languagessym{()} \, \languageskw{in} \, e'}  \rightarrow  \qclosure{X \setminus x', \rho', \languagessym{[}  \languagesmv{x'}  \slash  \languagesmv{x}  \languagessym{]} \, e'} \\
      \Theta  \pipe  N  \pipe  \Gamma  \vdash  \languageskw{let} \, \languagesmv{x}  \languagessym{=} \, \languageskw{init} \, \languagessym{()} \, \languageskw{in} \, e'  \languagessym{:}  T \\
      \Theta  \pipe  N  \languagessym{-}  \languagessym{1}  \pipe  \Gamma  \languagessym{,}  \languagesmv{x}  \languagessym{:}  \languageskw{qbit}  \vdash  e'  \languagessym{:}  T
    \end{gather*}
    By \cref{lem:src-substitution}, $\Theta  \pipe  N  \languagessym{-}  \languagessym{1}  \pipe  \Gamma  \languagessym{,}  \languagesmv{x'}  \languagessym{:}  \languageskw{qbit}  \vdash  \languagessym{[}  \languagesmv{x'}  \slash  \languagesmv{x}  \languagessym{]} \, e'  \languagessym{:}  T$.
    Obviously, the other conditions hold,
    and thus $ N  \languagessym{-}  \languagessym{1}  \pipe  \Gamma  \languagessym{,}  \languagesmv{x'}  \languagessym{:}  \languageskw{qbit}  \vdash_{ D }  \qclosure{X', \rho', e'}$.

  \item Case of (\textsc{E-Discard}): Similar to (\textsc{T-Init}).

  \item Case of (\textsc{E-Call}):
    \begin{gather*}
      \Theta  \languagessym{(}  \languagesmv{f}  \languagessym{)} =  \tau_{{\mathrm{1}}}  \languagessym{*} \, .. \, \languagessym{*}  \tau_{\languagesmv{m}}  \xrightarrow{ N' }  \tau'_{{\mathrm{1}}}  \languagessym{*} \, .. \, \languagessym{*}  \tau'_{\languagesmv{n}} 
      \andalso  \languagesmv{f}  \mapsto ( \languagesmv{y'_{{\mathrm{1}}}}  \languagessym{,} \, .. \, \languagessym{,}  \languagesmv{y'_{\languagesmv{m}}} )  e'  \in D \\
      \qclosure{X, \rho, \languageskw{let} \, \languagessym{(}  \languagesmv{x_{{\mathrm{1}}}}  \languagessym{,} \, .. \, \languagessym{,}  \languagesmv{x_{\languagesmv{n}}}  \languagessym{)}  \languagessym{=}  \languagesmv{f}  \languagessym{(}  \languagesmv{y_{{\mathrm{1}}}}  \languagessym{,} \, .. \, \languagessym{,}  \languagesmv{y_{\languagesmv{m}}}  \languagessym{)} \, \languageskw{in} \, e_{{\mathrm{2}}}}
         \rightarrow  \qclosure{X, \rho, \languageskw{let} \, \languagessym{(}  \languagesmv{x_{{\mathrm{1}}}}  \languagessym{,} \, .. \, \languagessym{,}  \languagesmv{x_{\languagesmv{n}}}  \languagessym{)}  \languagessym{=}  \languagessym{[}  \languagesmv{y_{{\mathrm{1}}}}  \slash  \languagesmv{y'_{{\mathrm{1}}}}  \languagessym{,} \, .. \, \languagessym{,}  \languagesmv{y_{\languagesmv{m}}}  \slash  \languagesmv{y'_{\languagesmv{m}}}  \languagessym{]} \, e' \, \languageskw{in} \, e_{{\mathrm{2}}}} \\
      \Theta  \pipe  N  \pipe  \Gamma_{{\mathrm{1}}}  \languagessym{,}  \languagesmv{x_{{\mathrm{1}}}}  \languagessym{:}  \tau'_{{\mathrm{1}}}  \languagessym{,} \, .. \, \languagessym{,}  \languagesmv{x_{\languagesmv{n}}}  \languagessym{:}  \tau'_{\languagesmv{n}}  \vdash  e_{{\mathrm{2}}}  \languagessym{:}  T \\
      \Theta  \pipe  N  \pipe  \Gamma_{{\mathrm{1}}}  \languagessym{,}  \languagesmv{y_{{\mathrm{1}}}}  \languagessym{:}  \tau_{{\mathrm{1}}}  \languagessym{,} \, .. \, \languagessym{,}  \languagesmv{y_{\languagesmv{m}}}  \languagessym{:}  \tau_{\languagesmv{m}}  \vdash  \languageskw{let} \, \languagessym{(}  \languagesmv{x_{{\mathrm{1}}}}  \languagessym{,} \, .. \, \languagessym{,}  \languagesmv{x_{\languagesmv{n}}}  \languagessym{)}  \languagessym{=}  \languagesmv{f}  \languagessym{(}  \languagesmv{y_{{\mathrm{1}}}}  \languagessym{,} \, .. \, \languagessym{,}  \languagesmv{y_{\languagesmv{m}}}  \languagessym{)} \, \languageskw{in} \, e_{{\mathrm{2}}}  \languagessym{:}  T
    \end{gather*}
    The judgement $\Theta  \vdash  D$ hold, and thus there exists $\Theta'$ such that
    $\Theta'  \pipe  N'  \pipe  \languagesmv{y'_{{\mathrm{1}}}}  \languagessym{:}  \tau_{{\mathrm{1}}}  \languagessym{,} \, .. \, \languagessym{,}  \languagesmv{y'_{\languagesmv{m}}}  \languagessym{:}  \tau_{\languagesmv{m}}  \vdash  e'  \languagessym{:}  \tau'_{{\mathrm{1}}}  \languagessym{*} \, .. \, \languagessym{*}  \tau'_{\languagesmv{n}}$.
    By \cref{lem:src-ext-func-env}, $\Theta  \pipe  N'  \pipe  \languagesmv{y'_{{\mathrm{1}}}}  \languagessym{:}  \tau_{{\mathrm{1}}}  \languagessym{,} \, .. \, \languagessym{,}  \languagesmv{y'_{\languagesmv{m}}}  \languagessym{:}  \tau_{\languagesmv{m}}  \vdash  e'  \languagessym{:}  \tau'_{{\mathrm{1}}}  \languagessym{*} \, .. \, \languagessym{*}  \tau'_{\languagesmv{n}}$.
    By \cref{lem:src-substitution}, $\Theta  \pipe  N'  \pipe  \languagesmv{y_{{\mathrm{1}}}}  \languagessym{:}  \tau_{{\mathrm{1}}}  \languagessym{,} \, .. \, \languagessym{,}  \languagesmv{y_{\languagesmv{m}}}  \languagessym{:}  \tau_{\languagesmv{m}}  \vdash  \languagessym{[}  \languagesmv{y_{{\mathrm{1}}}}  \slash  \languagesmv{y'_{{\mathrm{1}}}}  \languagessym{,} \, .. \, \languagessym{,}  \languagesmv{y_{\languagesmv{m}}}  \slash  \languagesmv{y'_{\languagesmv{m}}}  \languagessym{]} \, e'  \languagessym{:}  \tau'_{{\mathrm{1}}}  \languagessym{*} \, .. \, \languagessym{*}  \tau'_{\languagesmv{n}}$.
    By \cref{lem:src-incr-n}, $\Theta  \pipe  N  \pipe  \languagesmv{y_{{\mathrm{1}}}}  \languagessym{:}  \tau_{{\mathrm{1}}}  \languagessym{,} \, .. \, \languagessym{,}  \languagesmv{y_{\languagesmv{m}}}  \languagessym{:}  \tau_{\languagesmv{m}}  \vdash  \languagessym{[}  \languagesmv{y_{{\mathrm{1}}}}  \slash  \languagesmv{y'_{{\mathrm{1}}}}  \languagessym{,} \, .. \, \languagessym{,}  \languagesmv{y_{\languagesmv{m}}}  \slash  \languagesmv{y'_{\languagesmv{m}}}  \languagessym{]} \, e'  \languagessym{:}  \tau'_{{\mathrm{1}}}  \languagessym{*} \, .. \, \languagessym{*}  \tau'_{\languagesmv{n}}$.
    Therefore, by (\textsc{T-Let}), $\Theta  \pipe  N  \pipe  \Gamma_{{\mathrm{1}}}  \languagessym{,}  \languagesmv{y_{{\mathrm{1}}}}  \languagessym{:}  \tau_{{\mathrm{1}}}  \languagessym{,} \, .. \, \languagessym{,}  \languagesmv{y_{\languagesmv{m}}}  \languagessym{:}  \tau_{\languagesmv{m}}  \vdash  \languageskw{let} \, \languagessym{(}  \languagesmv{x_{{\mathrm{1}}}}  \languagessym{,} \, .. \, \languagessym{,}  \languagesmv{x_{\languagesmv{n}}}  \languagessym{)}  \languagessym{=}  \languagessym{[}  \languagesmv{y_{{\mathrm{1}}}}  \slash  \languagesmv{y'_{{\mathrm{1}}}}  \languagessym{,} \, .. \, \languagessym{,}  \languagesmv{y_{\languagesmv{m}}}  \slash  \languagesmv{y'_{\languagesmv{m}}}  \languagessym{]} \, e' \, \languageskw{in} \, e_{{\mathrm{2}}}  \languagessym{:}  T$.
    The other conditions hold obviously,
    and thus $ N  \pipe  \Gamma_{{\mathrm{1}}}  \languagessym{,}  \languagesmv{y_{{\mathrm{1}}}}  \languagessym{:}  \tau_{{\mathrm{1}}}  \languagessym{,} \, .. \, \languagessym{,}  \languagesmv{y_{\languagesmv{m}}}  \languagessym{:}  \tau_{\languagesmv{m}}  \vdash_{ D }  \qclosure{X, \rho, \languageskw{let} \, \languagessym{(}  \languagesmv{x_{{\mathrm{1}}}}  \languagessym{,} \, .. \, \languagessym{,}  \languagesmv{x_{\languagesmv{n}}}  \languagessym{)}  \languagessym{=}  \languagessym{[}  \languagesmv{y_{{\mathrm{1}}}}  \slash  \languagesmv{y'_{{\mathrm{1}}}}  \languagessym{,} \, .. \, \languagessym{,}  \languagesmv{y_{\languagesmv{m}}}  \slash  \languagesmv{y'_{\languagesmv{m}}}  \languagessym{]} \, e' \, \languageskw{in} \, e_{{\mathrm{2}}}}$.
  \end{itemize}
\end{proof}

Now, we can show \cref{thm:src-soundness} from \cref{lem:src-progress,lem:src-sr}.

\begin{lemma}\label{lem:domG-eq-FV}
  If $\Theta  \pipe  N  \pipe  \Gamma  \vdash  e  \languagessym{:}  T$,
  then $\dom(\Gamma) = \FV(e)$.
\end{lemma}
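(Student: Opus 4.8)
The plan is to prove the statement by structural induction on the derivation of $\Theta \pipe N \pipe \Gamma \vdash e : T$, reading the typing rules of \cref{fig:src-typing-rules} bottom-up. Here $\FV(\cdot)$ is the standard free-variable function, in which the binders are $x$ in $\mathbf{let}\ x = \mathbf{init}()\ \mathbf{in}\ e$, the $x_i$ in $\mathbf{let}\ (x_1,\dots,x_n) = e_1\ \mathbf{in}\ e_2$ (bound in $e_2$ only, not in $e_1$), $x_1,x_2$ in the $\mathbf{cnot}$-let, and $x_1,\dots,x_n$ in the call-let. Throughout I will use the standing convention (stated just after the definition of contexts) that the variables listed in a context are pairwise distinct, so that $\dom$ of a concatenated context is the disjoint union of the domains; the explicit freshness side conditions of the rules (such as $x \notin \dom(\Gamma)$ in \textsc{T-Init} and $x_1,\dots,x_n \notin \dom(\Gamma_2)$ in \textsc{T-Let}) play the same bookkeeping role.

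The base case is \textsc{T-Return}, where $\Gamma = x_1:\tau_1,\dots,x_n:\tau_n$ and $e = (x_1,\dots,x_n)$, so $\dom(\Gamma) = \{x_1,\dots,x_n\} = \FV(e)$ immediately. For the inductive cases I peel off one constructor and apply the induction hypothesis to each typing premise. For \textsc{T-Init}, the IH gives $\dom(\Gamma) \uplus \{x\} = \FV(e)$, and since $\FV(\mathbf{let}\ x = \mathbf{init}()\ \mathbf{in}\ e) = \FV(e) \setminus \{x\}$ with $x \notin \dom(\Gamma)$, this equals $\dom(\Gamma)$. For \textsc{T-Discard}, the IH gives $\dom(\Gamma) = \FV(e)$, and $\FV(\mathbf{discard}\ x; e) = \{x\} \cup \FV(e) = \dom(\Gamma, x:\mathbf{qbit})$ using $x \notin \dom(\Gamma)$. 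For \textsc{T-Let}, the two IHs give $\dom(\Gamma_1) = \FV(e_1)$ and $\dom(\Gamma_2) \uplus \{x_1,\dots,x_n\} = \FV(e_2)$; the side conditions then yield $\FV(e_1) \cup (\FV(e_2) \setminus \{x_1,\dots,x_n\}) = \dom(\Gamma_1) \cup \dom(\Gamma_2)$. The cases \textsc{T-Cnot} and \textsc{T-Call} are analogous to \textsc{T-Let}, with the argument variables $y_i$ appearing as free occurrences added to both sides. For \textsc{T-If}, the two IHs give $\FV(e_1) = \FV(e_2) = \dom(\Gamma)$, and since $\Gamma(x) = \mathbf{qbit}$ forces $x \in \dom(\Gamma)$ we get $\FV(\mathbf{if}\ x\ \mathbf{then}\ e_1\ \mathbf{else}\ e_2) = \{x\} \cup \FV(e_1) \cup \FV(e_2) = \dom(\Gamma)$.

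There is no real obstacle: the argument is routine bookkeeping. The only points needing care are (i) tracking precisely which variables each let-form removes from the free-variable set — in particular that the $x_i$ of \textsc{T-Let}/\textsc{T-Call} are bound in the continuation but not in $e_1$ — and (ii) invoking the rules' freshness side conditions together with the pairwise-distinctness convention, which is exactly what makes the set unions above the disjoint unions that match $\dom$ of the concatenated contexts. This lemma is presumably what is needed to maintain the invariant $\dom(\Gamma) \uplus X \subseteq \Var(\rho)$ for the runtime-state typing judgment of \cref{fig:src-typing-state}.
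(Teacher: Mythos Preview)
Your proof is correct and takes essentially the same approach as the paper: a straightforward induction on the typing derivation of $\Theta \pipe N \pipe \Gamma \vdash e \languagessym{:} T$, which the paper states in a single line without spelling out the cases. Your case analysis is accurate; the only minor inaccuracy is your closing remark about where the lemma is used---in the paper it is invoked in the \textsc{T-Let} case of \cref{lem:alg-qalloc-exp} rather than for the runtime-state invariant.
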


\begin{proof}
  Straightforward induction on the typing derivation.
\end{proof}

\section{The Target Language}\label{sec:appendix-tgt}

\subsection{The full definition}

\begin{definition}{(Typing rules)}
  The typing rules for the target language are given in \cref{fig:tgt-full-typing}.
\end{definition}

\begin{figure}[tb]
  \fbox{$\TTJudgeExp$}

  \TTReturn

  \TTInit

  \TTSwap

  \TTCnot

  \TTLet

  \TTIf

  \TTCall

  \fbox{$\TTJudgeFunDef$}

  \TTFunDef

  \TTProg

  \caption{The typing rules of the target language}
  \label{fig:tgt-full-typing}
\end{figure}

\begin{definition}{(Operational Semantics)}
  The transition rules on runtime states are given in \cref{fig:tgt-full-semantics}.
\end{definition}

\begin{figure}[htbp]
  \fbox{$\braket{\rho, e}  \rightarrow _{D, G} \braket{\rho', e'}$}

  \TEInit

  \TELetI

  \TELetII

  \TESwap

  \TECnot

  \TECall

  \caption{The operational semantics of the target language}
  \label{fig:tgt-full-semantics}
\end{figure}

\subsection{Proofs}

First, we give the proof of the progress lemma.

\begin{lemma}{(Progress)}\label{lem:tgt-progress}
  If $ \Phi  \pipe  \Gamma  \vdash_{ D }  \qclosure{\rho, e}$,
  then one of the following conditions holds:
  \begin{itemize}
  \item $e = \languagessym{(}  \languagesmv{x_{{\mathrm{1}}}}  \languagessym{,} \, .. \, \languagessym{,}  \languagesmv{x_{\languagesmv{n}}}  \languagessym{)}$.
  \item $\exists \qclosure{\rho', e'}$ . $\qclosure{\rho, e}  \rightarrow _{D, G} \qclosure{\rho', e'}$.
  \end{itemize}
\end{lemma}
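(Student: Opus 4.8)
The plan is to follow the template of the source-language progress lemma (\cref{lem:src-progress}): argue by induction on the derivation of the typing judgment $\Theta \pipe \Phi \pipe \Gamma \vdash e : T$ that is part of the runtime-state judgment $\Phi \pipe \Gamma \vdash_D \qclosure{\rho, e}$, performing a case analysis on the last typing rule applied.

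The easy cases mirror the source proof. If the last rule is \textsc{T-Return}, then $e$ is a tuple $(x_1, \dots, x_n)$ and the first disjunct holds. If it is \textsc{T-Init} or \textsc{T-Discard}, the matching reduction rule \textsc{E-Init} / \textsc{E-Discard} applies with no side conditions — one only needs $x \in \Var(\rho)$, which follows from $\dom(\Gamma) \subseteq \Var(\rho)$ in the runtime-state invariant together with $x \in \dom(\Gamma)$. If it is \textsc{T-Call}, then $\Theta(f)$ is defined; since the invariant carries $\Theta \vdash D$, there is a declaration $f \mapsto (\dots)\,e_1 \in D$ (the analogue of the step used in the source subject-reduction proof), so \textsc{E-Call} fires. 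The \textsc{T-If} case steps by its measurement rule, exactly as in the source language. The \textsc{T-Let} case is handled as its source counterpart: write $e = \mathbf{let}\,(x_1,\dots,x_m) = e_1\,\mathbf{in}\,e_2$ with $\Gamma = \Gamma_1, \Gamma_2$; if $e_1$ is a variable tuple, \textsc{E-Let1} applies; otherwise reconstruct a well-formed runtime state for $\qclosure{\rho, e_1}$ under $\Gamma_1$ (using $\Gamma_1 \subseteq \Gamma$ to transport the $\rho$/graph conditions), invoke the induction hypothesis to get a step of $\qclosure{\rho, e_1}$, and lift it by \textsc{E-Let2}.

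The cases \textsc{T-Swap} and \textsc{T-Cnot} are the only place the coupling graph enters, and they are the crux. Here $e = \mathbf{let}\,(x_1, x_2) = \mathbf{cnot}(y_1, y_2)\,\mathbf{in}\,e'$ (resp. with \textbf{swap}), and the typing rule gives $\Gamma(y_1) = \texttt{q}(\alpha_1)$, $\Gamma(y_2) = \texttt{q}(\alpha_2)$ with $\alpha_1 \sim \alpha_2 \in \Phi$. To fire \textsc{E-Cnot} (resp. \textsc{E-Swap}) we need $(y_1, y_2) \in E(G)$. This is precisely what the well-typedness of the runtime state must guarantee — namely the condition, already stated in \cref{thm:tgt-soundness}, that $y : \texttt{q}(\beta),\, y' : \texttt{q}(\beta') \in \Gamma$ and $\beta \sim \beta' \in \Phi$ imply $(y, y') \in E(G)$. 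Given that clause, the reduction step exists. I expect the only genuine obstacle to be bookkeeping: making sure this graph-compatibility clause is built into the runtime-state judgment with the right strength, and checking in the \textsc{T-Let} case that it is inherited by the sub-state $\qclosure{\rho, e_1}$ under $\Gamma_1$; everything else is routine.
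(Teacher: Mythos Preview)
Your approach is correct and matches the paper's: the paper's proof is literally the single sentence ``By the induction on the typing derivation of $e$,'' and your proposal is a faithful elaboration of exactly that induction, including the key observation that the \textsc{T-Cnot}/\textsc{T-Swap} cases rely on the graph-compatibility clause from \cref{thm:tgt-soundness} being built into the runtime-state judgment. One minor slip: the target language has no \textsc{T-Discard} rule (discard is replaced by $\mathbf{init}$ in the translation), so you can drop that case.
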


\begin{proof}
  By the induction on the typing derivation of $e$.
\end{proof}

\begin{lemma}{(Substitution variables)}\label{lem:tgt-subst-var}
  Suppose that $\Theta  \pipe  \Phi  \pipe  \Gamma  \languagessym{,}  \languagesmv{x_{{\mathrm{1}}}}  \languagessym{:}  \tau_{{\mathrm{1}}}  \languagessym{,} \, .. \, \languagessym{,}  \languagesmv{x_{\languagesmv{n}}}  \languagessym{:}  \tau_{\languagesmv{n}}  \vdash  e  \languagessym{:}  T$ and $ \vdash _{ \text{WF} }  \Phi  \pipe  \Gamma  \languagessym{,}  \languagesmv{y_{{\mathrm{1}}}}  \languagessym{:}  \tau_{{\mathrm{1}}}  \languagessym{,} \, .. \, \languagessym{,}  \languagesmv{y_{\languagesmv{n}}}  \languagessym{:}  \tau_{\languagesmv{n}} $.
  Then $\Theta  \pipe  \Phi  \pipe  \Gamma  \languagessym{,}  \languagesmv{y_{{\mathrm{1}}}}  \languagessym{:}  \tau_{{\mathrm{1}}}  \languagessym{,} \, .. \, \languagessym{,}  \languagesmv{y_{\languagesmv{n}}}  \languagessym{:}  \tau_{\languagesmv{n}}  \vdash  \languagessym{[}  \languagesmv{y_{{\mathrm{1}}}}  \slash  \languagesmv{x_{{\mathrm{1}}}}  \languagessym{,} \, .. \, \languagessym{,}  \languagesmv{y_{\languagesmv{n}}}  \slash  \languagesmv{x_{\languagesmv{n}}}  \languagessym{]} \, e  \languagessym{:}  T$.
\end{lemma}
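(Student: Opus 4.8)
The plan is to prove this by induction on the derivation of $\Theta \pipe \Phi \pipe \Gamma, x_1:\tau_1, \dots, x_n:\tau_n \vdash e : T$, by case analysis on its last rule (\cref{fig:tgt-full-typing}), mirroring the argument for \cref{lem:src-substitution}. Throughout, abbreviate $\gamma = [y_1/x_1, \dots, y_n/x_n]$ for the renaming (performed simultaneously, since a $y_i$ may coincide with some $x_j$), $\Gamma_x = x_1:\tau_1,\dots,x_n:\tau_n$, and $\Gamma_y = y_1:\tau_1,\dots,y_n:\tau_n$. The observation that makes most cases immediate is that $\gamma$ renames only \emph{term} variables: it does not alter the qidx annotations inside $\tau_i = \texttt{q}(\alpha_i)$, nor does it touch $\Phi$. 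Hence every qidx-level side condition of a typing rule survives verbatim once we record that $\dom(\Gamma_x)$ has been renamed to $\dom(\Gamma_y)$ --- concretely, $\Gamma(x) = \texttt{q}(\alpha)$ in \textsc{T-Init}, $\alpha_1 \sim \alpha_2 \in \Phi$ in \textsc{T-Cnot} and \textsc{T-Swap}, and $\sigma_\alpha \Phi' \subseteq \Phi$ together with $\sigma_\alpha \tau'_i = \tau_i$ in \textsc{T-Call} are all unaffected. The hypothesis $\vdash_{\text{WF}} \Phi \pipe \Gamma, \Gamma_y$ is what lets us conclude: it guarantees that $\Gamma, \Gamma_y$ is a legitimate context (the $y_i$ pairwise distinct and absent from $\dom(\Gamma)$), so the reconstructed judgment and its subcontexts are well-formed, and since the $\tau_i$ are unchanged it also discharges the well-formedness obligations of the subderivations.

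For the rules that bind no new variable --- \textsc{T-Init} and \textsc{T-If} --- the induction hypothesis applies directly to the premise(s), and the membership conditions $x \in \dom(\Gamma)$ and $\Gamma(x) = \texttt{q}(\alpha)$ persist because a substituted $x_i$ is replaced by $y_i$, which has the same type in $\Gamma, \Gamma_y$. For \textsc{T-Cnot} and \textsc{T-Swap}, whose premise types the continuation under a context extended with two fresh variables, I would first invoke the variable convention (equivalently, an $\alpha$-renaming lemma) to assume those fresh variables are disjoint from $\{y_1,\dots,y_n\}$, so that $\gamma$ commutes with the \textbf{let}-binder; then apply the induction hypothesis to the premise and rebuild the rule. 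For \textsc{T-Return} the context is exactly $\Gamma_x$, so $\Gamma = \emptyset$ and $e = (x_1,\dots,x_n)$; then $\gamma e = (y_1,\dots,y_n)$, which \textsc{T-Return} types at $\tau_1 \ast \dots \ast \tau_n = T$ under $\Gamma_y$, as required.

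The cases that need genuine care are \textsc{T-Let} and \textsc{T-Call}, where the ambient context is split. For \textsc{T-Let}, the conclusion context decomposes as $\Gamma, \Gamma_x = \Gamma^{(1)}, \Gamma^{(2)}$, and each binding $x_i:\tau_i$ lies in exactly one half; I would partition $\{1,\dots,n\}$ into $I_1 \uplus I_2$ accordingly, set $\gamma_j = [\,y_i/x_i \mid i \in I_j\,]$, and write $\Gamma^{(j)}$ as $\Gamma^{(j)}_0, (x_i:\tau_i)_{i \in I_j}$ with $\Gamma^{(j)}_0$ untouched by $\gamma$. Applying the induction hypothesis to $e_1$ under $\gamma_1$ and to $e_2$ under $\gamma_2$ --- after, as before, choosing the \textbf{let}-bound variables fresh for all the $y_i$ --- and recombining via \textsc{T-Let} gives a derivation of $\gamma(\mathbf{let}\,(\dots) = e_1\,\mathbf{in}\,e_2) : T$ under a context that is a permutation of $\Gamma, \Gamma_y$ (and contexts are sets, so order is immaterial). \textsc{T-Call} is analogous, with the extra observation that the qidx substitution $\sigma_\alpha$ in that rule acts on a syntactic category disjoint from the one $\gamma$ rewrites, so the two commute and the premises mentioning $\sigma_\alpha \tau''_i$ carry over unchanged. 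I expect these two context-splitting cases --- in particular, correctly distributing the simultaneous renaming across the two halves of the context and verifying well-formedness of each half --- to be the only real obstacle; everything else is a mechanical reconstruction of the derivation.
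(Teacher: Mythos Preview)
Your proposal is correct and follows exactly the approach the paper takes: the paper's own proof is the single line ``Straightforward induction on the typing derivation of $e$,'' and your sketch is a careful elaboration of precisely that induction. One small imprecision: in the \textsc{T-Return} case the ambient $\Gamma$ need not be empty (the returned tuple may mix variables from $\Gamma$ with the $x_i$), but the argument goes through unchanged since $\gamma$ simply renames the $x_i$ among the tuple components and leaves the rest fixed.
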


\begin{proof}
  Straightforward induction on the typing dervivation of $e$.
\end{proof}

\begin{lemma}\label{lem:tgt-rename-qidx}
  If $\Theta  \pipe  \Phi  \pipe  \Gamma  \vdash  e  \languagessym{:}  T$,
  then $\Theta  \pipe  \Phi  \pipe   \sigma _{ \alpha }  \, \Gamma  \vdash  e  \languagessym{:}   \sigma _{ \alpha }  \, T$
  for any $ \sigma _{ \alpha }  = \languagessym{[}  \alpha'_{{\mathrm{1}}}  \slash  \alpha_{{\mathrm{1}}}  \languagessym{,} \, .. \, \languagessym{,}  \alpha'_{\languagesmv{n}}  \slash  \alpha_{\languagesmv{n}}  \languagessym{]}$
  such that $\forall \languagesmv{x}  \languagessym{:}   \texttt{q}( \alpha_{{\mathrm{1}}} )   \languagessym{,}  \languagesmv{y}  \languagessym{:}   \texttt{q}( \alpha_{{\mathrm{2}}} )  \in  \sigma _{ \alpha }  \, \Gamma .\ x \neq y \Rightarrow \alpha_{{\mathrm{1}}} \neq \alpha_{{\mathrm{2}}}$.
\end{lemma}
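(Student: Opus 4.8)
The plan is to prove the lemma by induction on the derivation of $\Theta \pipe \Phi \pipe \Gamma \vdash e : T$, applying the qidx renaming $\sigma_\alpha$ throughout. The first observation is that $\sigma_\alpha$ touches only qidxs: it leaves the term variables, the expression $e$ itself, and the function environment $\Theta$ unchanged, so the conclusion carries the very same $e$, and only the qidx annotations inside $\Gamma$ and $T$ are moved. I would record the routine homomorphism facts — $\sigma_\alpha$ commutes with context concatenation, $\sigma_\alpha(\texttt{q}(\beta)) = \texttt{q}(\sigma_\alpha\beta)$, and $\sigma_\alpha(\tau_1 \languagessym{*} \dots \languagessym{*} \tau_n) = \sigma_\alpha\tau_1 \languagessym{*} \dots \languagessym{*} \sigma_\alpha\tau_n$ — and note that the distinctness hypothesis on $\sigma_\alpha\Gamma$ is inherited by every sub-context obtained by splitting $\Gamma$, so it can be threaded through recursive calls to the induction hypothesis.

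The base case (\textsc{T-Return}) is immediate: apply $\sigma_\alpha$ pointwise to the returned tuple's type and re-apply the rule, the distinctness hypothesis guaranteeing $\sigma_\alpha\Gamma$ is still an admissible context. The cases (\textsc{T-Init}), (\textsc{T-If}) and (\textsc{T-Let}) follow directly from the induction hypothesis after pushing $\sigma_\alpha$ through each premise; for (\textsc{T-Let}) one splits $\sigma_\alpha\Gamma$ as $\sigma_\alpha\Gamma_1, \sigma_\alpha\Gamma_2$, applies the hypothesis to $e_1$ and to $e_2$ (with the bound variables now carrying $\sigma_\alpha\tau_1, \dots, \sigma_\alpha\tau_m$), and reassembles. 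Lemma~\ref{lem:tgt-subst-var} is not invoked anywhere here, since $\sigma_\alpha$ never renames a term variable.

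The cases that require genuine care are those whose rules consult $\Phi$: (\textsc{T-Cnot}), (\textsc{T-Swap}) and (\textsc{T-Call}). For (\textsc{T-Cnot}) and (\textsc{T-Swap}) the premise demands $\alpha_1 \sim \alpha_2 \in \Phi$ for the qidxs of the two operand variables, whereas the conclusion presents those variables with qidxs $\sigma_\alpha\alpha_1$, $\sigma_\alpha\alpha_2$; so one must re-establish this connectivity obligation after renaming — concretely, I would check that every qidx occurring in $\Phi$ is fixed by $\sigma_\alpha$ (equivalently, that $\sigma_\alpha$ only renames the "local" qidxs of $\Gamma$ not pinned down by $\Phi$), which is precisely the regime in which the lemma will be invoked, and which makes $\sigma_\alpha\alpha_i = \alpha_i$ for the relevant indices. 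For (\textsc{T-Call}) there is the extra wrinkle of polymorphic instantiation: the rule uses some $\sigma_\beta = [\overline{\beta'}/\overline{\beta}]$ with side-condition $\sigma_\beta\Phi' \subseteq \Phi$, so after applying $\sigma_\alpha$ one re-derives the call through the composed substitution $\sigma_\alpha \circ \sigma_\beta$ — $\alpha$-renaming the bound $\overline{\beta}$ apart from the domain of $\sigma_\alpha$ first — and must verify the composed constraint still lands inside $\Phi$, using the induction hypothesis on the continuation $e$ under the renamed result context. I expect exactly this bookkeeping — pinning down how $\sigma_\alpha$ interacts with the fixed $\Phi$ and with the instantiation substitution in (\textsc{T-Call}) — to be the main obstacle; every remaining case is a mechanical push of $\sigma_\alpha$ through a premise.
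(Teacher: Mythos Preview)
Your plan—induction on the typing derivation—is exactly the paper's; its proof is the single line ``Straightforward induction on the typing derivation of $e$.'' Where you diverge is that you actually open the (\textsc{T-Cnot}), (\textsc{T-Swap}) and (\textsc{T-Call}) cases, and you are right that they are not straightforward under the lemma \emph{as stated}: after renaming, the connectivity premise $\alpha_1 \sim \alpha_2 \in \Phi$ must be re-established as $\sigma_\alpha\alpha_1 \sim \sigma_\alpha\alpha_2 \in \Phi$, and the only stated side-condition on $\sigma_\alpha$ (distinctness of qidxs in $\sigma_\alpha\Gamma$) does not give you that.

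Your proposed way out, though, is not a proof step but an added hypothesis. Writing that you ``would check that every qidx occurring in $\Phi$ is fixed by $\sigma_\alpha$'' presumes something you cannot derive from the lemma's assumptions; what you would really be proving is the lemma under the extra hypothesis $\sigma_\alpha\Phi \subseteq \Phi$ (or, more generally, a version whose premise is typed under some $\Phi_0$ and whose conclusion is typed under any $\Phi \supseteq \sigma_\alpha\Phi_0$). That strengthened form is exactly what the lone use of this lemma—in the (\textsc{E-Call}) case of subject reduction—actually needs and supplies via the side-condition $\sigma_\alpha\Phi' \subseteq \Phi$. So your diagnosis is correct and your instinct about ``the regime in which the lemma will be invoked'' is sound; just make the added hypothesis explicit in the statement rather than presenting it as a derivable check. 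The paper's one-liner glosses over this point entirely.
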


\begin{proof}
  Straightforward induction on the typing dervivation of $e$.
\end{proof}

\begin{lemma}\label{lem:tgt-ext-O}
  If $\Theta  \pipe  \Phi  \pipe  \Gamma  \vdash  e  \languagessym{:}  T$,
  then $\Theta'  \pipe  \Phi  \pipe  \Gamma  \vdash  e  \languagessym{:}  T$ for any $\Theta \subseteq \Theta'$.
\end{lemma}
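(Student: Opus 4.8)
The plan is to prove this weakening lemma by a routine induction on the derivation of $\Theta \pipe \Phi \pipe \Gamma \vdash e : T$. The guiding observation is that the function type environment $\Theta$ plays no active role in any typing rule except (\textsc{T-Call}): in every other rule it is simply threaded unchanged from the conclusion to each premise, and no side condition mentions it. Hence for those cases the argument is entirely mechanical.

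Concretely, I would first dispatch the cases (\textsc{T-Return}), (\textsc{T-Init}), (\textsc{T-Swap}), (\textsc{T-Cnot}), (\textsc{T-Let}), and (\textsc{T-If}). In each of these I apply the induction hypothesis to every subderivation, replacing $\Theta$ by $\Theta'$ (the hypotheses $\Theta \subseteq \Theta'$ carries over verbatim to the premises), and then re-assemble the derivation with the same rule; the conditions on $\Phi$, $\Gamma$, and the bound variables are unaffected since they do not refer to the function environment.

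The only case that needs a word of comment is (\textsc{T-Call}), which uses the lookup $\Theta(f)$ of $f$'s (polymorphic, qualified) type scheme. Since $\Theta \subseteq \Theta'$ and a function type environment assigns at most one type to each function name, we have $\Theta'(f) = \Theta(f)$, so exactly the same type scheme, instantiation substitution $\sigma_\alpha$, and constraint-entailment condition $\sigma_\alpha \Phi' \subseteq \Phi$ are available under $\Theta'$; applying the induction hypothesis to the premise for the continuation and re-applying (\textsc{T-Call}) closes the case. I do not expect any genuine obstacle here — the lemma is a standard structural weakening — and the single substantive point is precisely that $\Theta \subseteq \Theta'$ preserves the result of looking up a function name, which follows from the standing assumption that function environments are functional.
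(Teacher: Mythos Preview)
Your proposal is correct and follows exactly the approach the paper takes: a straightforward induction on the typing derivation, with (\textsc{T-Call}) being the only case where $\Theta$ is actually consulted and where the inclusion $\Theta \subseteq \Theta'$ ensures $\Theta'(f) = \Theta(f)$. The paper simply records this as ``Straightforward induction on the typing derivation of $e$'' without spelling out the cases, so your write-up is a faithful (and more detailed) rendering of the same argument.
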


\begin{proof}
  Straightforward induction on the typing derivation of $e$.
\end{proof}


Finally, we give the proof of the subject reduction lemma.

\begin{lemma}{(Subject reduction)}\label{lem:tgt-SR}
  If $ \Phi  \pipe  \Gamma  \vdash_{ D }  \qclosure{\rho, e}$ and $\qclosure{\rho, e}  \rightarrow _{D, G} \qclosure{\rho', e'}$,
  then $ \Phi  \pipe  \Gamma  \vdash_{ D }  \qclosure{\rho', e'}$.
\end{lemma}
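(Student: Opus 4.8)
The plan is to proceed by induction on the derivation of the transition $\qclosure{\rho, e} \rightarrow_{D,G} \qclosure{\rho', e'}$, mirroring the structure of the source-language subject reduction proof (\cref{lem:src-sr}) but simplified because the number of qubits is fixed: the type environment $\Gamma$ and the connectivity context $\Phi$ do not change along reductions. For each transition rule I would invert the typing judgment $\Theta \pipe \Phi \pipe \Gamma \vdash e : T$ from the hypothesis $\Phi \pipe \Gamma \vdash_D \qclosure{\rho, e}$ to recover the typing of the redex, then reconstruct a typing derivation for the contractum $e'$, and finally check that the side conditions of the runtime-state typing rule (well-formedness of $\Phi \pipe \Gamma$, and $\dom(\Gamma) \subseteq \Var(\rho')$) are preserved.

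First I would handle the easy rules: in \textsc{E-Init}, \textsc{E-Discard} the expression shrinks to its continuation $e$, which is already typed under the same $\Theta, \Phi, \Gamma$ by inversion of \textsc{T-Init} (resp. the discard rule), and the density-operator update only changes the state of one qubit already in $\Var(\rho)$, so $\Var(\rho') = \Var(\rho)$. For \textsc{E-Swap} and \textsc{E-Cnot} I would use inversion of \textsc{T-Swap}/\textsc{T-Cnot} to get that the continuation $e$ is typed under $\Gamma$ extended with the swapped/renamed qubit bindings; then \cref{lem:tgt-subst-var} gives typability of $[y_1/x_1, y_2/x_2]e$ under the appropriate environment. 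A subtle point here is that \textsc{T-Swap} exchanges the qidxs $\alpha_1, \alpha_2$ in the environment, so I must check that the variable substitution in \textsc{E-Swap} lines up with the type-level exchange — this is where I would lean on \cref{lem:tgt-subst-var} applied carefully, possibly together with \cref{lem:tgt-rename-qidx}. The \textsc{E-Let1} case is a direct application of \cref{lem:tgt-subst-var}, and \textsc{E-Let2} is the one congruence rule, handled by the induction hypothesis on the subderivation $\qclosure{\rho, e_1} \rightarrow_{D,G} \qclosure{\rho', e_1'}$ after projecting the runtime-state typing to the sub-expression $e_1$ (which needs the observation $\Gamma_1 \subseteq \Gamma_1, \Gamma_2$ and $\dom(\Gamma_1) \subseteq \Var(\rho)$), then reapplying \textsc{T-Let}.

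The main obstacle will be the function-call case \textsc{E-Call}. Here I would invert \textsc{T-Call} to obtain the polymorphic type $\Theta(f) = \forall \overline{\alpha}.\,\Phi' \Rightarrow \tau'_1 * \dots * \tau'_n \rightarrow \tau''_1 * \dots * \tau''_m$ together with a qidx substitution $\sigma_\alpha$ satisfying $\sigma_\alpha \Phi' \subseteq \Phi$ and $\sigma_\alpha \tau'_i = \tau_i$; from $\Theta \vdash D$ I would extract that the body $e_1$ of $f$ is well-typed under $\Phi'$ with the formal parameters at types $\tau'_i$. I then need to instantiate that body derivation with $\sigma_\alpha$ (using \cref{lem:tgt-rename-qidx} to push $\sigma_\alpha$ through, checking the distinctness side condition on the resulting environment), weaken the function environment if necessary (\cref{lem:tgt-ext-O}), and rename the formal parameters $y'_i$ to the actual arguments $y_i$ (\cref{lem:tgt-subst-var}); the inclusion $\sigma_\alpha \Phi' \subseteq \Phi$ is exactly what lets the substituted body typecheck under the caller's $\Phi$. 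Finally I would reapply \textsc{T-Let} to glue the instantiated body into the continuation $e_2$, re-deriving the original conclusion type $T$. Verifying that all the freshness and distinctness hypotheses needed by \cref{lem:tgt-rename-qidx} and \cref{lem:tgt-subst-var} are actually discharged by the well-formedness condition $\vdash_{\text{WF}} \Phi \pipe \Gamma$ baked into the runtime-state typing rule is the delicate bookkeeping I expect to spend the most care on.
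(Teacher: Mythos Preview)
Your proposal is correct and follows essentially the same approach as the paper's proof: induction on the transition derivation, with \cref{lem:tgt-subst-var} handling \textsc{E-Init}, \textsc{E-Let1}, \textsc{E-Cnot} (and \textsc{E-Swap}), the induction hypothesis for \textsc{E-Let2}, and the combination of \cref{lem:tgt-ext-O}, \cref{lem:tgt-rename-qidx}, \cref{lem:tgt-subst-var} followed by \textsc{T-Let} for \textsc{E-Call}. Two small remarks: the target language has no \textsc{E-Discard} rule (discard is replaced by \texttt{init} during translation), and \textsc{T-Swap} as stated in the paper does \emph{not} exchange qidxs in the environment (both $y_i$ and $x_i$ carry $\texttt{q}(\alpha_i)$), so that case is no more delicate than \textsc{E-Cnot}.
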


\begin{proof}
  By induction on the derivation of $\qclosure{\rho, e}  \rightarrow _{D, G} \qclosure{\rho', e'}$.
  \begin{itemize}
  \item Case of (\textsc{E-Init}), (\textsc{E-Let1}), (\textsc{E-Cnot}):
    Immediately follows \cref{lem:tgt-subst-var}.

  \item Case of (\textsc{E-Let2}):
    \begin{gather*}
      \qclosure{\rho, \languageskw{let} \, \languagessym{(}  \languagesmv{x_{{\mathrm{1}}}}  \languagessym{,} \, .. \, \languagessym{,}  \languagesmv{x_{\languagesmv{m}}}  \languagessym{)}  \languagessym{=}  e_{{\mathrm{1}}} \, \languageskw{in} \, e_{{\mathrm{2}}}}  \rightarrow _{D, G} \qclosure{\rho', \languageskw{let} \, \languagessym{(}  \languagesmv{x_{{\mathrm{1}}}}  \languagessym{,} \, .. \, \languagessym{,}  \languagesmv{x_{\languagesmv{n}}}  \languagessym{)}  \languagessym{=}  e'_{{\mathrm{1}}} \, \languageskw{in} \, e_{{\mathrm{2}}}} \\
      \Theta  \pipe  \Phi  \pipe  \Gamma_{{\mathrm{1}}}  \vdash  e_{{\mathrm{1}}}  \languagessym{:}  \tau'_{{\mathrm{1}}}  \languagessym{*} \, .. \, \languagessym{*}  \tau'_{\languagesmv{m}} \\
      \Theta  \pipe  \Phi  \pipe  \Gamma_{{\mathrm{2}}}  \languagessym{,}  \languagesmv{x_{{\mathrm{1}}}}  \languagessym{:}  \tau'_{{\mathrm{1}}}  \languagessym{,} \, .. \, \languagessym{,}  \languagesmv{x_{\languagesmv{m}}}  \languagessym{:}  \tau'_{\languagesmv{m}}  \vdash  e_{{\mathrm{2}}}  \languagessym{:}  \tau_{{\mathrm{1}}}  \languagessym{*} \, .. \, \languagessym{*}  \tau_{\languagesmv{n}}
    \end{gather*}
    By the induction hypothesis, $ \Phi  \pipe  \Gamma_{{\mathrm{1}}}  \vdash_{ D }  \qclosure{\rho', e'_{{\mathrm{1}}}}$.
    That is, $\Theta  \pipe  \Phi  \pipe  \Gamma_{{\mathrm{1}}}  \vdash  e'_{{\mathrm{1}}}  \languagessym{:}  \tau'_{{\mathrm{1}}}  \languagessym{*} \, .. \, \languagessym{*}  \tau'_{\languagesmv{m}}$.
    By (\textsc{T-Let}), $\Theta  \pipe  \Phi  \pipe  \Gamma_{{\mathrm{1}}}  \languagessym{,}  \Gamma_{{\mathrm{2}}}  \vdash  \languageskw{let} \, \languagessym{(}  \languagesmv{x_{{\mathrm{1}}}}  \languagessym{,} \, .. \, \languagessym{,}  \languagesmv{x_{\languagesmv{m}}}  \languagessym{)}  \languagessym{=}  e_{{\mathrm{1}}} \, \languageskw{in} \, e_{{\mathrm{2}}}  \languagessym{:}  \tau_{{\mathrm{1}}}  \languagessym{*} \, .. \, \languagessym{*}  \tau_{\languagesmv{n}}$.
    Therefore, $ \Phi  \pipe  \Gamma_{{\mathrm{1}}}  \languagessym{,}  \Gamma_{{\mathrm{2}}}  \vdash_{ D }  \qclosure{\rho', \languageskw{let} \, \languagessym{(}  \languagesmv{x_{{\mathrm{1}}}}  \languagessym{,} \, .. \, \languagessym{,}  \languagesmv{x_{\languagesmv{m}}}  \languagessym{)}  \languagessym{=}  e'_{{\mathrm{1}}} \, \languageskw{in} \, e_{{\mathrm{2}}}}$.

  \item Case of (\textsc{E-Call}):
    \begin{gather*}
       \languagesmv{f}  \mapsto ( \languagesmv{y'_{{\mathrm{1}}}}  \languagessym{,} \, .. \, \languagessym{,}  \languagesmv{y'_{\languagesmv{l}}} )  e_{{\mathrm{1}}}  \in D
        \quad\quad   \sigma _{ \languagesmv{x} }  = \languagessym{[}  \languagesmv{y_{{\mathrm{1}}}}  \slash  \languagesmv{y'_{{\mathrm{1}}}}  \languagessym{,} \, .. \, \languagessym{,}  \languagesmv{y_{\languagesmv{l}}}  \slash  \languagesmv{y'_{\languagesmv{l}}}  \languagessym{]} \\
      \qclosure{\rho, \languageskw{let} \, \languagessym{(}  \languagesmv{x_{{\mathrm{1}}}}  \languagessym{,} \, .. \, \languagessym{,}  \languagesmv{x_{\languagesmv{m}}}  \languagessym{)}  \languagessym{=}  \languagesmv{f}  \languagessym{(}  \languagesmv{y_{{\mathrm{1}}}}  \languagessym{,} \, .. \, \languagessym{,}  \languagesmv{y_{\languagesmv{l}}}  \languagessym{)} \, \languageskw{in} \, e_{{\mathrm{2}}}}
         \rightarrow  \qclosure{\rho, \languageskw{let} \, \languagessym{(}  \languagesmv{x_{{\mathrm{1}}}}  \languagessym{,} \, .. \, \languagessym{,}  \languagesmv{x_{\languagesmv{m}}}  \languagessym{)}  \languagessym{=}   \sigma _{ \languagesmv{x} }  \, e_{{\mathrm{1}}} \, \languageskw{in} \, e_{{\mathrm{2}}}} \\
      \Theta  \languagessym{(}  \languagesmv{f}  \languagessym{)} =  \forall   \overline{ \alpha }   .   \Phi'  \Rightarrow  \tau''_{{\mathrm{1}}}  \languagessym{*} \, .. \, \languagessym{*}  \tau''_{\languagesmv{m}}  \rightarrow  \tau'''_{{\mathrm{1}}}  \languagessym{*} \, .. \, \languagessym{*}  \tau'''_{\languagesmv{l}}   \\
      \andalso  \sigma _{ \alpha }  = \languagessym{[}   \overline{ \alpha' } / \overline{ \alpha }   \languagessym{]}
      \andalso  \sigma _{ \alpha } \Phi' \subseteq \Phi
      \andalso \forall i \in \{1, \dots, m\}.  \sigma _{ \alpha }  \tau''_{\languagesmv{i}} = \tau'_{\languagesmv{i}} \\
      \Theta  \pipe  \Phi  \pipe  \Gamma  \languagessym{,}  \languagesmv{x_{{\mathrm{1}}}}  \languagessym{:}   \sigma _{ \alpha }  \, \tau'''_{{\mathrm{1}}}  \languagessym{,} \, .. \, \languagessym{,}  \languagesmv{x_{\languagesmv{m}}}  \languagessym{:}   \sigma _{ \alpha }  \, \tau'''_{\languagesmv{m}}  \vdash  e_{{\mathrm{2}}}  \languagessym{:}  \tau_{{\mathrm{1}}}  \languagessym{*} \, .. \, \languagessym{*}  \tau_{\languagesmv{n}}
    \end{gather*}
    For some $\Theta' \subseteq \Theta$,
    \begin{gather*}
      \Theta'  \languagessym{,}  \languagesmv{f}  \languagessym{:}   \Phi  \Rightarrow  \tau''_{{\mathrm{1}}}  \languagessym{*} \, .. \, \languagessym{*}  \tau''_{\languagesmv{l}}  \rightarrow  \tau'''_{{\mathrm{1}}}  \languagessym{*} \, .. \, \languagessym{*}  \tau'''_{\languagesmv{m}}   \pipe  \Phi  \pipe  \languagesmv{y'_{{\mathrm{1}}}}  \languagessym{:}  \tau''_{{\mathrm{1}}}  \languagessym{,} \, .. \, \languagessym{,}  \languagesmv{y'_{\languagesmv{l}}}  \languagessym{:}  \tau''_{\languagesmv{l}}  \vdash  e_{{\mathrm{2}}}  \languagessym{:}  \tau'''_{{\mathrm{1}}}  \languagessym{*} \, .. \, \languagessym{*}  \tau'''_{\languagesmv{m}}
    \end{gather*}
    By \cref{lem:tgt-ext-O}, $\Theta  \pipe  \Phi  \pipe  \languagesmv{y'_{{\mathrm{1}}}}  \languagessym{:}  \tau''_{{\mathrm{1}}}  \languagessym{,} \, .. \, \languagessym{,}  \languagesmv{y'_{\languagesmv{l}}}  \languagessym{:}  \tau''_{\languagesmv{l}}  \vdash  e_{{\mathrm{2}}}  \languagessym{:}  \tau'''_{{\mathrm{1}}}  \languagessym{*} \, .. \, \languagessym{*}  \tau'''_{\languagesmv{m}}$.
    By \cref{lem:tgt-subst-var}, $\Theta  \pipe  \Phi  \pipe  \languagesmv{y_{{\mathrm{1}}}}  \languagessym{:}  \tau''_{{\mathrm{1}}}  \languagessym{,} \, .. \, \languagessym{,}  \languagesmv{y_{\languagesmv{l}}}  \languagessym{:}  \tau''_{\languagesmv{l}}  \vdash   \sigma _{ \languagesmv{x} }  \, e_{{\mathrm{2}}}  \languagessym{:}  \tau'''_{{\mathrm{1}}}  \languagessym{*} \, .. \, \languagessym{*}  \tau'''_{\languagesmv{m}}$.
    By \cref{lem:tgt-rename-qidx}, $\Theta  \pipe  \Phi  \pipe  \languagesmv{y_{{\mathrm{1}}}}  \languagessym{:}  \tau'_{{\mathrm{1}}}  \languagessym{,} \, .. \, \languagessym{,}  \languagesmv{y_{\languagesmv{l}}}  \languagessym{:}  \tau'_{\languagesmv{l}}  \vdash   \sigma _{ \languagesmv{x} }  \, e_{{\mathrm{2}}}  \languagessym{:}   \sigma _{ \alpha }  \, \tau'''_{{\mathrm{1}}}  \languagessym{*} \, .. \, \languagessym{*}   \sigma _{ \alpha }  \, \tau'''_{\languagesmv{m}}$.
    Therefore, by (\textsc{T-Let}),
    \begin{gather*}
      \Theta  \pipe  \Phi  \pipe  \Gamma  \languagessym{,}  \languagesmv{y_{{\mathrm{1}}}}  \languagessym{:}  \tau'_{{\mathrm{1}}}  \languagessym{,} \, .. \, \languagessym{,}  \languagesmv{y_{\languagesmv{l}}}  \languagessym{:}  \tau'_{\languagesmv{l}}  \vdash  \languageskw{let} \, \languagessym{(}  \languagesmv{x_{{\mathrm{1}}}}  \languagessym{,} \, .. \, \languagessym{,}  \languagesmv{x_{\languagesmv{m}}}  \languagessym{)}  \languagessym{=}   \sigma _{ \languagesmv{x} }  \, e_{{\mathrm{1}}} \, \languageskw{in} \, e_{{\mathrm{2}}}  \languagessym{:}  \tau_{{\mathrm{1}}}  \languagessym{*} \, .. \, \languagessym{*}  \tau_{\languagesmv{n}}
    \end{gather*}

  \end{itemize}
\end{proof}

\section{The Proofs of Algorithms}\label{sec:appendix-alg}

First, we give the full definition of \textsc{QubitAllocExp} in \cref{alg:appendix-qalloc-exp}.

\begin{algorithm}[tbp]
  \caption{Qubit Allocation for Expressions}
  \label{alg:appendix-qalloc-exp}
  \begin{algorithmic}[1]
    \Function{QubitAllocExp}{$e, \Theta, \Phi, \Gamma_{{\mathrm{1}}}, \Gamma_{{\mathrm{2}}}$}
      \If {$e \equiv \languagessym{(}  \languagesmv{x_{{\mathrm{1}}}}  \languagessym{,} \, .. \, \languagessym{,}  \languagesmv{x_{\languagesmv{n}}}  \languagessym{)}$}
        \State (The expected return type is $ \texttt{q}( \alpha_{{\mathrm{1}}} )   \languagessym{*} \, .. \, \languagessym{*}   \texttt{q}( \alpha_{\languagesmv{k}} ) $)
        \Assert{$\Gamma_{{\mathrm{1}}}, \Gamma_{{\mathrm{2}}} = \languagesmv{x_{{\mathrm{1}}}}  \languagessym{:}   \texttt{q}( \beta_{{\mathrm{1}}} )   \languagessym{,} \, .. \, \languagessym{,}  \languagesmv{x_{\languagesmv{k}}}  \languagessym{:}   \texttt{q}( \beta_{\languagesmv{k}} ) $}
        \State $\Psi \gets$ \Call{TokenSwapping}{$( \mathit{QV}( \Gamma_{{\mathrm{1}}} ) , \Phi), \{\alpha_{\languagesmv{i}} \mapsto \beta_{\languagesmv{i}}\}_{i=1}^k$}
        \State \Return \Call{InsertSwaps}{$\languagessym{(}  \languagesmv{x_{{\mathrm{1}}}}  \languagessym{,} \, .. \, \languagessym{,}  \languagesmv{x_{\languagesmv{k}}}  \languagessym{)}, \Gamma_{{\mathrm{1}}} \cup \Gamma_{{\mathrm{2}}}, \Psi$}

      \ElsIf {$e \equiv \languageskw{let} \, \languagesmv{x}  \languagessym{=} \, \languageskw{init} \, \languagessym{()} \, \languageskw{in} \, e'$}
        \State Take $\languagesmv{x'}  \languagessym{:}   \texttt{q}( \alpha )  \in \Gamma_{{\mathrm{2}}}$
        \State \Return \Call{QubitAllocExp}{$[x' / x]\Delta', \Theta, \Phi, \Gamma_{{\mathrm{1}}} \uplus \{\languagesmv{x'}  \languagessym{:}   \texttt{q}( \alpha ) \}, \Gamma_{{\mathrm{2}}} \backslash \languagesmv{x'}$}

      \ElsIf {$e \equiv \languageskw{discard} \, \languagesmv{x}  \languagessym{;}  e'$}
        \State $e' \gets$ \Call{QubitAllocExp}{$\Delta', \Theta, \Phi, \Gamma_{{\mathrm{1}}}, \Gamma_{{\mathrm{2}}}$}
        \State \Return $\languageskw{init} \, \languagesmv{x}  \languagessym{;}  e'$

      \ElsIf {$e \equiv \languageskw{let} \, \languagessym{(}  \languagesmv{x_{{\mathrm{1}}}}  \languagessym{,} \, .. \, \languagessym{,}  \languagesmv{x_{\languagesmv{n}}}  \languagessym{)}  \languagessym{=}  \languagesmv{f}  \languagessym{(}  \languagesmv{y_{{\mathrm{1}}}}  \languagessym{,} \, .. \, \languagessym{,}  \languagesmv{y_{\languagesmv{m}}}  \languagessym{)} \, \languageskw{in} \, e'$}
        \Assert{$\Theta  \languagessym{(}  \languagesmv{f}  \languagessym{)} =  \forall   \overline{ \alpha }   .   \Phi'  \Rightarrow  \tau_{{\mathrm{1}}}  \languagessym{*} \, .. \, \languagessym{*}  \tau_{\languagesmv{k}}  \rightarrow  \tau'_{{\mathrm{1}}}  \languagessym{*} \, .. \, \languagessym{*}  \tau'_{\languagesmv{k}}  $}
        \State $G \gets ( \mathit{QV}( \Gamma_{{\mathrm{1}}} ) , \Phi); G_f \gets ( \overline{ \alpha } , \Phi')$
        \State $\phi \gets$ \Call{SubgraphIsomorphism}{$G, G_f$}
        \State Take $y_{m+1} : \texttt{q}(\beta_{m+1}), \dots, y_k :  \texttt{q}( \beta_{\languagesmv{k}} )  \in \Gamma_{{\mathrm{2}}}$
        \Assert{$\tau_{{\mathrm{1}}} =  \texttt{q}( \alpha_{{\mathrm{1}}} )  \land \cdots \land \tau_{\languagesmv{k}} =  \texttt{q}( \alpha_{\languagesmv{k}} )  \land \languagesmv{y_{{\mathrm{1}}}}  \languagessym{:}   \texttt{q}( \beta_{{\mathrm{1}}} )   \languagessym{,} \, .. \, \languagessym{,}  \languagesmv{y_{\languagesmv{m}}}  \languagessym{:}   \texttt{q}( \beta_{\languagesmv{m}} )  \in \Gamma_{{\mathrm{1}}}$}
        \State $\Psi \gets$ \Call{TokenSwapping}{$G, \{\beta_{\languagesmv{i}} \mapsto \phi (\alpha_{\languagesmv{i}})\}_{i=1}^k$}
        \State $ \sigma _{ \alpha }  \gets \languagessym{[}   \phi ( \alpha_{{\mathrm{1}}} )   \slash  \alpha_{{\mathrm{1}}}  \languagessym{,} \, .. \, \languagessym{,}   \phi ( \alpha_{\languagesmv{k}} )   \slash  \alpha_{\languagesmv{k}}  \languagessym{]}$
        \State $\Gamma_{{\mathrm{3}}} \gets (\Psi(\Gamma_{{\mathrm{1}}}) \backslash \{\languagesmv{y_{{\mathrm{1}}}}  \languagessym{,} \, .. \, \languagessym{,}  \languagesmv{y_{\languagesmv{m}}}\}) \cup \{\languagesmv{x_{{\mathrm{1}}}}  \languagessym{:}   \sigma _{ \alpha }  \, \tau'_{{\mathrm{1}}}  \languagessym{,} \, .. \, \languagessym{,}  \languagesmv{x_{\languagesmv{n}}}  \languagessym{:}   \sigma _{ \alpha }  \, \tau'_{\languagesmv{n}}\}$
        \State $\Gamma_{{\mathrm{4}}} \gets (\Psi(\Gamma_{{\mathrm{2}}}) \backslash \{y_{m+1}, \dots, y_k\}) \cup \{x_{n+1} :  \sigma _{ \alpha } \tau_{n+1}', \dots, \languagesmv{x_{\languagesmv{k}}}  \languagessym{:}   \sigma _{ \alpha }  \, \tau'_{\languagesmv{k}}\}$, where $x_{n+1}, \dots, x_k$ are fresh variables.
        \State $e' \gets$ \Call{QubitAllocExp}{$e', \Theta, \Phi, \Gamma_{{\mathrm{3}}}, \Gamma_{{\mathrm{4}}}$}
        \State $e \gets \languageskw{let} \, \languagessym{(}  \languagesmv{x_{{\mathrm{1}}}}  \languagessym{,} \, .. \, \languagessym{,}  \languagesmv{x_{\languagesmv{k}}}  \languagessym{)}  \languagessym{=}  \languagesmv{f}  \languagessym{(}  \languagesmv{y_{{\mathrm{1}}}}  \languagessym{,} \, .. \, \languagessym{,}  \languagesmv{y_{\languagesmv{k}}}  \languagessym{)} \, \languageskw{in} \, e'$
        \State \Return \Call{InsertSwaps}{$e, \Gamma_{{\mathrm{1}}} \cup \Gamma_{{\mathrm{2}}}, \Psi$}

      \ElsIf{$e \equiv \languageskw{let} \, \languagessym{(}  \languagesmv{x_{{\mathrm{1}}}}  \languagessym{,}  \languagesmv{x_{{\mathrm{2}}}}  \languagessym{)}  \languagessym{=} \, \languageskw{cnot} \, \languagessym{(}  \languagesmv{y_{{\mathrm{1}}}}  \languagessym{,}  \languagesmv{y_{{\mathrm{2}}}}  \languagessym{)} \, \languageskw{in} \, e'$}
        \State $(\Gamma  \languagessym{(}  \languagesmv{y_{{\mathrm{1}}}}  \languagessym{)} =  \texttt{q}( \alpha_{{\mathrm{1}}} ) , \Gamma  \languagessym{(}  \languagesmv{y_{{\mathrm{2}}}}  \languagessym{)} =  \texttt{q}( \alpha_{{\mathrm{2}}} ) )$
        \State $\beta_{{\mathrm{1}}}  \languagessym{,} \, .. \, \languagessym{,}  \beta_{\languagesmv{L}} \gets$ \Call{ShortestPath}{$( \mathit{QV}( \Gamma_{{\mathrm{1}}}  \languagessym{,}  \Gamma_{{\mathrm{2}}} ) , \Phi), \alpha_{{\mathrm{1}}}, \alpha_{{\mathrm{2}}}$}
        \Assert{$\beta_{{\mathrm{1}}} = \alpha_{{\mathrm{1}}} \land \beta_{\languagesmv{L}} = \alpha_{{\mathrm{2}}}$}
        \State $\Psi \gets (\beta_{{\mathrm{1}}}  \languagessym{,}  \beta_{{\mathrm{2}}}), \dots, (\beta_{L-2}, \beta_{{\languagesmv{L}-1}})$
        \State $e_{{\mathrm{2}}} \gets$ \Call{QubitAllocExp}{$e', \Theta, \Phi,  \Psi (  \languagessym{[}  \languagesmv{x_{{\mathrm{1}}}}  \slash  \languagesmv{y_{{\mathrm{1}}}}  \languagessym{,}  \languagesmv{x_{{\mathrm{2}}}}  \slash  \languagesmv{y_{{\mathrm{2}}}}  \languagessym{]} \, \Gamma_{{\mathrm{1}}}  ) ,  \Psi (  \Gamma_{{\mathrm{2}}}  ) $}
        \State \Return \Call{InsertSwaps}{$e_{{\mathrm{2}}}, \Gamma_{{\mathrm{1}}} \cup \Gamma_{{\mathrm{2}}}$}
      \ElsIf{$e \equiv \languageskw{if} \, \languagesmv{x} \, \languageskw{then} \, e_{{\mathrm{1}}} \, \languageskw{else} \, e_{{\mathrm{2}}}$}
        \State $e'_{{\mathrm{1}}} \gets$ \Call{QubitAllocExp}{$e_{{\mathrm{1}}}, \Theta, \Phi, \Gamma_{{\mathrm{1}}}, \Gamma_{{\mathrm{2}}}$}
        \State $e'_{{\mathrm{2}}} \gets$ \Call{QubitAllocExp}{$e_{{\mathrm{2}}}, \Theta, \Phi, \Gamma_{{\mathrm{1}}}, \Gamma_{{\mathrm{2}}}$}
        \State \Return $\languageskw{if} \, \languagesmv{x} \, \languageskw{then} \, e'_{{\mathrm{1}}} \, \languageskw{else} \, e'_{{\mathrm{2}}}$
      \ElsIf{$e \equiv \languageskw{let} \, \languagessym{(}  \languagesmv{x_{{\mathrm{1}}}}  \languagessym{,} \, .. \, \languagessym{,}  \languagesmv{x_{\languagesmv{n}}}  \languagessym{)}  \languagessym{=}  e_{{\mathrm{1}}} \, \languageskw{in} \, e_{{\mathrm{2}}}$}
        \State $\Gamma'_{{\mathrm{1}}} \gets \{ \languagesmv{x}  \languagessym{:}  \Gamma_{{\mathrm{1}}}  \languagessym{(}  \languagesmv{x}  \languagessym{)} \pipe x \in \FV(e_{{\mathrm{1}}}) \}$
        \State $e'_{{\mathrm{1}}} \gets$ \Call{QubitAllocExp}{$e_{{\mathrm{1}}}, \Theta, \Phi, \Gamma'_{{\mathrm{1}}}, \Gamma_{{\mathrm{2}}}$}
        \State (The return type of $e'_{{\mathrm{1}}}$ is $\tau_{{\mathrm{1}}}  \languagessym{*} \, .. \, \languagessym{*}  \tau_{\languagesmv{k}}$)
        \State $\Gamma''_{{\mathrm{1}}} \gets \{ \languagesmv{x_{{\mathrm{1}}}}  \languagessym{:}  \tau_{{\mathrm{1}}}  \languagessym{,} \, .. \, \languagessym{,}  \languagesmv{x_{\languagesmv{n}}}  \languagessym{:}  \tau_{\languagesmv{n}} \} \cup (\Gamma_{{\mathrm{1}}} \setminus \Gamma'_{{\mathrm{1}}})$
        \State $\Gamma'_{{\mathrm{2}}} \gets x_{n+1} : \tau_{n+1}, \dots, \languagesmv{x_{\languagesmv{k}}}  \languagessym{:}  \tau_{\languagesmv{k}}$, where $\languagesmv{x}_{n+1}, \dots, \languagesmv{x_{\languagesmv{k}}}$ are fresh.
        \State $e'_{{\mathrm{2}}} \gets$ \Call{QubitAllocExp}{$e_{{\mathrm{2}}}, \Theta, \Phi, \Gamma''_{{\mathrm{1}}}, \Gamma'_{{\mathrm{2}}}$}
        \State \Return $\languageskw{let} \, \languagessym{(}  \languagesmv{x_{{\mathrm{1}}}}  \languagessym{,} \, .. \, \languagessym{,}  \languagesmv{x_{\languagesmv{k}}}  \languagessym{)}  \languagessym{=}  e'_{{\mathrm{1}}} \, \languageskw{in} \, e'_{{\mathrm{2}}}$
      \EndIf
    \EndFunction
  \end{algorithmic}
\end{algorithm}

Before proving the correctness of our algorithm. we give some auxiliary notetions and definitions.
\begin{definition}{(Called function names)}
  $\CFN(e)$ denotes the set of function names called in $e$.
\end{definition}

\begin{definition}{(Free variables)}
  $\FV(e)$ denotes the free variables in $e$:
\end{definition}

\begin{definition}{(Consistent function types)}
  A function type $ \forall   \overline{ \alpha }   .   \Phi  \Rightarrow  \tau_{{\mathrm{1}}}  \languagessym{*} \, .. \, \languagessym{*}  \tau_{\languagesmv{k}}  \rightarrow  \tau'_{{\mathrm{1}}}  \languagessym{*} \, .. \, \languagessym{*}  \tau'_{\languagesmv{k}}  $ in the target language
  is \emph{consistent} with a function type $ \tau_{{\mathrm{1}}}  \languagessym{*} \, .. \, \languagessym{*}  \tau_{\languagesmv{n}}  \xrightarrow{ N }  \tau'_{{\mathrm{1}}}  \languagessym{*} \, .. \, \languagessym{*}  \tau'_{\languagesmv{m}} $ in the source language
  if $k = N + n$ and $( \overline{ \alpha } , \Phi)$ is a connected graph.
\end{definition}

We proceed to proofs of lemmas for type-preseving property.

\begin{lemma}\label{lem:alg-insert-swaps}
  Let $\Psi$ be a sequence of pairs of qidx such that
  for all $(\alpha_{{\mathrm{1}}}, \alpha_{{\mathrm{2}}}) \in \Psi$, $ \alpha_{{\mathrm{1}}}  \sim  \alpha_{{\mathrm{2}}}  \in \Phi$.
  Suppose that $\Theta  \pipe  \Phi  \pipe   \Psi (  \Gamma  )   \vdash  e  \languagessym{:}  T$.
  Then, \textsc{InsertSwaps}($e, \Gamma, \Psi$) is well typed under the contexts $\Theta, \Phi, \Gamma$.
\end{lemma}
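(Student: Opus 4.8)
The plan is to prove this by induction on the length of the swap sequence $\Psi$, mirroring the recursion of \textsc{InsertSwaps}. Throughout I read $\Psi(\Gamma)$ as the result of relabelling the qidxs of $\Gamma$ by the permutation obtained from $\Psi$ by composing its transpositions with the head applied first, so that for $\Psi = (\alpha_1,\alpha_2),\Psi'$ one has $\Psi(\Gamma) = \Psi'\bigl((\alpha_1\ \alpha_2)(\Gamma)\bigr)$. In the base case $\Psi = \epsilon$, \textsc{InsertSwaps}$(e,\Gamma,\Psi) = e$ and $\Psi(\Gamma) = \Gamma$, so the hypothesis $\Theta \pipe \Phi \pipe \Psi(\Gamma) \vdash e : T$ is literally the conclusion.

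For the inductive step take $\Psi = (\alpha_1,\alpha_2),\Psi'$. Every pair occurring in $\Psi'$ also occurs in $\Psi$, hence still satisfies the connectivity hypothesis with respect to $\Phi$, and from the hypothesis together with the identity above we get $\Theta \pipe \Phi \pipe \Psi'\bigl((\alpha_1\ \alpha_2)(\Gamma)\bigr) \vdash e : T$. Applying the induction hypothesis with the sequence $\Psi'$ and the context $(\alpha_1\ \alpha_2)(\Gamma)$ yields $\Theta \pipe \Phi \pipe (\alpha_1\ \alpha_2)(\Gamma) \vdash e' : T$, where $e' \coloneqq \textsc{InsertSwaps}(e,(\alpha_1\ \alpha_2)(\Gamma),\Psi')$. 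It then remains to type-check the outermost instruction. Writing $\Gamma = \Gamma_0, \languagesmv{x_1} : \texttt{q}(\alpha_1), \languagesmv{x_2} : \texttt{q}(\alpha_2)$ for the variables $\languagesmv{x_1},\languagesmv{x_2}$ selected in \textsc{InsertSwaps}, I expect $(\alpha_1\ \alpha_2)(\Gamma)$ to coincide, as a set, with $\Gamma_0, \languagesmv{x_2} : \texttt{q}(\alpha_1), \languagesmv{x_1} : \texttt{q}(\alpha_2)$; given that, the side condition $\alpha_1 \sim \alpha_2 \in \Phi$ lets me instantiate (\textsc{T-Swap}) with $\languagesmv{y_1} \coloneqq \languagesmv{x_1}$ and $\languagesmv{y_2} \coloneqq \languagesmv{x_2}$ to turn the derivation of $e'$ into a derivation of $\Theta \pipe \Phi \pipe \Gamma \vdash \languageskw{let}\,(\languagesmv{x_2},\languagesmv{x_1}) = \languageskw{swap}\,(\languagesmv{x_1},\languagesmv{x_2})\,\languageskw{in}\,e' : T$, and the expression on the right is exactly $\textsc{InsertSwaps}(e,\Gamma,\Psi)$.

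The one genuinely load-bearing point — and the step I expect to be the main obstacle — is the claimed equality $(\alpha_1\ \alpha_2)(\Gamma) = \Gamma_0, \languagesmv{x_2} : \texttt{q}(\alpha_1), \languagesmv{x_1} : \texttt{q}(\alpha_2)$. It holds precisely when $\Gamma_0$ contains no occurrence of $\alpha_1$ or $\alpha_2$, i.e.\ when the qidxs annotating the variables of $\Gamma$ are pairwise distinct (each node of the coupling graph hosts at most one qubit). This is part of the well-formedness condition $ \vdash _{ \text{WF} }  \Phi  \pipe  \Gamma $ that \textsc{QubitAllocExp} maintains as an invariant and that is used throughout the algorithm's correctness proof; the same invariant also guarantees that the ``Find $\languagesmv{x_1} : \texttt{q}(\alpha_1), \languagesmv{x_2} : \texttt{q}(\alpha_2) \in \Gamma$'' step of \textsc{InsertSwaps} succeeds and returns a uniquely determined pair, so that \textsc{InsertSwaps} is well defined on the inputs at hand. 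If a self-contained statement is preferred, this distinctness can instead be taken as an explicit hypothesis, or derived from well-formedness of $\Psi(\Gamma)$ since $\Psi$ acts as a bijective relabelling. Everything else — the action of $\Psi$ on contexts, the pattern-matching of the generated $\languageskw{let}$-$\languageskw{swap}$ against (\textsc{T-Swap}), and the propagation of the connectivity side conditions down the recursion — is routine bookkeeping.
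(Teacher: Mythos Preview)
Your proof is correct and follows essentially the same route as the paper: induction on the length of $\Psi$, using the identity $\Psi(\Gamma) = \Psi'\bigl((\alpha_1\ \alpha_2)(\Gamma)\bigr)$ to invoke the induction hypothesis on $(\alpha_1\ \alpha_2)(\Gamma)$, and then applying (\textsc{T-Swap}) for the outer $\languageskw{swap}$. If anything you are more careful than the paper, which silently assumes the qidx-distinctness you flag as the ``load-bearing point''; your observation that this follows from the well-formedness invariant $\vdash_{\text{WF}} \Phi \pipe \Gamma$ maintained by \textsc{QubitAllocExp} is exactly the right justification.
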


\begin{proof}
  By induction on the length of $\Psi$.
  The case of $\Psi = \epsilon$ is trivial.
  If $\Psi = (\alpha_{{\mathrm{1}}}, \alpha_{{\mathrm{2}}}), \Psi'$,
  $e'$ is well typed under the contexts $\Theta, \Phi, (\alpha_{{\mathrm{1}}}\ \alpha_{{\mathrm{2}}})(\Gamma)$ by the induction hypothesis
  because $\Psi'((\alpha_{{\mathrm{1}}}\ \alpha_{{\mathrm{2}}})(\Gamma)) = \Psi(\Gamma)$.
  From the assumption of this lemma, $ \alpha_{{\mathrm{1}}}  \sim  \alpha_{{\mathrm{2}}}  \in \Phi$.
  Therefore, by (\textsc{T-Swap}), $\languageskw{let} \, \languagessym{(}  \languagesmv{x_{{\mathrm{2}}}}  \languagessym{,}  \languagesmv{x_{{\mathrm{1}}}}  \languagessym{)}  \languagessym{=} \, \languageskw{swap} \, \languagessym{(}  \languagesmv{x_{{\mathrm{1}}}}  \languagessym{,}  \languagesmv{x_{{\mathrm{2}}}}  \languagessym{)} \, \languageskw{in} \, e'$ is well typed under the contexts $\Theta, \Phi, \Gamma$.
\end{proof}

\begin{lemma}\label{lem:alg-qalloc-exp}
  Suppose that $\Theta  \pipe  N  \pipe  \Gamma  \vdash  e  \languagessym{:}  T$.
  Let $\Theta', \Phi, \Gamma_{{\mathrm{1}}}, \Gamma_{{\mathrm{2}}}$ satisfy all the following conditions:
  \begin{itemize}
  \item $\dom(\Gamma_{{\mathrm{1}}}) \cap \dom(\Gamma_{{\mathrm{2}}}) = \emptyset$,
  \item $ \mathit{QV}( \Gamma_{{\mathrm{1}}} )  \cap  \mathit{QV}( \Gamma_{{\mathrm{2}}} )  = \emptyset$,
  \item $| \mathit{QV}( \Gamma_{{\mathrm{2}}} ) | \geq N$,
  \item $\dom(\Gamma) = \dom(\Gamma_{{\mathrm{1}}})$.
  \item $\dom(\Theta) = \dom(\Theta') \land \forall f \in \dom(\Theta').\ \Theta'  \languagessym{(}  \languagesmv{f}  \languagessym{)}$ is consistent with $\Theta  \languagessym{(}  \languagesmv{f}  \languagessym{)}$,
  \item $( \mathit{QV}( \Gamma_{{\mathrm{1}}}  \languagessym{,}  \Gamma_{{\mathrm{2}}} ) , \Phi)$ is a connected graph, and
  \item For all $f \in \CFN(e)$, if $\Theta  \languagessym{(}  \languagesmv{f}  \languagessym{)} =  \forall   \overline{ \alpha }   .   \Phi'  \Rightarrow  \tau_{{\mathrm{1}}}  \languagessym{*} \, .. \, \languagessym{*}  \tau_{\languagesmv{n}}  \rightarrow  \tau'_{{\mathrm{1}}}  \languagessym{*} \, .. \, \languagessym{*}  \tau'_{\languagesmv{m}}  $,
    then there exists $G \subseteq ( \mathit{QV}( \Gamma_{{\mathrm{1}}}  \languagessym{,}  \Gamma_{{\mathrm{2}}} ) , \Phi)$ such that $G \simeq ( \overline{ \alpha } , \Phi')$.
  \end{itemize}
  Then \textsc{QubitAllocExp}($e, \Theta', \Phi, \Gamma_{{\mathrm{1}}}, \Gamma_{{\mathrm{2}}})$ is well typed under $\Theta', \Phi, \Gamma_{{\mathrm{1}}}  \languagessym{,}  \Gamma_{{\mathrm{2}}}$,
  and the size of its return type equals $|\dom(\Gamma_{{\mathrm{1}}})| + |\dom(\Gamma_{{\mathrm{2}}})|$.
\end{lemma}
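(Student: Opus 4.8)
The plan is to prove \cref{lem:alg-qalloc-exp} by structural induction on the source expression $e$ --- equivalently, on the (syntax-directed) derivation of $\Theta \pipe N \pipe \Gamma \vdash e : T$ --- with the statement strengthened to quantify over all target-side data $\Theta', \Phi, \Gamma_1, \Gamma_2$ meeting the seven listed conditions. The induction is well founded because every recursive call of \textsc{QubitAllocExp} is on a strict subterm of $e$; in particular the source-side renamings $[x'/x]\Delta'$ do not change the shape or size of the expression, by \cref{lem:src-substitution}. In each case I will: (i) unfold the clause of \textsc{QubitAllocExp} (from \cref{alg:appendix-qalloc-exp}) for that form of $e$; (ii) check that the context(s) passed to each recursive call again satisfy conditions~1--7; (iii) apply the induction hypothesis; and (iv) rebuild the target derivation from the result, using the typing rules of \cref{fig:tgt-full-typing}, \cref{lem:alg-insert-swaps} to absorb each \textsc{InsertSwaps} (whose swap sequence consists only of $\Phi$-edges, so its precondition holds), and \cref{lem:src-substitution} for the renamings. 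Throughout, I track that the number of tensor factors of the return type stays equal to $|\dom(\Gamma_1)| + |\dom(\Gamma_2)|$: the base case fixes this, and each clause either leaves the sum unchanged (\textbf{init}, \textbf{discard}, \textbf{cnot}, \textbf{if}, function call) or restores it via the chained size equations (the \textbf{let} clause, where I use \cref{lem:domG-eq-FV} to identify the restricted context $\Gamma'_1$ with the source context for $e_1$).

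For the leaf case \textsc{T-Return}, $e = (x_1,\dots,x_n)$: connectivity (condition~6) guarantees that \textsc{TokenSwapping} succeeds and, by its specification, returns a sequence $\Psi$ of $\Phi$-edges whose composite permutation sends the current qidxs of $\Gamma_1, \Gamma_2$ to the expected ones; hence the tuple is well typed under $\Psi(\Gamma_1, \Gamma_2)$ by \textsc{T-Return}, and \cref{lem:alg-insert-swaps} lifts this to $\Gamma_1, \Gamma_2$, with $k = |\dom(\Gamma_1)| + |\dom(\Gamma_2)|$ factors. The \textbf{init} and \textbf{discard} clauses are bookkeeping: for \textbf{init} the qubit $x'$ drawn from $\Gamma_2$ exists since $|\mathit{QV}(\Gamma_2)| \geq N \geq 1$, moving it into $\Gamma_1$ preserves conditions~1--4 and leaves $\mathit{QV}(\Gamma_1, \Gamma_2)$ --- hence conditions~6--7 --- untouched, and $[x'/x]\Delta'$ is well typed by \cref{lem:src-substitution}; \textbf{discard} is symmetric. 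The \textbf{cnot} clause additionally invokes \textsc{ShortestPath}, which succeeds by connectivity and returns a path whose consecutive qidxs are $\Phi$-edges, so the residual \textsc{cnot} application satisfies the side condition of \textsc{T-Cnot}; the \textbf{if} and \textbf{let} clauses recurse on strictly smaller expressions and are reassembled by \textsc{T-If} and \textsc{T-Let}, the one point needing attention being that restricting to $\Gamma'_1, \Gamma_2$ in the \textbf{let} clause still yields a graph meeting condition~6 and that conditions~3 and~7 survive the restriction.

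I expect the function-call clause to be the main obstacle, since it is the only place that invokes \textsc{SubgraphIsomorphism} and \textsc{TokenSwapping} together and the place where the consistency condition~5 and the global structure of the coupling graph are actually used. Three things must be established. First, \textsc{SubgraphIsomorphism}$(G, G_f)$ succeeds --- this is exactly condition~7 --- and the returned embedding $\phi$ carries edges of $\Phi'$ to edges of $\Phi$, so that $\sigma_\alpha = [\phi(\alpha_1)/\alpha_1,\dots,\phi(\alpha_k)/\alpha_k]$ satisfies $\sigma_\alpha \Phi' \subseteq \Phi$, which is the side condition of \textsc{T-Call}. Second, \textsc{TokenSwapping} succeeds by connectivity and returns a $\Phi$-edge sequence $\Psi$ realising the requested relocation of the argument qidxs onto $\phi(\alpha_1),\dots,\phi(\alpha_k)$. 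Third, and most delicate, the rebuilt contexts $\Gamma_3, \Gamma_4$ handed to the recursive call on $e'$ must again satisfy conditions~1--7: the key observation is that $\mathit{QV}(\Gamma_3, \Gamma_4) = \mathit{QV}(\Gamma_1, \Gamma_2)$ as a set, because $\Psi$ merely permutes qidxs and --- by the construction of \textsc{QubitAllocFunc}, where the callee's target type uses the \emph{same} tuple $T'$ as both domain and codomain --- the substituted return types $\sigma_\alpha \tau'_i$ are exactly the qidxs $\phi(\alpha_i)$ already present in $G \subseteq (\mathit{QV}(\Gamma_1, \Gamma_2), \Phi)$, while the fresh return variables $x_{n+1},\dots,x_k$ pick up precisely the qidxs vacated on the $\Gamma_2$ side; consequently conditions~6 and~7 transfer verbatim, and the size accounting $|\dom(\Gamma_3)| + |\dom(\Gamma_4)| = |\dom(\Gamma_1)| + |\dom(\Gamma_2)|$ holds since the call trades $k$ argument variables for $k$ return variables. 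With these in hand, the induction hypothesis gives $e'$ well typed under $\Theta', \Phi, \Gamma_3, \Gamma_4$; \textsc{T-Call} (with argument and return types supplied by the consistency of $\Theta'(f)$) then types $\mathbf{let}\ (x_1,\dots,x_k) = f(y_1,\dots,y_k)\ \mathbf{in}\ e'$ under $\Psi(\Gamma_1, \Gamma_2)$, and \cref{lem:alg-insert-swaps} lifts the whole term to $\Gamma_1, \Gamma_2$. Chaining the cases completes the induction, after which \cref{lem:alg-qalloc-exp} feeds directly into the proof of \cref{thm:alg-type-preserve}.
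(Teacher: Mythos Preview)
Your proposal is correct and follows essentially the same approach as the paper: induction on the source typing derivation, case analysis mirroring the clauses of \textsc{QubitAllocExp}, with \cref{lem:alg-insert-swaps}, \cref{lem:src-substitution}, and \cref{lem:domG-eq-FV} used at exactly the same points. If anything, you are more careful than the paper in explicitly tracking that conditions~1--7 are re-established at each recursive call (the paper's proof applies the induction hypothesis in the \textsc{T-Let} case without spelling out why the restricted context still meets conditions~6 and~7, a point you rightly flag as needing attention).
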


\begin{proof}
  By induction on the typing derivation tree of $\Theta  \pipe  N  \pipe  \Gamma  \vdash  e  \languagessym{:}  T$.
  \begin{itemize}
  \item Case of (\textsc{T-Return}):
    Immediately follows \cref{lem:alg-insert-swaps}.

  \item Case of (\textsc{T-Init}):
    \begin{gather*}
      \Theta  \pipe  N  \languagessym{-}  \languagessym{1}  \pipe  \Gamma  \languagessym{,}  \languagesmv{x}  \languagessym{:}   \texttt{q}( \alpha )   \vdash  e'  \languagessym{:}  T
    \end{gather*}
    There exists $\languagesmv{x'}  \languagessym{:}   \texttt{q}( \alpha ) $ in $\Gamma_{{\mathrm{2}}}$ because $N \geq 1$.
    By \cref{lem:src-substitution}, $\Theta  \pipe  N  \languagessym{-}  \languagessym{1}  \pipe  \Gamma  \languagessym{,}  \languagesmv{x'}  \languagessym{:}   \texttt{q}( \alpha )   \vdash  \languagessym{[}  \languagesmv{x'}  \slash  \languagesmv{x}  \languagessym{]} \, e'  \languagessym{:}  T$.
    By the induction hypothesis,
    $e' = \textsc{QubitAllocExp}(\languagessym{[}  \languagesmv{x'}  \slash  \languagesmv{x}  \languagessym{]} \, e', \Theta', \Phi, \Gamma_{{\mathrm{1}}} \uplus \{\languagesmv{x'}  \languagessym{:}   \texttt{q}( \alpha ) \}, \Gamma_{{\mathrm{2}}} \backslash \languagesmv{x'})$
    is well typed.
    Therefore, $\languageskw{init} \, \languagesmv{x'}  \languagessym{;}  e'$ is well typed under $\Theta', \Phi, \Gamma_{{\mathrm{1}}}  \languagessym{,}  \Gamma_{{\mathrm{2}}}$ by (\textsc{T-Init}).

  \item Case of (\textsc{T-Call}):
    By the induction hypothesis, $e'$ is well typed under the contexts $\Theta, \Phi, \Gamma_{{\mathrm{3}}}  \languagessym{,}  \Gamma_{{\mathrm{4}}}$.
    For all $ \alpha_{{\mathrm{1}}}  \sim  \alpha_{{\mathrm{2}}}  \in \Phi'$, $\phi(\alpha_{{\mathrm{1}}}) \sim \phi(\alpha_{{\mathrm{2}}}) \in \Phi$
    because $\phi(G_f) \simeq G_f$.
    Thus, $ \sigma _{ \alpha }  \, \Phi' \subseteq \Phi$.
    By (\textsc{T-Call}), $\languageskw{let} \, \languagessym{(}  \languagesmv{x_{{\mathrm{1}}}}  \languagessym{,} \, .. \, \languagessym{,}  \languagesmv{x_{\languagesmv{k}}}  \languagessym{)}  \languagessym{=}  \languagesmv{f}  \languagessym{(}  \languagesmv{y_{{\mathrm{1}}}}  \languagessym{,} \, .. \, \languagessym{,}  \languagesmv{y_{\languagesmv{k}}}  \languagessym{)} \, \languageskw{in} \, e'$ is well typed under the contexts $\Theta, \Phi, \Psi(\Gamma_{{\mathrm{1}}}, \Gamma_{{\mathrm{2}}})$.
    Therefore, by \cref{lem:alg-insert-swaps}, \textsc{InsertSwaps}($e, \Gamma_{{\mathrm{1}}} \cup \Gamma_{{\mathrm{2}}}, \Psi$) is well typed under the contexts $\Theta, \Phi, \Gamma_{{\mathrm{1}}}  \languagessym{,}  \Gamma_{{\mathrm{2}}}$.
    Moreover, $|\dom(\Gamma_{{\mathrm{3}}})| = |\dom(\Gamma_{{\mathrm{1}}})| - m + n$ and $|\dom(\Gamma_{{\mathrm{4}}})| = |\dom(\Gamma_{{\mathrm{2}}})| - (k - m) + (k - n)$.
    Thus, $|\dom(\Gamma_{{\mathrm{3}}})| + |\dom(\Gamma_{{\mathrm{4}}})| = |\dom(\Gamma_{{\mathrm{1}}})| + |\dom(\Gamma_{{\mathrm{2}}})|$.

  \item Case of (\textsc{T-Let}):
    \begin{gather*}
      e \equiv \languageskw{let} \, \languagessym{(}  \languagesmv{x_{{\mathrm{1}}}}  \languagessym{,} \, .. \, \languagessym{,}  \languagesmv{x_{\languagesmv{n}}}  \languagessym{)}  \languagessym{=}  e_{{\mathrm{1}}} \, \languageskw{in} \, e_{{\mathrm{2}}}
      \andalso N' = N + |\dom(\Gamma_{{\mathrm{3}}})| - n \\
      \Theta  \pipe  N  \pipe  \Gamma_{{\mathrm{3}}}  \vdash  e_{{\mathrm{1}}}  \languagessym{:}  \tau_{{\mathrm{1}}}  \languagessym{*} \, .. \, \languagessym{*}  \tau_{\languagesmv{n}}
      \andalso \Theta  \pipe  N'  \pipe  \Gamma_{{\mathrm{4}}}  \languagessym{,}  \languagesmv{x_{{\mathrm{1}}}}  \languagessym{:}  \tau_{{\mathrm{1}}}  \languagessym{,} \, .. \, \languagessym{,}  \languagesmv{x_{\languagesmv{n}}}  \languagessym{:}  \tau_{\languagesmv{n}}  \vdash  e_{{\mathrm{2}}}  \languagessym{:}  T \\
      \Gamma = \Gamma_{{\mathrm{3}}}  \languagessym{,}  \Gamma_{{\mathrm{4}}}
      \andalso \Gamma'_{{\mathrm{1}}} = \{\languagesmv{x}  \languagessym{:}  \Gamma_{{\mathrm{1}}}  \languagessym{(}  \languagesmv{x}  \languagessym{)} \pipe x \in \FV(e_{{\mathrm{1}}})\} \\
      \Gamma''_{{\mathrm{1}}} = \{\languagesmv{x_{{\mathrm{1}}}}  \languagessym{:}  \tau_{{\mathrm{1}}}  \languagessym{,} \, .. \, \languagessym{,}  \languagesmv{x_{\languagesmv{n}}}  \languagessym{:}  \tau_{\languagesmv{n}}\} \cup (\Gamma_{{\mathrm{1}}} \backslash \Gamma'_{{\mathrm{1}}}) \\
      \Gamma'_{{\mathrm{2}}} = \{x_{n+1} : \tau_{n+1}, \dots, \languagesmv{x_{\languagesmv{k}}}  \languagessym{:}  \tau_{\languagesmv{k}}\},~ \text{where}~ x_{n+1}, \dots, x_k~ \text{are fresh}
    \end{gather*}
    By \cref{lem:domG-eq-FV}, $\FV(e_{{\mathrm{1}}}) = \dom(\Gamma_{{\mathrm{3}}}) = \dom(\Gamma'_{{\mathrm{1}}}) \subseteq \dom(\Gamma_{{\mathrm{1}}})$.
    Thus, $\dom(\Gamma'_{{\mathrm{1}}}) \cap \dom(\Gamma_{{\mathrm{2}}}) = \emptyset$ because $\dom(\Gamma_{{\mathrm{1}}}) \cap \dom(\Gamma_{{\mathrm{2}}}) = \emptyset$.
    Moreover, by the induction hypothesis, $e'_{{\mathrm{1}}}$ is well typed under $\Theta, \Phi, \Gamma'_{{\mathrm{1}}}  \languagessym{,}  \Gamma_{{\mathrm{2}}}$
    and $k = |\dom(\Gamma'_{{\mathrm{1}}})| + |\dom(\Gamma_{{\mathrm{2}}})|$.
    Now, the inequation
    \begin{align*}
      |\dom(\Gamma'_{{\mathrm{2}}})| - N' &= k - N - |\dom(\Gamma_{{\mathrm{3}}})| \\
      &\geq k - |\dom(\Gamma_{{\mathrm{2}}})| - |\dom(\Gamma_{{\mathrm{3}}})| \quad\quad \text{from the assumption of this lemma} \\
      &= k - |\dom(\Gamma_{{\mathrm{2}}})| - |\dom(\Gamma'_{{\mathrm{1}}})| \quad\quad \text{from the definition of}~ \Gamma'_{{\mathrm{1}}} \\
      &= 0
    \end{align*}
    holds. Therefore, by the induction hypothesis, $e'_{{\mathrm{2}}}$ is well typed under $\Theta, \Phi, \Gamma''_{{\mathrm{1}}}  \languagessym{,}  \Gamma'_{{\mathrm{2}}}$
    and the size of its return type equals $|\dom(\Gamma''_{{\mathrm{1}}})| + |\dom(\Gamma'_{{\mathrm{2}}})| = |\dom(\Gamma_{{\mathrm{1}}})| + |\dom(\Gamma_{{\mathrm{2}}})|$.
    The equation $(\Gamma'_{{\mathrm{1}}}  \languagessym{,}  \Gamma_{{\mathrm{2}}}  \languagessym{,}  \Gamma''_{{\mathrm{1}}}  \languagessym{,}  \Gamma_{{\mathrm{2}}}) \setminus \{\languagesmv{x_{{\mathrm{1}}}}  \languagessym{,} \, .. \, \languagessym{,}  \languagesmv{x_{\languagesmv{k}}}\} = \Gamma_{{\mathrm{1}}}  \languagessym{,}  \Gamma_{{\mathrm{2}}}$ holds.
    Therefore, by (\textsc{T-Let}), $\languageskw{let} \, \languagessym{(}  \languagesmv{x_{{\mathrm{1}}}}  \languagessym{,} \, .. \, \languagessym{,}  \languagesmv{x_{\languagesmv{k}}}  \languagessym{)}  \languagessym{=}  e'_{{\mathrm{1}}} \, \languageskw{in} \, e'_{{\mathrm{2}}}$ is well typed under $\Theta, \Phi, \Gamma_{{\mathrm{1}}}  \languagessym{,}  \Gamma_{{\mathrm{2}}}$,
    and the size of its return type equals $|\dom(\Gamma_{{\mathrm{1}}})| + |\dom(\Gamma_{{\mathrm{2}}})|$.

  \item The other cases are easy or similar to above cases.
  \end{itemize}
\end{proof}

\begin{lemma}\label{lem:alg-qalloc-func}
  Suppose that $\Theta  \vdash  D$
  and a graph assignment $S$ satisfies the condition
  $\forall  \languagesmv{f}  \mapsto ( \languagesmv{x_{{\mathrm{1}}}}  \languagessym{,} \, .. \, \languagessym{,}  \languagesmv{x_{\languagesmv{n}}} )  e  \in D.\ \forall g \in \CFN(e).\ \exists G' \subset S(f).\ G' \simeq S(g)$.
  Then \textsc{QubitAllocFunc}($\Theta, D, S$) returns $\Theta'$ and $D'$,
  and $\Theta'  \vdash  D'$.
\end{lemma}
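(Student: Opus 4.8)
The plan is to prove the lemma by induction on the derivation of $\Theta \vdash D$, which amounts to induction on the number of function definitions, since rules (\textsc{T-Empty}) and (\textsc{T-FunDecl}) peel a definition off the back of $\Theta, D$ in exactly the way \textsc{QubitAllocFunc} recurses. Because the inductive step must feed into \cref{lem:alg-qalloc-exp}, I would strengthen the induction hypothesis to also record that the returned $\Theta'$ satisfies $\dom(\Theta') = \dom(\Theta)$ and that $\Theta'(f)$ is consistent with $\Theta(f)$ for every $f$; this is precisely hypothesis~(5) of \cref{lem:alg-qalloc-exp}, and it is what makes hypotheses~(6)--(7) checkable as well.

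For the base case $\Theta = \emptyset$, $D = \emptyset$, the algorithm returns $(\emptyset,\emptyset)$ and $\emptyset \vdash \emptyset$ holds by (\textsc{T-Empty}), with the strengthened conditions vacuous. For the inductive step, write $\Theta = \Theta_0, f\!:\!\theta_f$ and $D = D_0, (f \mapsto (x_1,\dots,x_n)\,e)$. Inverting (\textsc{T-FunDecl}) gives $\Theta_0 \vdash D_0$ and $\Theta_0, f\!:\!\theta_f \vdash f \mapsto (x_1,\dots,x_n)\,e$, and inverting (\textsc{T-FunDef}) gives $\theta_f = \tau_1 * \cdots * \tau_n \xrightarrow{N} T$ together with the body judgment $\Theta_0, f\!:\!\theta_f \pipe N \pipe x_1\!:\!\tau_1,\dots,x_n\!:\!\tau_n \vdash e : T$. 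The (strengthened) induction hypothesis applied to $\Theta_0 \vdash D_0$ yields $(\Theta_1,D_1) = \textsc{QubitAllocFunc}(\Theta_0,D_0,S)$ with $\Theta_1 \vdash D_1$, $\dom(\Theta_1) = \dom(\Theta_0)$, and entrywise consistency. Following the algorithm, it builds $\Gamma_1 = \{x_i\!:\!\texttt{q}(\alpha'_i)\}$ and a fresh $\Gamma_2 = \{y_i\!:\!\texttt{q}(\beta_i) \mid 1\le i\le N\}$ out of $S(f)$, sets $T' = \texttt{q}(\alpha'_1)*\cdots*\texttt{q}(\alpha'_n)*\texttt{q}(\beta_1)*\cdots*\texttt{q}(\beta_N)$, $\Theta_2 = \Theta_1, f\!:\!\forall\overline{\alpha}.\,\Phi \Rightarrow T' \to T'$, $e' = \textsc{QubitAllocExp}(e,\Theta_2,\Phi,\Gamma_1,\Gamma_2)$, and $D_2 = D_1, (f \mapsto (x_1,\dots,x_n,y_1,\dots,y_N)\,e')$, and returns $(\Theta_2,D_2)$.

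It remains to derive $\Theta_2 \vdash D_2$. By (\textsc{T-FunDecl}) and $\Theta_1 \vdash D_1$ this reduces to $\Theta_2 \vdash f \mapsto (x_1,\dots,x_n,y_1,\dots,y_N)\,e'$, and by (\textsc{T-FunDef}) to $\Theta_2 \pipe \Phi \pipe \Gamma_1,\Gamma_2 \vdash e' : T'$, which I obtain from \cref{lem:alg-qalloc-exp} applied to the body judgment with target data $(\Theta_2,\Phi,\Gamma_1,\Gamma_2)$. The first four hypotheses are immediate: $\dom(\Gamma_1)$ and $\dom(\Gamma_2)$ are disjoint and $\mathit{QV}(\Gamma_1)$ and $\mathit{QV}(\Gamma_2)$ are disjoint because the $y_i$ and $\beta_i$ are fresh and the $x_i$ are exactly the parameters; $|\mathit{QV}(\Gamma_2)| = N$; and $\dom(x_1\!:\!\tau_1,\dots,x_n\!:\!\tau_n) = \dom(\Gamma_1)$. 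Hypothesis~(5) holds for the $\Theta_1$-part by the induction hypothesis and for the new entry $f$ because $T' \to T'$ has the matching arity $n+N$ and $(\overline{\alpha},\Phi)$ is connected. Hypotheses~(6), that $(\mathit{QV}(\Gamma_1,\Gamma_2),\Phi)$ is connected, and~(7), that each $g \in \CFN(e)$ has a subgraph of the workspace isomorphic to its argument graph, follow by transporting, along the qidx-renaming identification of $(\mathit{QV}(\Gamma_1,\Gamma_2),\Phi)$ with $S(f)$, the facts that \textsc{ConstructSubgraphs} keeps every $G_i$ connected and that $\forall g \in \CFN(e).\ \exists G' \subset S(f).\ G' \simeq S(g)$ by assumption. \cref{lem:alg-qalloc-exp} then gives that $e'$ is well typed under $\Theta_2,\Phi,\Gamma_1,\Gamma_2$ with return type of size $n+N$; since \textsc{QubitAllocExp} is invoked here with expected return type $T'$ and its return-value case inserts swaps precisely to realize that type, the return type is exactly $T'$. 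Applying (\textsc{T-FunDef}) and then (\textsc{T-FunDecl}) yields $\Theta_2 \vdash D_2$, and the strengthened conditions carry over: $\dom(\Theta_2) = \dom(\Theta)$, and $\Theta_2(f) = \forall\overline{\alpha}.\,\Phi\Rightarrow T'\to T'$ is consistent with $\theta_f$ since $n+N = N+n$ and $(\overline{\alpha},\Phi)$ is connected.

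The main obstacle I expect is the graph-theoretic bookkeeping behind hypotheses~(6) and~(7) of \cref{lem:alg-qalloc-exp}: one must make precise how the assigned subgraph $S(f)$ is identified, up to renaming of qidxs, with the workspace graph $(\mathit{QV}(\Gamma_1,\Gamma_2),\Phi)$, and check that $S(g) \subseteq S(f)$ up to isomorphism for called $g$ survives that identification; connectedness of all $G_i$ must itself be extracted from the correctness of \textsc{NonArticulationPointWithMinDeg}. A secondary, more bureaucratic gap is the step from ``\textsc{QubitAllocExp} returns a well-typed expression of the correct size'' to ``its return type equals the declared $T'$'', which needs the expected-return-type parameter threaded through \textsc{QubitAllocExp} to be made explicit (and \cref{lem:alg-qalloc-exp} correspondingly stated to preserve it).
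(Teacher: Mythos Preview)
Your proposal is correct and follows essentially the same approach as the paper: induction on the derivation of $\Theta \vdash D$, with the base case trivial and the inductive step appealing to \cref{lem:alg-qalloc-exp} for the body and then closing with (\textsc{T-FunDef}) and (\textsc{T-FunDecl}). Your version is considerably more careful than the paper's terse proof---you explicitly strengthen the induction hypothesis with the consistency condition needed as hypothesis~(5) of \cref{lem:alg-qalloc-exp} and verify all seven hypotheses, whereas the paper simply invokes the lemma---and the two gaps you flag (the graph identification underlying hypotheses~(6)--(7), and the step from ``return type has the right size'' to ``return type is exactly $T'$'') are genuine omissions that the paper's proof also leaves implicit.
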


\begin{proof}
  By induction on the derivation tree of $\Theta  \vdash  D$.
  The base case, $\Theta = \bullet$ and $D = \bullet$, is trivial.
  We assume that $\Theta = \Theta'  \languagessym{,}  \languagesmv{f}  \languagessym{:}   \tau_{{\mathrm{1}}}  \languagessym{*} \, .. \, \languagessym{*}  \tau_{\languagesmv{n}}  \xrightarrow{ N }  \tau'_{{\mathrm{1}}}  \languagessym{*} \, .. \, \languagessym{*}  \tau'_{\languagesmv{m}} $
  and $D = D'  \languagessym{,}   \languagesmv{f}  \mapsto ( \languagesmv{x_{{\mathrm{1}}}}  \languagessym{,} \, .. \, \languagessym{,}  \languagesmv{x_{\languagesmv{n}}} )  e $.
  By the induction hypothesis, $\textsc{QubitAllocFunc}(\Theta', D', S)$ outputs $\Theta_{{\mathrm{1}}}, D_{{\mathrm{1}}}$
  and $\Theta_{{\mathrm{1}}}  \vdash  D_{{\mathrm{1}}}$.
  By \cref{lem:alg-qalloc-exp}, $e'$ is well typed under the contexts $\Theta_{{\mathrm{2}}}, \Phi, \Gamma_{{\mathrm{1}}}  \languagessym{,}  \Gamma_{{\mathrm{2}}}$.
  Therefore, by (\textsc{T-FunDef}) and (\textsc{T-FunDecl}), $\Theta_{{\mathrm{2}}}  \vdash  D_{{\mathrm{2}}}$.
\end{proof}

Finally, we prove the correctness of our algorithm.
\begin{proof}[Proof of \cref{thm:alg-type-preserve}]
  We can prove easily by \cref{lem:alg-qalloc-exp,lem:alg-qalloc-func}.
\end{proof}

\end{document}